\title{\bf Promise Constraint Satisfaction and Width\footnote{Work
partially supported by Ministerio de Ciencia e Innovaci\'on (MICIN)
through project PID2019-109137GB-C22 (PROOFS).}}
\author{
Albert Atserias \\
Universitat Polit\`ecnica de Catalunya 
\and 
V\'{\i}ctor Dalmau \\
Universitat Pompeu Fabra}
\newtheorem{lemma}{Lemma}
\newtheorem{corollary}{Corollary}
\newtheorem{theorem}{Theorem}
\newtheoremstyle{mydefexobs}
{6pt}
{6pt}
{}
{}
{\itshape}
{.}
{.5em}
{}
\theoremstyle{mydefexobs}
\newtheorem{example}{Example}
\newtheorem{observation}{Observation}
\newcommand{\commentout}[1]{}
\begin{document}

\maketitle

\thispagestyle{empty}

\newcommand{\bG}{\operatorname{\mathbf{G}}}
\newcommand{\bH}{\operatorname{{\bf H}}}
\newcommand{\bI}{\operatorname{{\bf I}}}
\newcommand{\bJ}{\operatorname{{\bf J}}}
\newcommand{\bA}{\operatorname{{\bf A}}}
\newcommand{\bB}{\operatorname{{\bf B}}}
\newcommand{\bF}{\operatorname{{\bf F}}}
\newcommand{\bK}{\operatorname{{\bf K}}}
\newcommand{\bS}{\operatorname{{\bf S}}}
\newcommand{\bT}{\operatorname{{\bf T}}}
\newcommand{\bv}{\operatorname{{\bf v}}}
\newcommand{\bu}{\operatorname{{\bf u}}}
\newcommand{\ba}{\operatorname{{\bf a}}}
\newcommand{\bb}{\operatorname{{\bf b}}}
\newcommand{\bc}{\operatorname{{\bf c}}}
\newcommand{\bj}{\operatorname{{\bf j}}}
\newcommand{\bs}{\operatorname{{\bf s}}}
\newcommand{\bt}{\operatorname{{\bf t}}}
\newcommand{\bx}{\operatorname{{\bf x}}}
\newcommand{\by}{\operatorname{{\bf y}}}
\newcommand{\dom}{\operatorname{dom}}
\newcommand{\img}{\operatorname{img}}

\newcommand{\invd}{\operatorname{sdr}}
\newcommand{\PCSP}{\operatorname{PCSP}}
\newcommand{\CSP}{\operatorname{CSP}}
\newcommand{\Pol}{\operatorname{Pol}}
\newcommand{\pr}{\operatorname{pr}}
\newcommand{\graph}{\script{G}}
\newcommand{\SA}{\operatorname{SA}}
\newcommand{\LP}{\operatorname{LP}}

\newcommand{\LFP}{\mathrm{LFP}}
\newcommand{\Datalog}{\mathrm{Datalog}}
\newcommand{\INF}{\mathrm{L}_{\infty\omega}}
\newcommand{\FPC}{\mathrm{FPC}}
\newcommand{\CINF}{\mathrm{C}_{\infty\omega}}

\newcommand{\sig}{\sigma}

\newcommand{\script}[1]{\mathscr{#1}}
\newcommand{\EXACTLY}[2]{#1\mathrm{\text{-}\mathrm{IN}\text{-}}#2\mathrm{\text{-}}\mathrm{SAT}}
\newcommand{\NAE}[1]{#1\mathrm{\text{-}\mathrm{NAE}\text{-}\mathrm{SAT}}}
\newcommand{\clique}{\mathbf{K}}

\begin{abstract}
We study the power of the bounded-width consistency algorithm in the
context of the fixed-template Promise Constraint Satisfaction Problem
(PCSP). Our main technical finding is that the template of every PCSP
that is solvable in bounded width satisfies a certain structural
condition implying that its algebraic closure-properties include weak
near unanimity polymorphisms of all large arities. While this
parallels the standard (non-promise) CSP theory, the method of proof
is quite different and applies even to the regime of sublinear
width. We also show that, in contrast with the CSP world, the presence
of weak near unanimity polymorphisms of all large arities does not
guarantee solvability in bounded width. The separating example is even
solvable in the second level of the Sherali-Adams (SA) hierarchy of
linear programming relaxations. This shows that, unlike for CSPs,
linear programming can be stronger than bounded width. A direct
application of these methods also show that the problem
of~$q$-coloring~$p$-colorable graphs is not solvable in bounded or
even sublinear width, for any two constants~$p$ and~$q$ such
that~$3 \leq p \leq q$. Turning to algorithms, we note that
Wigderson's algorithm for~$O(\sqrt{n})$-coloring~$3$-colorable graphs
with~$n$ vertices is implementable in width~$4$. Indeed, by
generalizing the method we see that, for any~$\epsilon > 0$ smaller
than~$1/2$, the optimal width for solving the problem
of~$O(n^\epsilon)$-coloring~$3$-colorable graphs with~$n$ vertices
lies between~$n^{1-3\epsilon}$ and~$n^{1-2\epsilon}$. The upper bound
gives a simple~$2^{\Theta(n^{1-2\epsilon}\log(n))}$-time algorithm
that, asymptotically, beats the
straightforward~$2^{\Theta(n^{1-\epsilon})}$ bound that follows from
partitioning the graph into~$O(n^\epsilon)$ many independent parts
each of size~$O(n^{1-\epsilon})$.
\end{abstract}

\newpage

\setcounter{page}{1}

\section{Introduction} \label{sec:intro}

The input to the constraint satisfaction problem (CSP) is a set of
variables, each ranging over a specified domain of values, as well as
a set of constraints, each binding a finite set of variables to take
values in a relation from a specified set of relations. The problem
asks to find an assignment of values to the variables in such a way
that all the constraints are satisfied. It was first pointed out by
Feder and Vardi \cite{FederVardi1998} that the CSP can be modelled as
the homomorphism problem for relational structures. In this view, the
input to the~CSP is a pair of relational structures, the
\emph{instance}~$\bI$ and the \emph{constraint language}~$\bS$, and we
are asked to find a homomorphism from~$\bI$ to~$\bS$; i.e., a map from
the domain of~$\bI$ to the domain of~$\bS$ in such a way that all the
relations are preserved. In the \emph{fixed-template} variant of the
problem, the constraint language is fixed and part of the definition
of the problem, and the instance is the only input. The fixed-template
CSP with template~$\bS$ is denoted by~$\CSP(\bS)$.

As confirmed in retrospect, one of the most important problems raised
by the seminal work of Feder and Vardi was that of characterizing the
class of fixed-template CSPs of \emph{bounded width}. In short, these
are the fixed-template CSPs whose instances are always correctly
solved by the so-called consistency algorithm with a fixed bound on
its \emph{width}. This class of CSPs was later studied in depth in the
work of Kolaitis and
Vardi~\cite{KolaitisVardi2000a,KolaitisVardi2000b}, where it was shown
to constitute a robust fragment of the class of polynomial-time
solvable~CSPs that admits several equivalent reformulations.

After two decades of work in the area, the question of fully
characterizing the CSPs of bounded-width was eventually answered in
the work of Barto and Kozik~\cite{BartoK14}. Besides resolving one of
the key problems in~\cite{FederVardi1998}, the Barto-Kozik Theorem
produced many important new insights for the theory of CSPs
itself. This line of work eventually led to the
proof of the celebrated Feder-Vardi Dichotomy Conjecture, due to
Zhuk~\cite{Zhuk2020} and, independently,
Bulatov~\cite{Bulatov2017}. This completed the program started by
Feder and Vardi \cite{FederVardi1998}, Jeavons, Cohen, and
Gyssens~\cite{JeavonsCG97}, and Bulatov, Jeavons and
Krokhin~\cite{BulatovJeavonsKrokhin2005}, that aimed to characterize
the class of all polynomial-time solvable fixed-template CSPs by the
algebraic closure properties of their templates. By today, it is fair
to say that almost all important questions raised by the early work of
Feder and Vardi for the fixed-template CSP seem to have been resolved.

Recently, Brakensiek and Guruswami \cite{BrakensiekG18} have put
forward a generalization of the fixed-template CSP as a natural next
step in the development of the theory.  In the \emph{Fixed Template
  Promise Constraint Satisfaction Problem} (PCSP) the template is a
fixed pair~$(\bS,\bT)$ of relational structures such that there is a
homomorphism from~$\bS$ to~$\bT$. The problem has two variants:

  \begin{quote}
    \emph{Search variant of $\PCSP(\bS,\bT)$}: The instance is a relational
    structure~$\bI$ with the promise that there is a homomorphism
    from~$\bI$ to~$\bS$
    and we are asked to find a homomorphism from~$\bI$ to~$\bT$. 
    \smallskip

    \emph{Decision variant of~$\PCSP(\bS,\bT)$}: The instance is a
    relational structure~$\bI$ and we are asked to distinguish the
    case in which there is a homomorphism from~$\bI$ to~$\bS$ from the
    case in which there is not even a homomorphism from~$\bI$
    to~$\bT$. For this variant the promise is that~$\bI$ fullfils one
    of these two conditions.
\end{quote}

To motivate the generalization, it is useful to look at some special
cases. First, it is obvious that~$\PCSP(\bS,\bS)$ is just the same
problem as~$\CSP(\bS)$, hence any fixed-template~CSP is a
fixed-template~PCSP. Second, consider the following graph-coloring
problem. We are given a graph~$\bG$ with the promise that it has an
unknown proper~$p$-coloring for a certain number of colors~$p$, and we
are asked to find a proper~$q$-coloring for a fixed put potentially
larger target number of colors~$q$. By viewing the
proper~$q$-colorings of~$\bG$ as the homomorphisms into the complete
graph~$\bK_q$ with~$q$ vertices,~$\PCSP(\bK_p,\bK_q)$ models the
problem of approximating the chromatic number of a graph.  We refer to
this as the Approximate Graph Coloring Problem with parameters~$p$
and~$q$. The special case with~$p = q$ is exactly the same
as~$\CSP(\bK_q)$, the standard Graph~$q$-Coloring Problem for
undirected graphs, which, for~$q \geq 3$, is one of the twenty-one
NP-complete problems of Karp~\cite{Karp1972}. The problem
of~4-coloring~3-colorable graphs,~$\PCSP(\bK_3,\bK_4)$, was shown
NP-hard in the early~1990's, using the theory of probabilistically
checkable proofs (PCP)~\cite{KhannaLS00}. Not until two decades later
has it been shown that the problem of~5-coloring~3-colorable graphs is
NP-hard, and, more generally, that~$\PCSP(\bK_p,\bK_{2p-1})$ is
NP-hard for any~$p \geq 3$~\cite{BBKO}. The problem of exactly
identifying the pairs~$(p,q)$ with~$p \leq q$ for
which~$\PCSP(\bK_p,\bK_q)$ is NP-hard is one of the leading open
problems in the area, with deep connections with the theories of PCPs
and hardness of approximation~\cite{LundY94}, and the notorious Unique
Games Conjecture~\cite{DinurMR09}.

The theory of Promise PCSP has been further developped (see
\cite{Barto19,BartoBB21,BBKO,BrakensiekG19,BrakensiekGS21,BrakensiekGWZ20,BrandtsWZ20,BrandtsZ21,FicakKOS19,KrokhinO19,WrochnaZ20}
for example) and applied to other problems beyond graph coloring. The
goal of this paper is to initiate a study of the power of the
consistency algorithm for the fixed-template PCSP. Following
\cite{BBKO}, in this new context, the consistency algorithm can be
thought of in the following terms.  Fix a PCSP template~$(\bS,\bT)$
and an integer~$k \geq 1$. We say that~$\PCSP(\bS,\bT)$ is
\emph{solvable in width~$k$} if, for every instance~$\bI$, it holds
that if~$\bI$ is~$k$-consistent with respect to the
left-template~$\bS$, then there is a homomorphism from~$\bI$ to the
right-template~$\bT$. Here, as in the standard consistency algorithm
for the standard CSP, an instance~$\bI$ is said to
be~\emph{$k$-consistent with respect to~$\bS$} if the algorithm that
checks if the set of all subinstances of~$\bI$ with at most~$k$
elements admits a system of compatible homomorphisms into~$\bS$ does
not find a blatant contradiction. For each fixed integer~$k$, this
algorithm runs in polynomial time. When~$\PCSP(\bS,\bT)$ is solvable
in width~$k$ we also say that the template~$(\bS,\bT)$ \emph{has}
width~$k$.

A key insight from the theory of CSPs is that the computational
complexity of a fixed-template CSP is governed by the algebraic
structure of the polymorphisms of its
template~\cite{JeavonsCG97}. This phenomenon led to the so-called
\emph{algebraic approach} to CSPs, of which the aforementioned
Barto-Kozik, Zhuk, and Bulatov Theorems are prime examples. Indeed,
the governing power of polymorphisms is so general that the phenomenon
has been observed to hold for other ways of measuring the complexity
of the problem. In particular, it applies to the analysis of bounded
width~\cite{LaroseZ06}, to the more general settings of descriptive
complexity~\cite{AtseriasBD09} and propositional proof
complexity~\cite{AtseriasO19}, and even to different variants of
the~CSP itself. For~PCSPs, a suitable definition of polymorphisms has
been put forward to show that the computational and width complexity
of a PCSP is, again, governed by the polymorphisms of its
template~\cite{BBKO}.

A natural question at this point is whether the analogue of the
Barto-Kozik Theorem holds for PCSPs. In the language of polymorphisms,
the Barto-Kozik Theorem for CSPs can be stated as the equivalence of
the following two statements:
\begin{enumerate}\itemsep=0pt
\item[(a)] $\CSP(\bS)$ is solvable in some bounded width,
\item[(b)] $\CSP(\bS)$ admits weak near unanimity (WNU) polymorphisms of all large arities.
\end{enumerate}
In definition, this last condition means that, for any large enough
integer~$m$, there exists a homomorphism~$f : \bS^m \to \bS$ from
the~$m$-power~$\bS^m$ into~$\bS$ that satisfies the~\emph{weak near
  unanimity} (WNU) identities:
\begin{equation}
f(y,x,x,\ldots,x) = f(x,y,x,\ldots,x) = \cdots
= f(x,\ldots,x,x,y), \label{eqn:wnu}
\end{equation}
for all choices of~$x$ and~$y$ in the domain of~$\bS$.  

Turning to PCSP, for a fixed template~$(\bS,\bT)$, consider the
following statements:
\begin{enumerate} \itemsep=0pt
\item[(a')] $\PCSP(\bS,\bT)$ is solvable in some bounded width,
\item[(b')] $\PCSP(\bS,\bT)$ admits WNU polymorphisms of all large arities.
\end{enumerate}
Following~\cite{BBKO}, a WNU polymorphism of~$\PCSP(\bS,\bT)$ is a
homomorphism~$f : \bS^m \to \bT$ from the~$m$-power~$\bS^m$ of the
left-template~$\bS$ into the right-template~$\bT$ that satisfies
Equation~\eqref{eqn:wnu} for all~$x$ and~$y$ in the domain of~$\bS$.

The first technically novel result of this paper is that~(b') is a
\emph{necessary} but \emph{not sufficient} condition for (a'):

\begin{center}
\textbf{Main Result}: (a')$\Longrightarrow$ (b') $\not\Longrightarrow$ (a').
\end{center}

To prove that (a')~$\Longrightarrow$ (b') we first establish a
structural result about the template that may be of independent
interest. We show that any template~$(\bS,\bT)$ whose PCSP is solvable
in some bounded width satisfies the following property: either a
proper subset of~$S\times S$ can be obtained by composing two binary
projections of the relations in~$\bS$, where~$S$ is the domain
of~$\bS$, or else the problem is trivial because~$\bT$ contains a
reflexive tuple and then \emph{any} instance admits a homomorphism
to~$\bT$.  In its proof we need to resort to a probabilistic
construction of a large instance~$\bI$ that is \emph{sparse enough} to
be~$k$-consistent with respect to~$\bS$, for any fixed~$k$, but
\emph{dense enough} to not admit a homomomorphisms into~$\bT$.

The reasoning that goes into the analysis of the probabilistic
construction of this instance is reminiscent of the methods for
proving lower bounds for Resolution in propositional proof complexity,
going back to the influential work of Chv\'atal and
Szem\'eredi~\cite{ChvatalSzemeredi1988} and posterior follow-ups
(particularly,~\cite{BeameCMM05}
and~\cite{MolloySalavatipour2007}). Indeed, as in these related works,
our analysis shows that the random instance~$\bI$ is~$k$-consistent
with respect to~$\bS$ for~$k$ as large as~$\epsilon n$, where~$n$ is
the number of elements in~$\bI$, and~$\epsilon$ is a fixed positive
constant that depends only on~$\bS$ and~$\bT$. This means that the
necessary condition (b') applies also to all PCSPs that are solvable
in \emph{sublinear width}; i.e., in width~$k = k(n) = o(n)$, where~$n$
is the number of elements in the instance. It should be pointed out
that, for standard CSPs, it was already known that the Barto-Kozik
Theorem can also be strengthened to show that a fixed-template~CSP is
solvable in bounded width if and only if it is solvable in sublinear
width (see, e.g.,~\cite{AtseriasO19}).

To prove that (b')~$\not\Longrightarrow$ (a'), we analyze the
polymorphisms of a specific \emph{Boolean} PCSP template, i.e., one
with a two-element domain~$\{0,1\}$,
namely~$(\EXACTLY{2}{4},\NAE{4})$. We first show that this template
admits WNU polymorphisms of all large arities, and then apply the
structural result of the previous paragraph to conclude that it is not
solvable in bounded width. This analysis also led us to conclude that,
for any two integers~$s$ and~$r$ such that~$0 < s < r$ and~$r > 2$,
the Boolean~$\PCSP(\EXACTLY{s}{r},\NAE{r})$ is not solvable in bounded
width, but is solvable in the second level of the Sherali-Adams
hierarchy applied to its basic linear programming relaxation. This is
in sharp constrast with the status for standard~CSPs for which, as is
known, the fixed-template CSPs that are solvable in bounded-width and
those that are solvable in some fixed-level of the Sherali-Adams
hierarchy coincide (this follows, e.g., from the Barto-Kozik Theorem
combined with the results in~\cite{AtseriasBD09}
and~\cite{AtseriasM13}).  To show that~$\PCSP(\EXACTLY{s}{r},\NAE{r})$
can be solved in the second level of the Sherali-Adams hierarchy we
build on the recent results on Boolean PCSPs from
\cite{BrakensiekG18}.

As a corollary to the aforementioned structural result we obtain a
complete classification of the Approximate Graph Coloring Problems
that are solvable in bounded width. We show that~$\PCSP(\bK_p,\bK_q)$
is not solvable in bounded width for \emph{any} two constants~$p$
and~$q$, such that~$p \leq q$, unless~$p = 1$ or~$p = 2$.

\begin{corollary} For any two integers~$p$ and~$q$ such
  that~$1 \leq p \leq q$, the following statements are equivalent:
\begin{enumerate} \itemsep=0pt
\item[(a)] $\PCSP(\bK_p,\bK_q)$ is solvable by the consistency algorithm in width~$3$,
\item[(b)] $\PCSP(\bK_p,\bK_q)$ is solvable by the consistency algorithm in bounded width,
\item[(c)] $\PCSP(\bK_p,\bK_q)$ is solvable by the consistency algorithm in sublinear width,
\item[(d)] $p = 1$ or $p = 2$.
\end{enumerate} 
\end{corollary}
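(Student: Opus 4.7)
The plan is to prove (a) $\Rightarrow$ (b) $\Rightarrow$ (c) trivially, then establish (d) $\Rightarrow$ (a) directly and deduce (c) $\Rightarrow$ (d) from the main structural theorem by contrapositive. The implications (a) $\Rightarrow$ (b) $\Rightarrow$ (c) are immediate from the definitions, since width $3$ is bounded width and bounded width is a special case of sublinear width.

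For (d) $\Rightarrow$ (a), I would treat $p=1$ and $p=2$ separately. If $p=1$, then $\bK_1$ is a single vertex with no edges, so any two-element subinstance of $\bI$ containing an edge fails to admit a homomorphism to $\bK_1$; hence $2$-consistency already forces $\bI$ to be edgeless, and such an instance maps trivially to $\bK_q$. If $p=2$, I would invoke the classical fact that $\CSP(\bK_2)$ is solvable in width $3$: any $\bI$ that is $3$-consistent with respect to $\bK_2$ actually admits a homomorphism into $\bK_2$, which composes with the canonical embedding $\bK_2\hookrightarrow\bK_q$ to give a homomorphism to $\bK_q$.

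The interesting direction is (c) $\Rightarrow$ (d), which I argue by contrapositive using the structural theorem announced as the paper's Main Result. Assume $3\leq p\leq q$. By the sublinear-width version of that theorem, if $\PCSP(\bK_p,\bK_q)$ were solvable in sublinear width then one of two things would happen: either $\bK_q$ would contain a reflexive tuple, or a proper subset of $[p]\times[p]$ would be expressible as a composition of two binary projections of the relations of $\bK_p$. Both alternatives fail for complete graphs when $p\geq 3$. The edge relation of $\bK_q$ is loopless, so contains no pair $(i,i)$; and $\bK_p$ has a unique relation, the edge relation $E=\{(i,j)\in[p]^2:i\neq j\}$, which is already binary and symmetric, so the only composition to inspect is $E\circ E$. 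For every $i,k\in[p]$, since $p\geq 3$ we may pick some $j\in[p]\setminus\{i,k\}$, giving $(i,j),(j,k)\in E$ and therefore $(i,k)\in E\circ E$; thus $E\circ E=[p]\times[p]$, which is not a proper subset, contradicting the theorem's conclusion.

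The bulk of the work is already absorbed into the structural theorem itself; this corollary is a short verification that the combinatorial conditions of the theorem fail exactly when $p\geq 3$. The key computation, $E\circ E=[p]\times[p]$, is elementary once one notes that $\bK_p$ offers only one relation and it is already binary. The only aspect that requires invoking external classical material is the $p=2$ case of (d) $\Rightarrow$ (a), which rests on width-$3$ solvability of ordinary graph $2$-coloring; I do not anticipate any obstacle there.
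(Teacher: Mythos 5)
Your proposal is correct and follows essentially the same route as the paper: trivial chains (a)$\Rightarrow$(b)$\Rightarrow$(c), the $p=1,2$ cases handled via triviality and width-$3$ solvability of $\CSP(\bK_2)$, and the contrapositive of (c)$\Rightarrow$(d) via Theorem~\ref{the:main}, using that $E\circ E=[p]^2$ for $p\geq 3$ while $\bK_q$ is irreflexive. No issues.
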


\noindent We note that this classification agrees with the one
predicted by the NP-hardness results that would follow from certain
variants of the Unique Games Conjecture \cite{DinurMR09}, but ours is
unconditional.

Turning to upper bounds for the Approximate Graph Coloring Problem, we
observe that the well-known algorithm due to Wigderson
\cite{Wigderson1983} that properly colors any~3-colorable graph
with~$n$ vertices with~$O(\sqrt{n})$ colors is implementable in
width~$4$. We generalize Wigderson's algorithm to show that, for any
fixed real~$\epsilon$ in the interval~$(0,1/2)$, the problem of
distinguishing~$3$-colorable from non-$O(n^\epsilon)$-colorable graphs
with~$n$ vertices can be solved in width~$n^{1-2\epsilon}$. This leads
to a simple algorithm that properly colors any~$3$-colorable graph
with~$O(n^\epsilon)$ colors in
time~$2^{\Theta(n^{1-2\epsilon}\log(n))}$. Asymptotically, this beats
the straightforward~$2^{\Theta(n^{1-\epsilon})}$ time-bound that
follows from partitioning the graph into~$O(n^\epsilon)$ many
independent parts each of size~$O(n^{1-\epsilon})$. As a nearly
matching lower bound, we show that the same problem cannot be solved
in width less than~$n^{1-3\epsilon}$. The problem of closing the gap
between the~$1-3\epsilon$ in the lower bound and the~$1-2\epsilon$ in
the upper bound is left as an intriguing open problem.

\section{Preliminaries} \label{sec:prelims}

For an integer~$n$, we shall use~$[n]$ to denote the
set~$\{1,2,\dots,n\}$. 

\paragraph{Tuples and relations}
Let~$A$ be a set and let~$k$ be a positive integer. A~\emph{$k$-tuple
  over~$A$} is as a sequence~$\bt = (\bt(1),\ldots,\bt(k))$,
where~$\bt(1),\ldots,\bt(k)$ are elements of~$A$; equivalently,
a~$k$-tuple over~$A$ can be seen as a map~$\bt : [k] \to A$ with
domain~$[k]$ and range in~$A$. The set of~$k$-tuples over~$A$ is
denoted by~$A^k$.  If~$\bj = (\bj(1),\ldots,\bj(m))$ is an~$m$-tuple
over~$[k]$ and~$\ba$ is a~$k$-tuple, then the
\emph{projection}~$\pr_{\bj} \ba$ of~$\ba$ on~$\bj$ is
the~$m$-tuple~$(\ba(\bj(1)),\ldots,\ba(\bj(m)))$.
If~$J = \{ j_1,\ldots,j_m \}$ is a subset of~$[k]$
with~$j_1 < \cdots < j_m$, then the projection~$\pr_J \ba$ of~$\ba$
on~$J$ is the~$m$-tuple~$(\ba(j_1),\ldots,\ba(j_m))$.

A \emph{relation}~$R$ of arity~$k$ over~$A$ is a
subset~$R \subseteq A^k$ of~$k$-tuples over~$A$.  For every
non-negative integer~$r$, every relation~$R \subseteq A^r$ of
arity~$r$ over~$A$, and every~$k$-tuple or~$k$-subset~$J$ over~$[r]$,
the \emph{projection}~$\pr_J R$ of~$R$ on~$J$ is the~$k$-ary
relation~$\{\pr_J \ba \mid \ba\in R\}$. For any two binary
relations~$R \subseteq A^2$ and~$S\subseteq A^2$ over~$A$, their
\emph{composition}~$R\circ S$ is the binary
relation~$\{(a,b) \mid \text{ there exists } c \in A \text{ such that
} (a,c)\in R \text{ and } (c,b)\in S \}$. For any two
relations~$R \subseteq A^r$ and~$S \subseteq A^s$ over~$A$ of
arities~$r$ and~$s$, their \emph{product}~$R \times S$ is the
relation~$\{ \bt \in A^{r+s} \mid \pr_{[r]} \bt \in R \text{ and }
\pr_{[r+s]\setminus[r]} \bt \in S \}$. The iterated product of more
than two relations~$R_1,\ldots,R_m$ is denoted by~$\prod_{i=1}^m R_i$.
If all~$R_i$ are the same relation~$R$, then we call it the~$m$-power
of~$R$ and denote it by~$R^m$.

Let~$f:A\rightarrow B$ be a map. We shall use~$\dom(f)$ to denote its
domain~$A$ and~$\img(f)$ to denote its image~$f(A)$. It will be
convenient to allow functions with empty domain (but note that there
is a unique function with empty domain).  If~$X\subseteq \dom(f)$ then
we shall use~$f|_{X}$ to denote the {\em restriction} of~$f$ to~$X$,
i.e., the unique map~$g$ with~$\dom(g)=X$ that agrees with~$f$
on~$X$. If~$g=f|_{X}$ for some~$X$ we shall say that~$f$ is an
\emph{extension} of~$g$. We write~$g \subseteq f$ to denote the fact
that~$f$ is an extension of~$g$. For any~$k$-tuple~$\ba\in A^k$ we
shall use~$f(\ba)$ to denote the~$k$-ary tuple obtained by
applying~$f$ to~$\ba$ component-wise,
i.e.,~$f(\ba) = (f(\ba(1)),\ldots,f(\ba(k)))$.  For two finite
sets~$A$ and~$B$ and a non-negative integer~$k$, we write~$M_k(A,B)$
to denote the set of maps from a subset of~$A$ into a subset of~$B$
with a domain of cardinality at most~$k$. For each~$f \in M_k(A,B)$,
let~$v_f = (v_{f,a,b} \mid (a,b) \in A \times B)$ denote the
\emph{characteristic vector} of~$f$, i.e.,~$v_{f,a,b} = 1$
if~$a \in \dom(f)$ and~$f(a) = b$, and~$v_{f,a,b} = 0$ otherwise, for
all~$a \in A$ and~$b \in B$.

\paragraph{Relational structures}
A \emph{signature} is a finite collection of relation symbols~$R$,
each of them with an associated non-negative integer called the~{\em
  arity} of~$R$.
If~$\sigma$ is a signature, then a~\emph{$\sig$-structure}~$\bA$
consists of a set~$A$, called the \emph{domain} of~$\bA$, and a
relation~$R^{\bA}\subseteq A^r$ for each~$R$ in~$\sig$, where~$r$ is
the arity of~$R$, called the \emph{interpretation} of~$R$ in~$\bA$. We
shall use the same capital letter to denote the universe of
a~$\sig$-structure, e.g.,~$A$ is the universe of~$\bA$. All signatures
and structures in this paper are assumed to be finite, i.e., they have
a finite number of relations over a finite domain.

Let~$\sig$ be a signature. For any two~$\sig$-structures~$\bA$
and~$\bB$ with domains~$A$ and~$B$, their {\em union}~$\bA \cup \bB$
is the~$\sig$-structure with domain~$A\cup B$ and
interpretations~$R^{\bA\cup\bB}=R^{\bA}\cup R^{\bB}$ for
all~$R \in \sigma$.  We say that~$\bA$ is a \emph{substructure}
of~$\bB$ if~$A \subseteq B$ and~$R^{\bA} \subseteq R^{\bB}$ hold for
all~$R \in \sigma$. If, furthermore,~$R^{\bA} = R^{\bB} \cap A^r$ for
all~$R \in \sigma$, where~$r$ is the arity of~$R$, then we say
that~$\bA$ is an \emph{induced substructure} of~$\bB$. We say that the
set~$A$ \emph{induces}~$\bA$ in~$\bB$ and write~$\bB{|_A}$ to denote
the substructure of~$\bB$ \emph{induced by}~$A$.  

For every
integer~$n \geq 1$ and every~$\sig$-structure~$\bA$,
the~\emph{$n$-power} of~$\bA$ is the~$\sigma$-structure~$\bA^n$ with
domain~$A^n$ and
interpretations~$R^{\bA^n} = \{ (\bt_1,\ldots,\bt_r) \in (A^n)^r \mid
(\bt_1(i),\ldots,\bt_r(i)) \in R^{\bA} \text{ for all } i \in [n] \}$
for all~$R \in \sig$, where~$r$ is the arity of~$R$.  Note
that~$R^{\bA^n}$ is \emph{not} literally the same relation as
the~$n$-power~$(R^{\bA})^n$ of the relation~$R^{\bA}$, but the two
relations are the same up to flattening the tuples as elements
of~$A^{rn}$ and permuting the coordinates.

\paragraph{Homomorphisms, CSPs and PCSPs}
Let~$\sig$ be a signature and let~$\bA$ and~$\bB$ be~$\sig$-structures
with domains~$A$ and~$B$. A map~$f:A\rightarrow B$ is a
\emph{homomorphism} from~$\bA$ to~$\bB$ if, for all~$R\in\sigma$
and~$\ba\in A^r$, where~$r$ is the arity of~$R$, it
holds~$ \ba\in R^{\bA}$ implies~$f(\ba)\in R^{\bB}$.  We shall
use~$\bA\rightarrow\bB$ to denote the statement that there exists an
homomorphism from~$\bA$ to~$\bB$.

We write~$\CSP(\bA)$ to denote the following computational problem:
\emph{given an input~$\sig$-structure~$\bI$, decide
  whether~$\bI \rightarrow \bA$ or~$\bI \not\rightarrow \bA$}.  The
structure~$\bA$ is the \emph{CSP template} of the
problem~$\CSP(\bA)$. If~$\bA \rightarrow \bB$, we
write~$\PCSP(\bA,\bB)$ to denote the following computational promise
problem: \emph{given an input~$\sig$-structure~$\bI$, distinguish the
  case~$\bI \rightarrow \bA$ from the case~$\bI \not\rightarrow \bB$,
  provided one of these cases holds (if not, any answer is valid).}
The pair of structures~$(\bA,\bB)$ is the \emph{PCSP template} of the
problem~$\PCSP(\bA,\bB)$. Since it is assumed
that~$\bA \rightarrow \bB$, note that either~$\bA$ is the empty
structure, or~$\bB$ cannot be trivial in the sense that not all the
relations of~$\bB$ can be empty.  Note also that the definition of the
problem~$\PCSP(\bA,\bB)$ assumes that the input structure~$\bI$
satisfies~$\bI\rightarrow\bA$ or~$\bI\not\rightarrow\bB$, and that
these cases are disjoint since~$\bA \rightarrow \bB$. Finally, observe
that~$\PCSP(\bA,\bA)$ is precisely the same problem as~$\CSP(\bA)$. To
simplify matters we shall take the liberty to use~$\bA$ to denote
the~$\PCSP$ template~$(\bA,\bA)$.

\paragraph{Minors and minions}

Most of the terminology that follows comes from \cite{BBKO}. Let~$m$
and~$n$ be positive integers and let~$A$ and~$B$ be finite sets.
An~$n$-ary function~$f:A^n\rightarrow B$ is called the {\em minor} of
an~$m$-ary function~$g:A^m\rightarrow B$ given by the
map~$\pi:[m]\rightarrow[n]$ if the following identity holds
\begin{equation}
f(x_1,\dots,x_n)\approx g(x_{\pi(1)},\dots,x_{\pi(m)}),
\end{equation}
i.e., if the
equality~$f(x_1,\dots,x_n)=g(x_{\pi(1)},\dots,x_{\pi(m)})$ holds for
all~$x_1,\dots,x_n\in A$. Informally, one can say that~$f$ is a minor
of~$g$ if it can be obtained from~$g$ by permuting variables,
identifying variables, and introducing dummy variables. We shall
use~$g_{\pi}$ to denote the minor of~$g$ given by the map~$\pi$.  A
\emph{minion} on~$(A,B)$ is any non-empty subset
of~$\{f:A^n\rightarrow B \mid n\geq 1\}$ that is closed under
minors. Let~$\script{M}$ and~$\script{N}$ be minions (not necessarily
on the same pair of sets). A {\em minion homomorphism}
from~$\script{M}$ to~$\script{N}$ is any
mapping~$\xi:\script{M}\rightarrow \script{N}$ satisfying the
following conditions:
\begin{enumerate} \itemsep=0pt
\item it preserves arities, i.e., for every $g$ in $\script{M}$, its image $\xi(g)$ has the same arity as $g$,
\item it preserves taking minors, i.e., for all integers~$m$ and~$n$, all
  maps~$\pi : [m]\rightarrow [n]$ and all~$m$-ary functions~$g$
  in~$\script{M}$, we
  have~$\xi(g)_\pi=\xi(g_\pi)$.
\end{enumerate}

Let~$\sigma$ be a signature, let~$\bA$ be a~$\sig$-structure, and
let~$\script{M}$ be a minion (not necessarily related to~$\bA$). The
{\em free structure} of~$\bA$ generated by~$\script{M}$ is
the~$\sig$-structure~$\bF(\bA; \script{M})$ defined as follows.
Let~$n = |A|$, regard the domain~$A=\{x_1,\dots,x_n\}$ of~$\bA$ as a
collection of variables, and define the universe
of~$\bF(\bA; \script{M})$ to be the set of~$n$-ary functions
in~$\script{M}$. Let~$R\in\sig$ be any relation symbol, let~$r$ be its
arity, and let~$\bt_1,\dots,\bt_m$ be an arbitrary ordering of the
tuples in~$R^{\bA}$. For each~$i \in [r]$,
let~$\pi_i : [m] \rightarrow [n]$ be the map defined
by~$\pi_i(j) = \bt_j(i)$, for~$j \in [m]$.
Then,~$R^{\bF(\bA; \script{M})}$ is defined to contain, for
every~$m$-ary function~$g\in{\script{M}}$, the
tuple~$(f_1,\dots,f_r)$, where~$f_i(x_1,\dots,x_n)$ is the~$n$-ary
minor of~$g$ given by the map~$\pi_i : [m] \rightarrow [n]$; i.e., the
function~$f_i$ is the unique~$n$-ary map that satisfies the identity
\begin{equation}
f_i(x_1,\ldots,x_n) \approx g(\bt_1(i),\dots,\bt_m(i))
\end{equation}
over the base domain of~$g$.  We note here that since every
minion~$\script{M}$ is closed under permuting the variables of a
function, the structure~$\bF(\bA; \script{M})$ is well defined in the
sense that it does not depend on the choice of ordering of the
elements in the domain~$A$ of~$\bA$, or on the choice for the ordering
of the tuples of its relations.

\begin{example}
  \label{ex:WNU}
Let~$m\geq 3$ be an integer and let~$\bA$ be the structure with
domain~$\{x,y\}$ with a single~$m$-ary
relation~$R^{\bA}=\{(y,x,\dots,x),(x,y,\dots,x),\dots,(x,x,\dots,y)\}$;
the set of all tuples with exactly one occurrence of~$y$.  By
construction, for every minion~$\script{M}$ and every~$m$-ary
function~$g \in \script{M}$, the relation~$R^{\bF(\bA; {\script{M}})}$
contains the tuple~$(f_1,\ldots,f_m)$ where~$f_i$ is the binary
operation defined by the identity
\begin{equation}
f_i(x,y) \approx g(x,\dots,x,y,x,\dots,x), \label{deffi}
\end{equation} 
where~$y$ appears in position~$i$ in the
tuple~$(x,\dots,x,y,x,\dots,x)$ on the right-hand side
of~\eqref{deffi}. Now, if $g$ satisfies the identities
\begin{equation}
g(y,x,x,\dots,x)\approx g(x,y,x,\dots,x)\approx \cdots \approx g(x,\dots,x,x,y), \label{eq:1}
\end{equation}
then~$f_1 = \cdots = f_m$ and~$R^{\bF(\bA; \script{M})}$ contains the
reflexive tuple~$(f,f,\dots,f)$ where~$f := f_1 = \cdots =
f_m$. Conversely, if~$R^{\bF(\bA; {\script{M}})}$ contains a reflexive
tuple, then there exists some~$r$-ary function~$g\in{\script{M}}$
satisfying~\eqref{eq:1}. Any function of arity~$m$
satisfying~\eqref{eq:1} is called an~$m$-ary \emph{weak near
  unanimity}~(WNU).
\end{example}

\begin{observation}\label{obs:homtofree}
Let $\bA$ be a $\sigma$-structure and $\script{M}$ be a minion. Then $\bA\rightarrow \bF(\bA; \script{M})$
\end{observation}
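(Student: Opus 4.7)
The plan is to write down an explicit homomorphism $h : A \to F(\bA; \script{M})$. The naive guess, sending the $k$-th element of $A$ to the $k$-th coordinate projection on $A^n$, fails because projections need not lie in $\script{M}$. Instead, I would pick any function $g \in \script{M}$ (which exists since minions are non-empty by definition), say of arity $m$, identify the domain of $\bA$ with $\{x_1,\ldots,x_n\}$ where $n = |A|$, and define $h(x_k)$ to be the $n$-ary minor of $g$ obtained by identifying every argument with the $k$-th variable; concretely, $h(x_k) := g_{\rho_k}$, where $\rho_k : [m] \to [n]$ is the constant map with value $k$, so that $h(x_k)(y_1,\ldots,y_n) = g(y_k,\ldots,y_k)$. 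This $h(x_k)$ lies in the universe of $\bF(\bA;\script{M})$ because $\script{M}$ is closed under minors.

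To check that $h$ preserves each relation $R \in \sigma$ of arity $r$, fix a tuple $\bt_{j_0} = (x_{k_1},\ldots,x_{k_r}) \in R^{\bA}$, where $j_0$ is its index in the ordering $\bt_1,\ldots,\bt_{m'}$ of $R^{\bA}$ used in the definition of $\bF(\bA;\script{M})$. Recall that, by construction, $R^{\bF(\bA;\script{M})}$ contains the tuple $(g''_{\pi_1},\ldots,g''_{\pi_r})$ for every $m'$-ary $g'' \in \script{M}$, where $\pi_i(j) = \bt_j(i)$; so it suffices to find one $g''$ for which $g''_{\pi_i} = h(x_{k_i})$ for every $i \in [r]$. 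I would take $g'' := g_\tau$, where $\tau : [m] \to [m']$ is the constant map with value $j_0$; this is in $\script{M}$ by minor closure. A short calculation then shows that $g''_{\pi_i}(y_1,\ldots,y_n) = g(y_{\pi_i(j_0)},\ldots,y_{\pi_i(j_0)}) = g(y_{\bt_{j_0}(i)},\ldots,y_{\bt_{j_0}(i)}) = g(y_{k_i},\ldots,y_{k_i})$, which is exactly $h(x_{k_i})$, as required.

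The only genuine step is guessing the right $h$: once one realizes that the image of $x_k$ should be the diagonalization $g(y_k,\ldots,y_k)$ of an arbitrary $g \in \script{M}$, and that the witness $g''$ for a given tuple $\bt_{j_0}$ should be the ``collapse to position $j_0$'' minor of the same $g$, the rest of the verification is a mechanical unpacking of what a minor is on both sides of the definition. A pleasant feature of the construction is that $h$ itself is a single map, independent of $R$ or of the tuple being checked; only the witness $g''$ needs to adapt, through the choice of $j_0$.
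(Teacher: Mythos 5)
Your proof is correct and follows essentially the same route as the paper's: you send each element $x_k$ to the diagonal minor of an arbitrary $g \in \script{M}$ and witness each tuple $\bt_{j_0} \in R^{\bA}$ by the minor of $g$ given by the constant map onto the index $j_0$, which is exactly the paper's construction (the paper merely first collapses $g$ to a unary operation before taking these minors, a purely cosmetic difference). The verification that $g''_{\pi_i} = h(x_{k_i})$ matches the paper's observation that the tuple contributed by this minor is precisely the image of $\bt_{j_0}$.
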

\begin{proof}
  Assume that the universe of~$\bA$ is~$\{x_1,\dots,x_n\}$ as
  usual. Since minions are nonempty and closed under identification of
  variables it follows that~$\script{M}$ contains a unary
  operation~$g$. Hence, it must contain also, for every~$i\in[n]$,
  the~$n$-ary operation~$f_i$ defined by the
  identity~$f_i(x_1,\dots,x_n)\approx g(x_i)$. Define the
  mapping~$\varphi:A\rightarrow F(\bA; \script{M})$
  as~$\varphi(x_i)=f_i$, for~$i\in[n]$. We shall show that~$\varphi$
  defines an homomorphism from~$\bA$ to~$\bF(\bA;
  \script{M})$. Let~$R\in\sigma$ and let~$\bt_1,\dots,\bt_m$ be the
  ordering of the tuples in~$R^{\bA}$ used in the definition
  of~$R^{\bF(\bA; \script{M})}$. It remains to be shown
  that~$\varphi(\bt_i)\in R^{\bF(\bA; \script{M})}$ for
  every~$i\in[m]$. Indeed, for every~$i\in[m]$, let~$g^m_i$ be
  the~$m$-ary function defined by the identity
  $$g^m_i(y_1,\dots,y_m)\approx g(y_i).$$ 
  Note that~$g^m_i$ is in~$\script{M}$ as it is a minor of~$g$. To
  complete the proof it suffices to observe that the tuple
  included in~$R^{\bF(\bA; \script{M})}$ due to~$g^m_i$ is
  precisely~$\varphi(\bt_i)$.
\end{proof}

\paragraph{Polymorphisms}

Let~$\sig$ be a signature, let~$(\bA,\bB)$ be a~$\PCSP$ template of
signature~$\sig$, and let~$n$ be a positive integer. An~$n$-ary {\em
  polymorphism} of~$(\bA,\bB)$ is a homomorphism from~$\bA^n$
to~$\bB$; that is, unfolding the definitions, a
mapping~$f:A^n\rightarrow B$ such that, for all~$R\in\sig$
and~$\bt_1,\dots,\bt_n\in R^{\bA}$, it holds
that~$f(\bt_1,\dots,\bt_n)\in R^{\bB}$, where
\begin{equation}
f(\bt_1,\dots,\bt_n) =
(f(\bt_1(1),\dots,\bt_n(1)),\ldots,f(\bt_1(r),\dots,\bt_n(r)))
\end{equation}
and~$r$ is the arity of~$R$. We denote the set of all polymorphisms
of~$(\bA,\bB)$ by~$\Pol(\bA,\bB)$. As usual we shall use~$\Pol(\bA)$
as a shorthand for~$\Pol(\bA,\bA)$.  It follows from the definitions
that the collection~$\Pol(\bA,\bB)$ of all polymorphisms
of~$(\bA,\bB)$ is a minion.

The following result, pointed out in \cite{BBKO}, will be useful.
\begin{observation}\cite{BBKO}
   \label{obs:freeminorhom}
 For every minion~${\script{M}}$ and every structure~$\bA$ there exists
 a minor homomorphism $\xi$ from~${\script{M}}$
 to~$\Pol(\bA,\bF(\bA; {\script{M}}))$.
 \end{observation}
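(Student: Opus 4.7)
The plan is to take $\xi$ to be the canonical ``substitution'' map that is, in a sense, implicit in the very construction of $\bF(\bA; \script{M})$ itself. Write $A = \{x_1, \ldots, x_n\}$ as in the definition of the free structure, and for each $m$-ary $g \in \script{M}$ define an $m$-ary map $\xi(g) \colon A^m \to F(\bA; \script{M})$ as follows. Given $(a_1, \ldots, a_m) \in A^m$, let $\pi \colon [m] \to [n]$ be the unique map with $a_j = x_{\pi(j)}$ for every $j \in [m]$, and set $\xi(g)(a_1, \ldots, a_m) := g_\pi$. Since minions are closed under minors, $g_\pi$ is an $n$-ary function in $\script{M}$ and hence a legitimate element of the universe $F(\bA; \script{M})$.

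The first step is to verify that $\xi(g)$ is indeed a polymorphism from $\bA$ to $\bF(\bA; \script{M})$. Fix a relation symbol $R \in \sigma$ of arity $r$, let $\bt_1, \ldots, \bt_M$ be the ordering of $R^{\bA}$ used to define $R^{\bF(\bA; \script{M})}$, and take arbitrary tuples $\bs_1, \ldots, \bs_m \in R^{\bA}$. Choose $\tau \colon [m] \to [M]$ so that $\bs_j = \bt_{\tau(j)}$ for every $j$, and set $h := g_\tau$, an $M$-ary member of $\script{M}$. I claim that the $r$-tuple of $n$-ary functions $(\xi(g)(\bs_1(i), \ldots, \bs_m(i)))_{i \in [r]}$ is precisely the tuple contributed to $R^{\bF(\bA; \script{M})}$ by $h$; unfolding both sides reduces the claim to the pointwise identity $g(\bs_1(i), \ldots, \bs_m(i)) \approx h(\bt_1(i), \ldots, \bt_M(i))$, which is immediate from $h = g_\tau$ together with $\bs_j = \bt_{\tau(j)}$.

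It then remains to check that $\xi$ is a minor homomorphism. Arity preservation holds by construction, since $\xi(g)$ has the same arity as $g$. For minor preservation, fix an $m$-ary $g \in \script{M}$ and a map $\pi \colon [m] \to [m']$; unwinding the definitions reduces the equality $\xi(g_\pi) = \xi(g)_\pi$ to the associativity of minor composition $(g_\pi)_\rho = g_{\rho \circ \pi}$, which is a direct calculation once one traces out the index maps. I expect the main difficulty in the whole argument to be not conceptual but purely notational: three arities are simultaneously in play (the arity $m$ of $g$, the number $M = |R^{\bA}|$ of tuples in the chosen relation, and the cardinality $n = |A|$), together with the index maps between them. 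Once the witness $h = g_\tau$ is correctly identified, both the polymorphism property and the minor-preservation identity fall out by a direct unfolding of definitions.
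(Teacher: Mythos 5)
Your proposal is correct and coincides with the paper's own construction: the paper defines exactly the same canonical map, $\xi(g)(x_{\pi(1)},\dots,x_{\pi(m)})=g_\pi$, and leaves the verification as "readily verified," which you carry out correctly. Your identification of the witness $h=g_\tau$ (so that $f_i = (g_\tau)_{\pi_i} = g_{\pi_i\circ\tau} = \xi(g)(\bs_1(i),\dots,\bs_m(i))$) and the reduction of minor preservation to $(g_\pi)_\rho = g_{\rho\circ\pi}$ are exactly the details the paper omits.
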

 \begin{proof}
   To simplify notation assume again
   that~$A=\{x_1,\dots,x_n\}$. Then~$\xi$ maps every
   operation~$g\in\script{M}$ with arity, say,~$m$, to the
   operation~$\xi(g)\in \Pol(\bA,\bF(\bA; {\script{M}}))$ defined
   by~$\xi(g)(x_{\pi(1)},\dots,x_{\pi(m)})=g_{\pi}$ for
   every~$\pi : [m] \to [n]$, where~$g_{\pi}$ is the~$n$-ary minor
   of~$g$ given by map~$\pi$, i.e.,~$g_{\pi}$ satisfies the identity
$$g_{\pi}(x_1,\dots,x_n)\approx g(x_{\pi(1)},\dots,x_{\pi(m)})$$
over the base domain of~$g$.  It can be readily verified that~$\xi$
thus defined defines a minor homomorphism from~${\script{M}}$
to~$\Pol(\bA,\bF(\bA; {\script{M}}))$.
\end{proof}

\section{Relaxations for (P)CSP}
 
A common heuristic method to solve an instance~$\bI$
of~$\PCSP(\bS,\bT)$ consists in solving a polynomial-time solvable
relaxation of~$\bI\rightarrow\bS$. If the relaxation turns out to be
not feasible, then we can infer for sure
that~$\bI\not\rightarrow\bS$. Although the converse is not necessarily
true, for some templates~$(\bS,\bT)$ it can be guaranteed that, if the
relaxation is feasible, then~$\bI\rightarrow\bT$. For such templates,
the heurisitic method is a valid polynomial-time algorithm
for~$\PCSP(\bS,\bT)$. In the present paper we will focus mostly on
relaxations based on local consistency.

\subsection{Local consistency algorithm} \label{sec:consistency}

Fix a signature~$\sigma$. Let~$\bI$ be an instance with domain~$I$,
let~$\bS$ be a constraint language with domain~$S$, and let~$k$ be a
positive integer. A mapping $f:X\rightarrow S$ with $X\subseteq I$ is
a {\em partial} homomorphism from $\bI$ to $\bS$ if it is an homomorphism from $\bI{|_X}$
to $\bS$. A {\em~$k$-strategy on~$\bI$ and~$\bS$}
\cite{KolaitisV08} is any nonempty collection~$\script{H}$ of partial
homomorphisms from~$\bI$ to~$\bS$ such that:
\begin{enumerate} \itemsep=0pt
\item the family $\script{H}$ it is closed under restrictions, i.e.,
  for every~$h$ in~$\script{H}$ and every~$X\subseteq\dom(h)$, the
  restriction~$h|_X$ of~$h$ to~$X$ is in~$\script{H}$,
\item the family~$\script{H}$ has the extension property up to~$k$,
  i.e., for every~$h$ in~$\script{H}$ with~$|\dom(h)|<k$ and
  every~$x\in I\setminus\dom(h)$, there exists~$f$ in~$\script{H}$
  such that~$h \subseteq f$ and~$\dom(f)=\dom(h)\cup\{x\}$.
\end{enumerate}
We write~$\bI \leq_k \bS$ to denote the statement that there exists
a~$k$-strategy on~$\bI$ and~$\bS$. It follows directly from
the definitions that, if~$\bI \to \bS$, then~$\bI \leq_k \bS$. Indeed,
if~$h$ is a homomorphism from~$\bI$ to~$\bS$, then the
collection~$\script{H} = \{h|_X \mid X\subseteq I, |X|\leq k\}$ is
a~$k$-strategy on~$\bI$ and~$\bS$.

There is an algorithm that, given~$(\bI,\bS,k)$ as input, decides
whether there exists a~$k$-strategy and does so in time polynomial
in~$(|I|+|S|)^k$. This algorithm, usually called
the~$(k-1)$-consistency algorithm, starts by placing in~$\script{H}$
all partial homomorphisms from~$\bI$ to~$\bS$ with domain size at
most~$k$ and repeatedly removes from~$\script{H}$ those~$h$ that
falsify any one of the two conditions in the definition
of~$k$-strategy. The algorithm stops when it reaches a fixed-point. If
the fixed-point obtained is non-empty, then it necessarily is
a~$k$-strategy. Otherwise, it can be safely concluded that
no~$k$-strategy exists.  Since, as seen above,~$\bI \to \bS$
implies~$\bI \leq_k \bS$, one could use the consistency algorithm as a
partial-check to decide whether~$\bI$ is homomorphic to~$\bS$.  This
is the basis of the \emph{width heuristic} for CSP which, following
\cite{BBKO}, we now extend to PCSPs.

Let~$(\bS,\bT)$ be a PCSP template of signature~$\sig$; i.e.,~$\bS$
and~$\bT$ are~$\sig$-structures such that~$\bS \to \bT$.
Let~$k = k(n)$ be an integer function. We say that~$\PCSP(\bS,\bT)$
\emph{is solvable in width~$k(n)$}, or that the template~$(\bS,\bT)$
\emph{has width~$k(n)$}, if for every~$\sig$-structure instance~$\bI$
with~$n$ elements it holds that~$\bI \leq_{k(n)} \bS$
implies~$\bI \to \bT$.  Note that this generalizes the definition of
width for CSPs since, whenever~$\PCSP(\bS,\bS) = \CSP(\bS)$ is
solvable in width~$k(n)$, the condition~$\bI \leq_{k(n)} \bS$ is
necessary \emph{and} sufficient for~$\bI \to \bS$. We note that
every~$\PCSP(\bS,\bT)$ is solvable in width at most~$k(n)$
for~$k(n) = n$. Whenever it is solvable in width~$k(n)$ for
some~$k(n) = O(1)$, we say that~$\PCSP(\bS,\bT)$ is solvable \emph{in
  bounded width}, or that~$(\bS,\bT)$ \emph{has} bounded
width. Whenever it is solvable in width~$k(n)$ for some~$k(n) = o(n)$,
we say that~$\PCSP(\bS,\bT)$ is solvable in \emph{sublinear width}, or
that~$(\bS,\bT)$ \emph{has} sublinear width.

In the particular case of CSPs, the strength of bounded width is well
understood. The breakthrough of Barto and Kozik \cite{BartoK14}, in
combination with \cite{Barto2016Collapse}, yields the following
characterization of CSP templates of bounded width:

\begin{theorem}\label{th:width} \textup{\cite{BartoK14,Barto2016Collapse}}
For every structure $\bT$ the following are equivalent:
\begin{enumerate} \itemsep=0pt
\item $\bT$ has bounded width,
\item $\bT$ has width $\max(r,3)$ where $r$ is the maximum arity in $\bT$,
\item $\Pol(\bT)$ contains an~$m$-ary WNU for every integer~$m\geq 3$.
\end{enumerate}
\end{theorem}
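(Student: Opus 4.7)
The plan is to prove three implications. The implication (2) $\Rightarrow$ (1) is immediate since, for a fixed template $\bT$, the quantity $\max(r,3)$ is a constant, so the width bound in (2) is a special case of bounded width. The substantive content of the theorem lies in (1) $\Rightarrow$ (3) and, most of all, in (3) $\Rightarrow$ (2).

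For (1) $\Rightarrow$ (3), the plan is to use the free-structure machinery from Example~\ref{ex:WNU}. Fix $m\geq 3$ and recall the structure $\bA_m$ on $\{x,y\}$ with the single $m$-ary relation consisting of all tuples having exactly one occurrence of $y$. By Example~\ref{ex:WNU} it suffices to show that $\Pol(\bT)$ contains an $m$-ary operation satisfying the WNU identities, equivalently, that the free structure $\bF(\bA_m;\Pol(\bT))$ contains a reflexive tuple in the relation corresponding to $R^{\bA_m}$. My plan is to consider a concrete instance $\bI_m$ that encodes the WNU identities for an $m$-ary candidate on $T$ (a standard indicator construction whose homomorphisms to $\bT$ are exactly the $m$-ary WNU polymorphisms) and argue that, for every fixed $k$ and all large enough $m$, the instance $\bI_m$ is $k$-consistent with $\bT$. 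The key point is that any $k$-element local view of $\bI_m$ is already witnessed by a suitable projection, because for $m\gg k$ the WNU identities involve at most $k$ of the $m$ argument positions at a time, so a projection polymorphism satisfies every constraint seen locally. Once $\bI_m \leq_k \bT$, the bounded-width assumption yields $\bI_m \to \bT$, which produces an $m$-ary WNU.

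The hard direction is (3) $\Rightarrow$ (2), which is the heart of the Barto-Kozik theorem. The plan is to argue by contrapositive: assume there is an instance $\bI$ with $\bI \leq_{\max(r,3)} \bT$ but $\bI \not\to \bT$, and take a counterexample minimal in the size of the domain and in the size of the $\max(r,3)$-strategy $\script{H}$. Starting from the WNU assumption, one invokes the cyclic-term theorem of Barto and Kozik, which upgrades WNUs of all large arities into cyclic polymorphisms of all prime arities. From there, the absorption toolkit applies: any subuniverse of a $\Pol(\bT)$-algebra is either absorbing, central, or linked to a non-trivial congruence, and in each case one can either restrict the strategy $\script{H}$ to a smaller absorbing subuniverse (preserving $\max(r,3)$-consistency, by the propagation lemma for Prague instances) or extract a homomorphism directly. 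The contradiction with minimality forces $\bI \to \bT$. The refinement to width exactly $\max(r,3)$, as opposed to some larger constant produced by the original Barto-Kozik analysis, follows by the collapse argument of~\cite{Barto2016Collapse}, which reduces any constant width to $3$ provided the arity bound $r$ is absorbed into the initial consistency level.

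The main obstacle is clearly (3) $\Rightarrow$ (2): it requires developing the absorption calculus, Prague instances, and the propagation machinery, which together form the technical core of the Barto-Kozik program and cannot be bypassed by elementary means. The direction (1) $\Rightarrow$ (3) is conceptually much cleaner, but still hinges on a careful $k$-consistency verification for the WNU-indicator instance; the cleanest presentation in the framework of this paper would proceed via Observation~\ref{obs:homtofree} applied to $\bA_m$ and $\script{M}=\Pol(\bT)$, followed by a direct combinatorial check that $\bF(\bA_m;\Pol(\bT))$ is $k$-consistent with $\bT$ for all fixed $k$ and sufficiently large $m$.
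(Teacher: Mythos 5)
You should first note that the paper does not prove Theorem~\ref{th:width} at all: it is imported as a black box from \cite{BartoK14} and \cite{Barto2016Collapse}, so there is no internal proof to compare against. Measured against the literature it cites, your plan for (2)$\Rightarrow$(1) is fine, and for (3)$\Rightarrow$(2) you are essentially re-citing the Barto--Kozik programme (cyclic terms, absorption, Prague instances) plus the collapse result of \cite{Barto2016Collapse}; deferring that is reasonable, though your summary of the case analysis (``every subuniverse is absorbing, central, or linked to a non-trivial congruence'') is not the actual trichotomy used there, so it should be presented as a citation rather than as a proof step.

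The genuine gap is in your (1)$\Rightarrow$(3). The WNU indicator instance is the quotient of $\bT^m$ by the identifications $(a,b,\dots,b)\sim(b,a,b,\dots,b)\sim\cdots\sim(b,\dots,b,a)$, and on this quotient a projection is not well defined: the members of a nontrivial class differ exactly in the coordinate one would project to, so ``a projection satisfies every constraint seen locally'' does not typecheck. The natural repair (send each class $C_{a,b}$ to its background value $b$ and each singleton class to its value at a coordinate $i$ chosen to avoid the finitely many distinguished positions met in the current $k$-element window) does not yield a $k$-strategy either, because the choice of $i$ cannot be kept coherent under the extension property: adding one element may force a new $i$, which changes the values already assigned to singletons. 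Moreover, even if consistency were established, your argument only produces WNUs of all sufficiently large arities, not of every arity $m\geq 3$ as the theorem asserts; closing that gap again needs the algebraic machinery. Note also that your final reformulation, ``check that $\bF(\bA_m;\Pol(\bT))$ is $k$-consistent with $\bT$,'' does not typecheck, since the free structure has the signature of $\bA_m$, not that of $\bT$. The known routes for (1)$\Rightarrow$(3) are either the contrapositive via simulation of affine equations over a nontrivial finite Abelian group together with linear width lower bounds (this is what the paper alludes to when discussing the sublinear-width strengthening, see \cite{AtseriasO19,LaroseZ06}), or, within this paper's own framework, Corollary~\ref{cor:wnus} specialized to $\bS=\bT$: there one treats $(\bS',\bF(\bS';\Pol(\bT)))$ as a new PCSP template, transfers bounded (indeed sublinear) width to it through the minion homomorphism of Observation~\ref{obs:freeminorhom} and Theorem~\ref{th:minorwidth}, and then invokes Theorem~\ref{the:main}, whose proof is a delicate probabilistic sparse-instance construction --- quite unlike the direct combinatorial check you propose.
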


\noindent We illustrate Theorem~\ref{th:width} with an example that will
become useful in Section~\ref{sec:approximatecoloring}.

\begin{example}\label{ex:majority}
  Consider the problem of deciding whether a graph~$\bG$
  is~$2$-colorable, i.e,~$\CSP(\clique_2)$. If~$\bG$ contains a
  cycle~$a_0,a_1,\dots,a_n=a_0$ of odd length~$n$,
  then there is no~$k$-strategy on~$\bG$ and~$\bK_2$: an easy
  inductive argument shows that every such strategy would need to
  contain for every~$j\geq 1$ a partial homomorphism~$f$
  with domain~$\{a_0,a_j\}$ satisfying~$f(a_0)=f(a_j)$
    if~$j$ is even and~$f(a_0)\neq f(a_j)$ if~$j$ is odd, which is not
    possible because~$n$ is odd and~$a_n = a_0$.
  Consequently,~$\clique_2$ has width~$3$.

  One can alternatively look at this algebraically and note that, for
  every integer~$m\geq 3$, the set~$\Pol(\clique_2)$ of polymorphisms
  of~$\clique_2$ contains the function~$\varphi:[2]^m\rightarrow[2]$
  that returns the majority of its arguments. These are WNUs and,
  therefore, by virtue of Theorem~\ref{th:width}, it follows also
  that~$\clique_2$ has width~$3$. Indeed, it was already observed
  in~\cite{FederVardi1998} that a structure~$\bT$ has bounded width
  whenever~$\Pol(\bT)$ contains a majority operation~$f$, i.e, an
  operation~$f:S^3\rightarrow S$ satisfying
  \begin{equation}
  f(y,x,x)\approx f(x,y,x)\approx f(x,x,y) \approx x \label{eqn:majid}
\end{equation}
Note that the first two identities in~\eqref{eqn:majid} alone are the
WNU identities in~\eqref{eq:1} for arity~$3$.
\end{example}

Let us now include sublinear width into the picture. It has been known
that Theorem~\ref{th:width} can be strengthened to also include
\emph{sublinear width} in the characterization. Concretely,
items~\emph{1}--\emph{3} are also equivalent to item~\emph{0} below:
\begin{enumerate}
\item[\emph{0}.] \emph{$\bT$ has sublinear width.}
\end{enumerate}
On one hand, clearly~\emph{1} implies~\emph{0}. On the other hand, it
is known that, if~\emph{3} fails, then~$\bT$ is able to
\emph{simulate} the constraint language corresponding to systems of
equations over a non-trivial finite Abelian group, for which linear
width lower bounds are known (see,
e.g.,~\cite{AtseriasO19}). Thus, \emph{0} implies
\emph{3}. As it turns out, this implication will also follow from our
main result in Section~\ref{sec:mainresult}.

As in the CSP world, whether a fixed-template PCSP is solvable in
bounded width is controlled by the polymorphisms of the
template~\cite{BBKO}.

\begin{theorem}
\label{th:minorwidth} \textup{\cite{BBKO}}
Let~$(\bS_1,\bT_1)$ and~$(\bS_2,\bT_2)$ be two PCSP templates such
that there exists a minor homomorphism from~$\Pol(\bS_1,\bT_1)$
to~$\Pol(\bS_2,\bT_2)$. Then, the following statements hold:
\begin{enumerate} \itemsep=0pt
\item If $(\bS_1,\bT_1)$ has bounded width, then so does $(\bS_2,\bT_2)$.
\item If $(\bS_1,\bT_1)$ has sublinear width, then so does $(\bS_2,\bT_2)$.
\end{enumerate}
\end{theorem}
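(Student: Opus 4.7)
The plan is to prove both items uniformly by attaching to the minor homomorphism $\xi : \Pol(\bS_1,\bT_1) \to \Pol(\bS_2,\bT_2)$ a gadget reduction from $\PCSP(\bS_2,\bT_2)$ to $\PCSP(\bS_1,\bT_1)$ in the style of BBKO, and then checking that this reduction preserves the local-consistency condition up to a constant blow-up in the width parameter. Given an input $\bI$ of $\PCSP(\bS_2,\bT_2)$ with $n$ elements, the reduction would produce an input $\bI^*$ of $\PCSP(\bS_1,\bT_1)$ with $O(n)$ elements, replacing each element of $\bI$ by a constant-size block and each constraint of $\bI$ by a constant-size set of constraints in the signature of $(\bS_1,\bT_1)$. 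Both the gadget and the correctness property ($\bI \to \bS_2$ iff $\bI^* \to \bS_1$, and $\bI \not\to \bT_2$ iff $\bI^* \not\to \bT_1$) are extracted from $\xi$ via the free-structure machinery of Section~\ref{sec:prelims}: composing Observation~\ref{obs:freeminorhom} with $\xi$ yields a minor homomorphism $\Pol(\bS_1,\bT_1) \to \Pol(\bI, \bF(\bI; \Pol(\bS_2,\bT_2)))$, and this is the minion-level data from which the gadget is read off.

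The heart of the argument is a consistency-preservation claim: if $\bI \leq_k \bS_2$, then $\bI^* \leq_{ck} \bS_1$ for a constant $c$ depending only on the gadget (hence only on the finite data $\bS_1,\bT_1,\bS_2,\bT_2$). I would prove this by taking a $k$-strategy $\script{H}$ on $(\bI,\bS_2)$ and lifting it to a $(ck)$-strategy $\script{H}^*$ on $(\bI^*,\bS_1)$, using that any partial homomorphism on $\bI^*$ with domain of size at most $ck$ touches at most $k$ of the original blocks: its restrictions correspond to restrictions of members of $\script{H}$, while its extensions are either internal to a single block (and handled by the gadget's own internal definability, coming from the free-structure construction) or cross a fresh block (and handled by the extension property of $\script{H}$). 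With the reduction and this consistency-preservation in hand, each item of the theorem follows by invoking the width hypothesis for $(\bS_1,\bT_1)$ on $\bI^*$ and transporting the resulting homomorphism back along the reduction: for bounded width, it suffices to take $k$ above the constant $k_1/c$; for sublinear width, one uses that $k(n) = o(n)$ implies $ck(n) = o(|I^*|)$, so the sublinear-width hypothesis applies on $(\bI^*,\bS_1)$.

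The main obstacle is carrying out the gadget construction cleanly from $\xi$ and verifying the consistency-preservation claim: both are conceptually routine but bookkeeping-heavy. Writing the gadget down explicitly forces one to track how the minion structure is transported through $\bF(\bI; \Pol(\bS_2,\bT_2))$, and verifying the $(ck)$-strategy property on $(\bI^*,\bS_1)$ requires chasing partial homomorphisms, restrictions, and extensions through the block replacement on both sides. An alternative route that sidesteps some of this bookkeeping would be to work directly at the minion level: phrase width $k$ as a property of the induced minor homomorphism $\Pol(\bS_1,\bT_1) \to \Pol(\bI, \bF(\bI; \Pol(\bS_2,\bT_2)))$, and package the lift of $k$-strategies as an abstract statement about free structures; this would avoid making the gadget explicit but at the cost of a more opaque presentation.
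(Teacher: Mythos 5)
Your overall route---extract a gadget reduction from the minor homomorphism in the style of \cite{BBKO} and check that it preserves local consistency up to a constant factor in the width---is the intended one; the paper offers no proof of its own (it cites \cite{BBKO} and merely remarks that the bounded-width proof extends to sublinear width), and for Item~1 your plan is sound: the gadgets have constant size, a $k$-strategy on $(\bI,\bS_2)$ lifts to a $\Theta(k)$-strategy on $(\bI^*,\bS_1)$, and since the width of $(\bS_1,\bT_1)$ is a constant the size of $\bI^*$ plays no role. (One small inaccuracy: the gadget is not read off from $\Pol(\bI,\bF(\bI;\Pol(\bS_2,\bT_2)))$, which lives in the wrong signature; it comes, instance-independently, from the pp-power/homomorphic-relaxation characterization of minor homomorphisms, i.e.\ from the free structure of $\bS_1$.)

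The genuine gap is in your treatment of Item~2, in two places. First, $|\bI^*|$ is not $O(n)$: the gadget replacing a constraint introduces fresh (existentially quantified) auxiliary elements, and the pp-definitions arising from a minor homomorphism cannot in general be taken quantifier-free, so $|\bI^*|=\Theta(n+t)$ where $t$ is the number of tuples of $\bI$, which can be $\Theta(n^r)$. Second, and independently of the size bound, the step ``$ck(n)=o(|\bI^*|)$, so the sublinear-width hypothesis applies on $(\bI^*,\bS_1)$'' compares the widths in the wrong direction: solvability of $(\bS_1,\bT_1)$ in width $k_1(m)=o(m)$ can be invoked on $\bI^*$ only if you certify $\bI^*\leq_{k_1(N)}\bS_1$ with $N=|\bI^*|$, i.e.\ consistency at level \emph{at least} $k_1(N)$; your lift only certifies level $\Theta(k(n))=o(n)$, whereas $k_1(N)$ may be as large as, say, $N/\log N$, which exceeds $n$ once $N$ is polynomially larger than $n$. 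The correct composition is to set $k_2(n):=\Theta(k_1(N(n)))$, and this is sublinear in $n$ exactly when the blow-up $n\mapsto N(n)$ is linear. So the sublinear case hinges on controlling the number of constraints---e.g.\ arguing by contraposition with hard instances that have $O(n)$ tuples (as the instances produced by Theorem~\ref{the:main} do), or exhibiting a reduction with linearly many elements---and as written your plan does not address this; it silently assumes the linear size bound and then misapplies the definition of width-$k_1$ solvability.
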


\noindent
We note here that, although not explicitly addressed in~\cite{BBKO},
the proof for the bounded width statement in
Theorem~\ref{th:minorwidth} works also for sublinear width; i.e., the
proof of Item~\emph{1} in Theorem~\ref{th:minorwidth} also proves
Item~\emph{2}.

\subsection{Linear programming and Sherali-Adams hierarchy}

In this subsection we shall introduce a more general family of
relaxations based on linear programming. Despite the fact that,
in general, these relaxations are seemingly more
powerful, in the realm of fixed-template CSPs they
turn out to be not more powerful than local consistency. However, it
will follow from our results that for fixed-template
  PCSPs they are, indeed, strictly more powerful.

Fix a signature~$\sigma$. Let~$\bI$ be an instance with
domain~$I = [n]$, and let~$\bS$ be a constraint language with
domain~$S = [p]$, where~$n$ and~$p$ are positive integers. For
each~$R \in \sig$ of arity~$r$,
let~$Z_R \subseteq \{0,1\}^{rp}$ denote the set of characteristic
vectors~$v_{\bt}$ as~$\bt$ ranges over the set of~$r$-tuples
in~$R^{\bS}$, seen as maps from~$[r]$ to~$[p]$; formally,
\begin{equation}
v_{\bt} =
(v_{\bt,1,1},\ldots,v_{\bt,1,p},\ldots,v_{\bt,r,1},\ldots,v_{\bt,r,p}),
\end{equation}
where~$v_{\bt,i,j} = 1$ if~$\bt(i) = j$ and~$v_{\bt,i,j} = 0$
if~$\bt(i) \not= j$, for all~$i \in [r]$ and~$j \in [p]$.
Let~$P_R \subseteq \mathbb{R}^{rp}$ denote the convex hull
of~$Z_R \subseteq \{0,1\}^{rp}$. The polytope~$P_R$ can be described
by a linear program of the form~$A_R x \leq b_R$, where~$x$ is a
vector of~$rp$ many variables,~$A_R$ is an integer matrix with at
most~$2^{rp}$ many rows, and~$b_R$ is an integer vector of dimension
at most~$2^{rp}$.  Let~$\mathrm{IP}(\bI,\bS)$ denote the 0-1 linear
program that has one variable~$x_{u \mapsto a}$ for each~$u \in I$
and~$a \in S$ and the following defining equalities and inequalities:
\begin{center}
\begin{tabular}{llll}
$x_{u \mapsto 1} + \cdots + x_{u \mapsto p} = 1$ & & for $u \in I$, & (I1) \\
$A_R x_{\bu} \leq b_R$ & & for $R \in \sigma$ and $\bu \in R^{\bI}$, & (I2) \\
$x_{u \mapsto a} \in \{0,1\}$ & & for $u \in I$ and $a \in S$, & (I3)
\end{tabular}
\end{center}
where, for $\bu = (u_1,\ldots,u_r)$, we define
\begin{equation}
x_{\bu} := (x_{u_1 \mapsto 1},\ldots,x_{u_1 \mapsto p},\ldots,
x_{u_r \mapsto 1},\ldots,x_{u_r \mapsto p}).
\end{equation}
The \emph{direct LP relaxation} of~$\mathrm{IP}(\bI,\bS)$, denoted
by~$\mathrm{LP}(\bI,\bS)$, is the linear program that is obtained by
replacing the 0-1 constraints~(I3) by their
relaxation~$0 \leq x_{u \mapsto a} \leq 1$.  We note
that~$\mathrm{LP}(\bI,\bS)$ is equivalent to the LP obtained by taking
the so-called {\em basic linear programming} relaxation introduced in
the optimization variants MAX-CSP of CSP (see,
e.g.,~\cite{BrakensiekGWZ20,Butti2021,
  DalmauK13,DalmauKM18,GhoshT18,KumarMTV11,Kun2012}), and turning the
objective function into a family of inequalities.  This is a weak
relaxation. In order to obtain stronger LP relaxations we apply the
method of Sherali and Adams~\cite{SheraliA90}
to~$\mathrm{IP}(\bI,\bS)$.

Fix an integer~$k \geq 1$. Recall that~$M_k(I,S)$ is used to denote
the set of maps from a subset of~$I$ of cardinality at most~$k$
into~$S$.  Let~$\SA^k(\bI,\bS)$ denote the linear program that has one
variable~$x_f$ for each~$f \in M_k(I,S)$ and the following defining
equalities and inequalities:
\begin{center}
\begin{tabular}{llll}
  $x_\emptyset = 1$ & & & (T1) \\
  $x_{f \cup \{u \mapsto 1\}} + \cdots + x_{f \cup \{u \mapsto p\}} = x_f$ & & for $f \in M_{k-1}(I,S)$ and $u \in I \setminus \dom(f)$, & (T2) \\
  $A_R x_{f,\bu} \leq b_R x_f$ & & for $f \in M_{k-1}(I,S)$, $R \in \sig$ and $\bu \in R^{\bI}$, & (T3) \\
  $0 \leq x_f \leq 1$ & & for $f \in M_k(I,S)$, & (T4)
\end{tabular}
\end{center}
where, for~$f \in M_{k-1}(I,S)$ and~$\bu = (u_1,\ldots,u_r)$, we define
\begin{align}
  & x_{f,\bu} := 
    (x_{f \cup \{u_1 \mapsto 1\}},\ldots,x_{f \cup \{u_1 \mapsto p\}},
    \ldots,x_{f \cup \{u_r \mapsto 1\}},\ldots,x_{f \cup \{u_r \mapsto p\}}).
\end{align}
It is obvious that~$\SA^1(\bI,\bS)$ is basically the same linear
program as~$\mathrm{LP}(\bI,\bS)$: in case~$k = 1$, the
set~$M_{k-1}(I,S)$ contains only the empty map~$f = \emptyset$,
and~$x_\emptyset = 1$ by~(T1) in~$\SA^1$. Furthermore,
since~$M_k(I,S) \subseteq M_{k+1}(I,S)$, it is clear that
each~$\SA^{k+1}(\bI,\bS)$ is at least as strong
as~$\SA^k(\bI,\bS)$. To understand the relaxation, think of each new
inequality of~$\SA^{k+1}(\bI,\bS)$ as obtained from multiplying an
inequality in~$\SA^k(\bI,\bS)$ by a variable~$x_{u \mapsto a}$,
simplifying the results by the
rules~$x_{u \mapsto a}^2 = x_{u \mapsto a}$
and~$x_{u \mapsto a} x_{u \mapsto b} = 0$ whenever~$a \not= b$ (these
relations are true in all solutions of~$\mathrm{IP}(\bI,\bS)$), and
finally linearizing all the quadratic terms by introducing new
variables for the products.

It is worth noting that, according to the Sherali-Adams method as
defined in~\cite{SheraliA90}, it would seem that our definition
of~$\SA^{k+1}(\bI,\bS)$ is missing the inequalities that can be
obtained from those in~$\SA^k(\bI,\bS)$ from multiplication
by~$1-x_{u \mapsto a}$. However, we note that these multiplications
are redundant: the equalities of type~(T1) and~(T2) imply
that~$1-x_{u \mapsto a} = \sum_{b \in [p] \setminus \{a\}} x_{u
  \mapsto b}$, which means that a multiplication
by~$1-x_{u \mapsto a}$ can be obtained as a positive linear
combination of multiplications by~$x_{u \mapsto b}$ for~$b \not= a$.

We write~$\bI \leq_{\SA^k} \bS$ if the linear program~$\SA^k(\bI,\bS)$
is feasible. It follows directly from the definitions that
if~$\bI \to \bS$, then~$\bI \leq_{\SA^k} \bS$.  Indeed, if~$h$ is a
homomorphism from~$\bI$ to~$\bS$, then the assignment that
sets~$x_f = 1$ if~$f \subseteq h$, and~$x_f = 0$
if~$f \not\subseteq h$, is a feasible solution to the linear
program. Furthermore, it also follows directly from the definitions
that if~$\bI \leq_{\SA^k} \bS$, then~$\bI \leq_k \bS$; indeed,
if~$(x_f \mid f \in M_k(I,S))$ is a feasible solution
for~$\SA^k(\bI,\bS)$, then the
collection~$\script{H} = \{ f \in M_k(I,S) \mid x_f \not= 0 \}$ is
a~$k$-strategy for~$\bI$ and~$\bS$; the non-emptiness condition
follows from~(T1), the closure under restrictions and the extension
property up to~$k$ follow from~(T2), and the condition that every map
in~$\script{H}$ is a partial homomorphism follows from~(T3) and the
choice of the polytope~$P_R$.

Since the feasibility problem for linear programs is solvable in
polynomial time, there is an algorithm that, given~$(\bI,\bS,k)$ as
input, decides whether~$\bI \leq_{\SA^k} \bS$ and does so in time
polynomial in~$(|I|+|S|)^k$. This algorithm is the \emph{width-$k$
  Sherali-Adams heuristic} for~CSPs which, as was the case for the
width-$k$ consistency algorithm, can also be used as a heuristic for
PCSPs. If~$(\bS,\bT)$ is a PCSP template and~$k(n)$ is an integer
function, then we say that~$\PCSP(\bS,\bT)$ \emph{is solvable in
  SA-width~$k(n)$}, or that the template~$(\bS,\bT)$ has
\emph{SA-width~$k(n)$}, if for every~$\sig$-structure instance~$\bI$
with~$n$ elements it holds that~$\bI \leq_{\SA^{k(n)}} \bS$
implies~$\bI \to \bT$.  It follows from the fact that it is always the
case that~$\bI \leq_n \bS$, where~$n$ is the number of elements
of~$\bI$, that every~$\PCSP(\bS,\bT)$ is solvable in~SA-width at
most~$k(n)$ for~$k(n) = n$.  Whenever it is solvable in
SA-width~$k(n)$ for some~$k(n) = O(1)$, we say that~$\PCSP(\bS,\bT)$
is solvable in \emph{bounded SA-width}, or that~$(\bS,\bT)$ has
bounded~SA-width.

The SA relaxations have been intensively used in optimization versions
of CSP~\cite{ChanLRS13,GeorgiouMT09,ThapperZ17,YoshidaZ14}.  Our setup
is, again, slightly different because the constraints of the instance
must be encoded in the polytope-defining inequalities, rather than in
the objective function, as is done in the
optimization variants. Still, the family of SA relaxations commonly
used for optimization can be readily adapted to our setting by turning
the objective function into a set of inequalities (see
\cite{Butti2021b}). The LP family obtained in this way, although not
equivalent to ours, has the same power up to constants. Indeed, the
notion of bounded SA-width remains unaltered.

The strength of SA-width~$k$ for~$k=1$ is well understood. Indeed, it
is shown in \cite{BBKO} (Theorem 7.9) that~$(\bS,\bT)$ has
SA-width~$1$ if and only if~$\Pol(\bS,\bT)$ has symmetric functions of
all arities, where an operation is symmetric if its output is
independent of the order of its input elements. The picture is also
clear if~$\bS=\bT$ (i.e., in the case of CSPs). In particular, it
follows from combining the results in~\cite{AtseriasBD09}
and~\cite{AtseriasM13} with those in~\cite{BartoK14} that
if~$\CSP(\bT)$ is solvable in bounded SA-width, then it is solvable in
bounded width as well. Furthermore, combining the results
in~\cite{BartoK14} and~\cite{Barto2016Collapse} it can be concluded
that, additionally,~$\CSP(\bT)$ is solvable in {\em relational
  width~2}, which, in turn, can be shown to imply that it is solvable
in SA-width~$2$.

\section{Main result and applications}\label{sec:mainresult}

In this section we state the main structural result about PCSP
templates of bounded width, derive from it the algebraic consequence
about the presence of WNUs of arities three and more, and apply it to
compare the relative power of bounded width and bounded SA-width.

\subsection{Structure of PCSP templates of bounded width}

If~$R \subseteq A^r$ is a relation of arity~$r$ over the set~$A$, then
we write~$\graph(R)$ to denote the set of all binary projections
of~$R$; i.e.,~$\graph(R) = \{\pr_{i,j} R \mid i,j\in[r],\, i\neq j\}$.
If~$\sig$ is a signature and~$\bA$ is a~$\sig$-structure, then we
define~$\graph(\bA) = \graph(\prod_{\sigma \in R} R^{\bA})$.  The main
technical result of this paper is the following.

\begin{theorem}
  \label{the:main}
  Let~$(\bS,\bT)$ be a PCSP template that has sublinear width. If for
  all~$U,V\in\graph(\bS)$ we
  have~$U\circ V=S^2$, then~$\bT$ is reflexive; i.e., there
  exists~$a \in T$ such that each relation in~$\bT$ contains the
  reflexive tuple~$(a,a,\dots,a)$. In particular, this holds
  if~$(\bS,\bT)$ has bounded width.
\end{theorem}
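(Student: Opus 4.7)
Plan. I will argue the contrapositive: assume $\bT$ has no reflexive element, so for each $a \in T$ there is a relation symbol $R_a \in \sigma$ with $(a,\ldots,a) \notin R_a^{\bT}$. For every $n$, I will construct a random instance $\bI_n$ on the universe $[n]$ such that, with high probability, (i) $\bI_n \not\to \bT$ and (ii) $\bI_n \leq_{\epsilon n} \bS$ for some constant $\epsilon > 0$ depending only on the template. This contradicts sublinear width, and the bounded-width conclusion is then just the special case. The model is standard: for each $R \in \sigma$ of arity $r$, place $\alpha n$ tuples drawn independently and uniformly at random from $[n]^r$ into $R^{\bI_n}$, for a sufficiently large constant $\alpha$.

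For step (i), fix any candidate $\varphi : [n] \to T$, let $a^* \in T$ be a most-frequent value of $\varphi$, so that $|\varphi^{-1}(a^*)| \geq n/|T|$, and use the relation $R_{a^*}$. A random tuple of $R_{a^*}^{\bI_n}$ lies entirely in $\varphi^{-1}(a^*)$ with probability at least $|T|^{-r}$, and any such tuple forces $\varphi$ to take the value $(a^*,\ldots,a^*) \notin R_{a^*}^{\bT}$ at that constraint. Hence $\Pr[\varphi \text{ is a homomorphism}] \leq (1-|T|^{-r})^{\alpha n}$, and a union bound over the $|T|^n$ candidates, with $\alpha$ chosen large enough in terms of $|T|$ and the maximum arity, shows $\bI_n \not\to \bT$ almost surely. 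This is a textbook Chv\'atal--Szemer\'edi style non-satisfiability argument.

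For step (ii), the harder direction, I will first establish a linear-sized expansion property: for a suitable $\epsilon>0$ depending on $\alpha$ and the arities, every $X \subseteq [n]$ with $|X| \leq \epsilon n$ contains an element incident to at most one constraint of $\bI_n$ lying entirely in $X$. I take $\script{H}$ to be the family of all partial homomorphisms $h : X \to S$ from $\bI_n|_X$ to $\bS$ with $|X| \leq \epsilon n$. Closure under restrictions is immediate. To verify the extension property, given $h$ on $X$ and $x \in [n] \setminus X$, the expansion of $X \cup \{x\}$ yields an ordering in which each element is almost isolated when added; a peeling argument then produces extensions of $h$ one element at a time, and at each step the hypothesis $U\circ V = S^2$ for all $U,V \in \graph(\bS)$ supplies the extending value by guaranteeing that the binary projections of the single binding constraint admit a common solution. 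Together with (i) this contradicts sublinear width, proving the theorem.

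Main obstacle. The technical difficulty is the interplay between expansion and the composition hypothesis when constraints have arity greater than two: the single binding constraint at the peeled element may connect it to several already-assigned elements, so $U \circ V = S^2$ must be iterated along the coordinates of that constraint to satisfy all induced binary-projection conditions simultaneously. Jointly tuning the density $\alpha$ for (i) and the expansion threshold $\epsilon$ for (ii) in terms of $|S|$, $|T|$ and the maximum arity is then a careful but essentially standard computation, in the spirit of the random CSP lower bounds of~\cite{ChvatalSzemeredi1988,BeameCMM05,MolloySalavatipour2007}.
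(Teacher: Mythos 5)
Your step (i) is sound and is essentially the paper's own non-homomorphism lemma (union bound over all $|T|^n$ colorings, majority color class, constant expected degree per relation). The gap is in step (ii). First, the family $\script{H}$ of \emph{all} partial homomorphisms with domain of size at most $\epsilon n$ is in general not a $k$-strategy: a map $h$ can be a partial homomorphism on $X$ vacuously (no constraint of the instance lies entirely inside $X$) and still admit no extension to $X\cup\{x\}$, because adding $x$ activates constraints whose projections onto the coordinates already lying in $X$ exclude the values that $h$ chose. Concretely, for $\bS=\bK_3$ take $x$ adjacent in the random instance to three pairwise non-adjacent vertices to which $h$ assigns three distinct colors; such configurations occur with high probability at any large constant average degree, and $\bK_3$ is within the scope of the theorem. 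So the extension property -- which is the entire difficulty -- cannot be had for the bare notion of partial homomorphism. The paper instead takes $\script{H}$ to be the small-domain maps that are \emph{consistent with} (extendible over) every substructure with at most $c$ tuples, and proves the extension property by a counting argument over many pairwise disjoint ``boundary sets'' of small substructures together with a budget-halving lemma (consistency with all substructures having at most $c/p$ tuples upgrades to consistency up to $c$ tuples); nothing is peeled at the newly added element.

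Second, the expansion property you invoke is false at the densities that step (i) forces, and the composition hypothesis is being used in the wrong place. Killing all colorings requires constant average degree of order $|T|^{r-1}\log|T|$, and the best sparsity then provable says that a substructure on $v\le\alpha n$ vertices has fewer than $\beta v$ tuples only for $\beta>1/(r-1)$; this allows vertices of degree two, and for $r=2$ it allows cycles of length up to $\epsilon n$, which exist with high probability, so ``every small set contains an element meeting at most one constraint inside it'' fails. Moreover, if a peeled element really met only one constraint, extending over it would need only surjectivity of the unary projections of $R^{\bS}$, a condition satisfied by $\bK_2$; since $\bK_2$ has width $3$ while dense random graphs are far from $2$-colorable, any argument resting only on that weaker property must break, and it breaks exactly at degree-two vertices (odd cycles). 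The hypothesis $U\circ V=S^2$ is needed precisely there: in the paper it is what allows extending a partial homomorphism over a degree-two element shared by two constraints (type-(2) boundary sets, via $\pr_{i_1,k_1}R^{\bS}\circ\pr_{k_2,i_2}R^{\bS}=S^2$), while blocks of $r-1$ degree-one elements inside a single constraint (type (1)) only use surjectivity of $\pr_{i_0}R^{\bS}$; a degree-counting argument then shows every small substructure contains linearly many disjoint boundary sets of these two types. Your outline is missing both the consistency-based strategy and this correct locus of the composition hypothesis, so step (ii) does not go through as written.
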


\noindent An example of a PCSP template that satisfies the condition of
Theorem~\ref{the:main} is~$(\bK_2,\bH)$ for any graph~$\bH$ having at
least one edge.  Indeed, the composition of the edge relation
of~$\bK_2$ with itself is the equality relation on~$[2]$, which is not
the full binary relation~$[2] \times [2]$. This is consistent with the
fact that~$\bK_2$, and hence~$(\bK_2,\bH)$, has width three (see
Example~\ref{ex:majority}). An example of a
PCSP template that does \emph{not} satisfy the condition of
Theorem~\ref{the:main} is~$(\bK_p,\bH)$ for any integer~$p \geq 3$ and
any self-loop free graph~$\bH$ such that~$\bK_p \to \bH$. Indeed, as
it is easy to check, if~$p \geq 3$ then the composition of the edge
relation of~$\bK_p$ with itself is the full binary
relation~$[p] \times [p]$, but~$\bH$ is not reflexive (since it is
self-loop free). We revisit these examples in
Section~\ref{sec:approximatecoloring}.

The proof of Theorem~\ref{the:main} will be given in
Section~\ref{sec:proof}. We devote the rest of this section to derive
some applications.

\subsection{Algebraic implications}

We first derive some algebraic implications.  Recall the definition of
WNU polymorphism from Example~\ref{ex:WNU} in
Section~\ref{sec:prelims}.  Recall also that, by
Theorem~\ref{th:width}, if~$\CSP(\bT)$ is solvable in bounded width,
then~$\Pol(\bT)$ must contain a WNU of every arity~$m\geq 3$. It
follows from Theorem~\ref{the:main} that this also holds for Promise
CSPs. 

\begin{corollary} \label{cor:wnus}
  Let~$(\bS,\bT)$ be a PCSP template of sublinear or bounded
  width. Then,~$\Pol(\bS,\bT)$ contains a WNU of arity~$m$ for
  every~$m \geq 3$.
\end{corollary}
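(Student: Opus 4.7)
The plan is to apply Theorem~\ref{the:main} to an auxiliary PCSP template built from $\script{M} := \Pol(\bS,\bT)$, and then read off the desired WNU from the conclusion using the construction in Example~\ref{ex:WNU}. Fix an integer $m \geq 3$ and let $\bA_m$ denote the structure from Example~\ref{ex:WNU} with two-element domain $\{x,y\}$ and single $m$-ary relation $R^{\bA_m}$ consisting of all tuples with exactly one occurrence of $y$. Set $\bF_m := \bF(\bA_m;\script{M})$. By Observation~\ref{obs:homtofree} we have $\bA_m \to \bF_m$, so $(\bA_m, \bF_m)$ is a legitimate PCSP template, and by Observation~\ref{obs:freeminorhom} there is a minor homomorphism from $\script{M}$ to $\Pol(\bA_m,\bF_m)$. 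Hence Theorem~\ref{th:minorwidth} transfers the sublinear (or bounded) width of $(\bS,\bT)$ to $(\bA_m,\bF_m)$.

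Next I would verify that $(\bA_m,\bF_m)$ satisfies the hypothesis of Theorem~\ref{the:main}. Because $R^{\bA_m}$ is invariant under arbitrary permutations of coordinates, every binary projection $\pr_{i,j} R^{\bA_m}$ with $i \neq j$ is the same set $U = \{(x,x),(x,y),(y,x)\}$, where the pair $(x,x)$ shows up precisely because $m \geq 3$ guarantees a third coordinate that can carry the unique $y$. Thus $\graph(\bA_m) = \{U\}$, and it remains to observe that $U \circ U = \{x,y\}^2$: the three pairs $(x,x), (x,y), (y,x)$ already lie in $U$ and so in $U \circ U$, while $(y,y)$ arises by composing $(y,x) \in U$ with $(x,y) \in U$ through the intermediate value $x$. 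Theorem~\ref{the:main} now forces $\bF_m$ to be reflexive, so $R^{\bF_m}$ contains a tuple of the form $(f,f,\dots,f)$.

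The conclusion is then immediate from the converse half of Example~\ref{ex:WNU}: the existence of such a reflexive tuple in $R^{\bF_m}$ implies that $\script{M} = \Pol(\bS,\bT)$ contains an $m$-ary function satisfying the WNU identities~\eqref{eq:1}. Since $m \geq 3$ was arbitrary, this produces an $m$-ary WNU polymorphism of $(\bS,\bT)$ for every $m \geq 3$, as desired. The argument is largely bookkeeping once Theorem~\ref{the:main} is in hand, and no step presents a genuine obstacle; the only computation that really needs attention is the verification that $U \circ U = \{x,y\}^2$, where the hypothesis $m \geq 3$ enters crucially, since for $m = 2$ the projection $U$ would lack $(x,x)$ and the hypothesis of Theorem~\ref{the:main} would fail.
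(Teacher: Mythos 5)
Your proposal is correct and follows essentially the same route as the paper: transfer width to the template $(\bA_m,\bF(\bA_m;\Pol(\bS,\bT)))$ via Observations~\ref{obs:homtofree} and~\ref{obs:freeminorhom} and Theorem~\ref{th:minorwidth}, apply Theorem~\ref{the:main} to get a reflexive tuple, and read off the WNU via Example~\ref{ex:WNU}. The only difference is that you spell out the verification that $U\circ U=\{x,y\}^2$ (where $m\geq 3$ is needed), a check the paper's proof leaves implicit.
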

\begin{proof}
  Let~$m\geq 3$, let~$\script{M}=\Pol(\bS,\bT)$, and
  let~$\bS'=(\{x,y\},R^{\bS'})$ be the structure with a~$2$-element
  domain~$\{x,y\}$ where~$R^{\bS'}$ is the~$m$-ary relation that
  contains precisely all tuples in~$\{x,y\}^m$ in which~$y$ appears
  exactly once (as in Example~\ref{ex:WNU}).  As pointed out in
  Observation~\ref{obs:homtofree}, there is a homomorphism from~$\bS'$
  to~$\bF(\bS';\script{M})$, and, hence,~$(\bS',\bF(\bS';\script{M}))$
  is a legit PCSP template.  Also, by
  Observation~\ref{obs:freeminorhom} there is a minor homomorphism
  from~$\script{M}$ to~$\Pol(\bS',\bF(\bS';\script{M}))$. Therefore,
by Theorem~\ref{th:minorwidth}, the PCSP
template~$(\bS',\bF(\bS';\script{M}))$ has sublinear width. Then, it
follows from Theorem~\ref{the:main} that~$R^{\bF(\bA';\script{M})}$
contains a reflexive tuple, which implies, by Example~\ref{ex:WNU},
that~$\script{M}$ contains an~$m$-ary WNU. 
\end{proof}

Next we argue that the converse to Corollary~\ref{cor:wnus} does not
hold. For positive integers~$s$ and~$r$, let~$\EXACTLY{s}{r}$ be the
structure with domain~$\{0,1\}$ and a single relation $R^{\EXACTLY{s}{r}}$ of arity~$r$
containing all tuples with exactly~$s$ many~$1$'s. Let~$\NAE{r}$ be
the structure with domain~$\{0,1\}$ and a single relation $R^{\NAE{r}}$ of arity~$r$
containing all tuples in~$\{0,1\}^r$ except the two reflexive
tuples~$(0,0,\ldots,0)$ and~$(1,1,\ldots,1)$.  Clearly,
if~$0 < s < r$, then there is a homomorphism from~$\EXACTLY{s}{r}$
to~$\NAE{r}$, so~$(\EXACTLY{s}{r},\NAE{r})$ is a PCSP template.

\begin{lemma}
  There is a~PCSP template~$(\bS,\bT)$ that does not have sublinear or
  bounded width such that~$\Pol(\bS,\bT)$ contains a WNU of arity~$m$
  for every~$m \geq 3$. Concretely, setting~$\bS = \EXACTLY{2}{4}$
  and~$\bT = \NAE{4}$ gives such an example.
\end{lemma}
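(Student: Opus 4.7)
The plan has two parts. First, to see that $(\EXACTLY{2}{4},\NAE{4})$ does not have sublinear (hence not bounded) width, I would apply the contrapositive of Theorem~\ref{the:main}. A direct inspection shows that every binary projection of $R^{\EXACTLY{2}{4}}$ is the full square $\{0,1\}^2$: for any two distinct coordinates $i\neq j$ and any $(a,b)\in\{0,1\}^2$, one can extend $(a,b)$ in positions $i,j$ to a $4$-tuple of weight exactly $2$ by choosing suitable values in the remaining two positions. Hence $\graph(\bS)=\{\{0,1\}^2\}$, and $U\circ V=S^2$ for all $U,V\in\graph(\bS)$. On the other hand, $\NAE{4}$ is manifestly not reflexive, as $(0,\ldots,0)$ and $(1,\ldots,1)$ are precisely the excluded tuples. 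Theorem~\ref{the:main} then gives the failure of sublinear (and hence bounded) width.

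Second, I need to exhibit WNU polymorphisms of every arity $m\geq 3$. The plan is to define $f_m:\{0,1\}^m\to\{0,1\}$ by
\[
f_m(x_1,\ldots,x_m)=
\begin{cases}
1 & \text{if } \sum_j x_j>m/2,\\
0 & \text{if } \sum_j x_j<m/2,\\
x_1 & \text{if } \sum_j x_j=m/2,
\end{cases}
\]
which, for odd $m$, coincides with the majority function, and for even $m$ uses a first-coordinate tie-breaker at the tie weight $m/2$.

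To verify that $f_m\in\Pol(\EXACTLY{2}{4},\NAE{4})$, arrange $m$ input tuples $\bt_1,\ldots,\bt_m\in R^{\EXACTLY{2}{4}}$ as the rows of an $m\times 4$ matrix $A$; each row has exactly two $1$s, so the column sums $c_1,c_2,c_3,c_4$ satisfy $c_1+c_2+c_3+c_4=2m$. If the output were all-$1$, then for each $i$ we would have $c_i>m/2$ or ($c_i=m/2$ and $A_{1,i}=1$); in particular $c_i\geq m/2$, and the equality $\sum_i c_i=2m$ then forces $c_i=m/2$ for every $i$ and $A_{1,i}=1$ for every $i$, contradicting $\sum_i A_{1,i}=2$. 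The all-$0$ case is symmetric, leading to $A_{1,i}=0$ for all $i$ and again contradicting the row-sum $2$. The WNU identities hold because for $m\geq 3$ every tuple of the form $(x,\ldots,x,y,x,\ldots,x)$ has weight in $\{0,1,m-1,m\}$, none of which equals $m/2$; hence the tie-breaker never activates on WNU-test inputs and $f_m$ agrees there with symmetric majority.

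The main conceptual step is choosing the right behavior at weight $m/2$ for even $m$. A symmetric rule (round up or round down) fails, because in $A$ all column sums can simultaneously equal $m/2$, yielding a constant output. Breaking symmetry via the first coordinate instead exploits the hard row-sum constraint $\sum_i A_{1,i}=2$, which simultaneously blocks the all-$1$ and all-$0$ outputs without disturbing the WNU identities, since the offending tuples $(x,\ldots,x,y,x,\ldots,x)$ are of weight far from $m/2$ when $m\geq 3$.
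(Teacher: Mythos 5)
Your proof is correct and follows essentially the same route as the paper: the width lower bound is obtained by checking the hypotheses of Theorem~\ref{the:main} (full binary projections of $R^{\EXACTLY{2}{4}}$, irreflexivity of $\NAE{4}$), and the WNU polymorphisms are exactly the paper's majority-with-first-coordinate-tie-break operations, verified by the same double-counting of $0$'s and $1$'s in the $m\times 4$ matrix of input tuples. Your write-up is in fact slightly more explicit than the paper's at both steps, but there is no substantive difference in approach.
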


\begin{proof}
  Let~$\bS$ and~$\bT$ be set as in the second part of the lemma.  It
  follows directly from Theorem~\ref{the:main} that~$(\bS,\bT)$ does
  not have sublinear width. However,~$\Pol(\bS,\bT)$ contains, for
  every~$m\geq 3$, the WNU~$h$ or arity~$m$, that returns the majority
  element in the input if exists and else (that is, in case of ties)
  it returns the first element. That is,~$h$ is the following
  function:
\begin{equation*}
h(x_1,\dots,x_m)=\left\{\begin{array}{ll}
a & \text{if } |\{x_i \mid x_i=a\}|>m/2 \\
x_1 & \text{otherwise} 
\end{array}\right.
\end{equation*}
To see that this is a WNU, use the assumption that~$m \geq 3$. To see
that this a polymorphism of~$(\bS,\bT)$, we argue by double
counting. Any~$m \times m$ matrix that has all its rows in~$\bT$ has
the same number of~$0$'s and~$1$'s. Therefore, either all columns also
have the same number of~$0$'s and~$1$'s, in which case~$h$ returns the
first row of the matrix, which is not reflexive, or some column has
more~$0$'s than~$1$'s and some column has more~$1$'s than~$0$'s, in
which case~$h$ returns also a non-reflexive tuple.
\end{proof}

\subsection{Separation of width from SA-width}

Another application of Theorem~\ref{the:main} is the separation of
bounded width and bounded SA-width for PCSPs. This separation is
another evidence that the family of PCSPs is a more rich family of
problems than classical CSPs from an algorithmic point of view.
This separation is obtained combining our result with the following
lemma, which in turn, builds upon \cite{BrakensiekG18}.

\begin{lemma}\label{le:ExNAE}
  Let~$\bS=\EXACTLY{s}{r}$ and~$\bT=\NAE{r}$ where~$s$ and~$r$ are
  integers such that~$0 < s < r$. Then~$(\bS,\bT)$ has SA-width~$2$.
\end{lemma}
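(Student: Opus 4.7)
The plan is to convert a feasible SA$^2$ solution $(x_f)_{f \in M_2(I,\{0,1\})}$ for $\bI \leq_{\SA^2} \EXACTLY{s}{r}$ into a homomorphism $h : \bI \to \NAE{r}$. First I would extract, for each $u \in I$, the marginal $p_u = x_{\{u \mapsto 1\}} \in [0,1]$ and, for each pair $u \neq v$, the pairwise marginals $\mu_{uv}(a,b) = x_{\{u \mapsto a, v \mapsto b\}}$. The SA$^2$ inequalities (T2)--(T3) ensure that for every constraint tuple $\bu = (u_1,\ldots,u_r) \in R^{\bI}$ and every pair of distinct positions $i,j$, the joint $\mu_{u_i u_j}$ is the $(i,j)$-marginal of some probability distribution supported on $R^{\EXACTLY{s}{r}}$; in particular $\sum_i p_{u_i} = s$.

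The central technical device would be the conditioning mechanism inherent in SA$^2$: whenever $v_0 \in I$ and $b_0 \in \{0,1\}$ satisfy $x_{\{v_0 \mapsto b_0\}} > 0$, the renormalized values $x'_f := x_{f \cup \{v_0 \mapsto b_0\}}/x_{\{v_0 \mapsto b_0\}}$ (restricted to those $f$ compatible with $v_0 \mapsto b_0$) form a feasible basic LP (SA$^1$) solution for the same instance in which $v_0$ is pinned to $b_0$. I would then argue by cases. If some $p_u \in \{0,1\}$ to begin with, then the equality $\sum_i p_{u_i} = s$ with $0 < s < r$ forces every constraint containing $u$ to involve both pinned $0$s and pinned $1$s, and extending this partial assignment arbitrarily to the remaining variables yields an NAE-homomorphism. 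Otherwise every $p_u$ lies strictly in $(0,1)$, and I would condition on some $v_0 \mapsto b_0$ to produce a pinned basic LP solution, and then invoke the rounding machinery developed for Boolean PCSPs of the form $(\EXACTLY{s}{r},\NAE{r})$ in \cite{BrakensiekG18} to complete the assignment.

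The main obstacle is the fully symmetric case where $p_u = s/r$ uniformly, because no single-variable threshold separates coordinates within a constraint. Here the SA$^2$ level is essential: since $0 < s < r$, every tuple in $R^{\EXACTLY{s}{r}}$ carries both zeros and ones, so on every constraint $\bu$ there must exist a pair of positions $(i,j)$ with $\mu_{u_i u_j}(0,1) + \mu_{u_i u_j}(1,0) > 0$. Conditioning on the outcome of such a pair breaks the uniform symmetry, after which the analysis of \cite{BrakensiekG18} applies to the resulting pinned basic LP. Executing this reduction rigorously, in particular showing that a single conditioning step is always enough to place the problem in the regime already handled by \cite{BrakensiekG18} and that the resulting pinned basic LP is always rounded to an NAE-homomorphism, is the crux of the argument.
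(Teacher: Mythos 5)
Your proposal has the right second half of the paper's argument --- the conditioning step that turns a feasible $\SA^2(\bI,\bS)$ solution into a pinned feasible $\SA^1(\bI,\bS)$ solution by setting $x_{u \mapsto a} := x'_{u \mapsto a, v \mapsto b}/x'_{v \mapsto b}$ --- but it is missing the two ingredients that actually make the appeal to \cite{BrakensiekG18} legitimate, and this is a genuine gap rather than a presentational one. First, the rounding result you want to invoke is not an unconditional fact about the template $(\EXACTLY{s}{r},\NAE{r})$; what \cite{BrakensiekG18} (Section 3.2) provides is a rounding guarantee tied to the presence of \emph{alternating threshold} polymorphisms of all odd arities, and the paper therefore begins by proving, via a double-counting argument on the column sums of a matrix of tuples from $R^{\bS}$ (using $0<s<r$), that $\Pol(\bS,\bT)$ contains these operations. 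Your proposal never identifies which polymorphisms the template has, so the ``rounding machinery'' you defer to is left without its hypotheses. Second, the precise condition under which that machinery rounds is the \emph{augmented $\SA^1$ feasibility} condition: for \emph{every} variable $v$ there must exist some $b_v$ and a feasible $\SA^1$ solution with $x_{v \mapsto b_v}=1$. The paper obtains this by performing the conditioning separately at each variable (each variable gets its own pinned LP solution). Your plan instead conditions once, on a single $v_0 \mapsto b_0$ (or on a single pair inside one constraint in the ``symmetric'' case), and you yourself flag as unproven that one conditioning step suffices; as stated it does not deliver the per-variable condition the cited theorem needs.

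In addition, your preliminary case analysis contains an incorrect step: from $p_u \in \{0,1\}$ and $\sum_i p_{u_i}=s$ it does not follow that every constraint containing $u$ has both coordinates pinned to $0$ and coordinates pinned to $1$ --- the remaining marginals can be strictly fractional --- so ``extending the partial assignment arbitrarily'' can produce a constant tuple and violate $\NAE{r}$. None of this case distinction (pinned versus fractional marginals, the uniform $p_u = s/r$ case, conditioning on a pair of positions) is needed once the argument is organized as in the paper: prove the alternating threshold polymorphisms exist, quote the augmented $\SA^1$ feasibility criterion of \cite{BrakensiekG18}, and derive that criterion from $\SA^2$ feasibility by the per-variable conditioning you already described.
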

\begin{proof}
  We start by noticing that a simple double counting argument shows
  that~$\Pol(\bS,\bT)$ contains, for every odd~$m\geq 3$, the {\em
    alternating threshold} operation~$f(x_1,\dots,x_m)$ defined to
  be~$1$ if~$\sum_{i\in[m]} (-1)^{i-1} x_i>0$ and~$0$
  otherwise. Indeed, let~$\bt_1,\dots,\bt_m$ be any sequence of tuples
  in~$R^{\bS}$ and let~$O$ and~$E$ be, respectively, the odd and even
  numbers in~$[m]$. It we use~$|\bt|_1$ to denote the total number
  of~$1$'s occurring in tuple~$\bt$, we observe that
  \begin{equation}
  \sum_{i\in O} |\bt_i|_1=s+\sum_{i\in E} |\bt_i|_1
  \end{equation}
  It follows that there exist $j_0,j_1\in[r]$
  satisfying
  \begin{equation}
  \sum_{i\in O} \bt_i(j_0)\geq \sum_{i\in E} \bt_i(j_0) \;\;\text{ and }\;\;
  \sum_{i\in O} \bt_i(j_1)\leq \sum_{i\in E} \bt_i(j_1).
  \end{equation}
   We note that
  for the existence of~$j_1$ we
  use~$s<r$. Consequently,~$f(\bt_1,\dots,\bt_m)$ contains at least
  a~$0$ and a~$1$ (in coordinates~$j_0$ and~$j_1$
  respectively). Then~$f(\bt_1,\dots,\bt_m)\in R^{\bT}$ as desired.
   
  It then follows from (\cite{BrakensiekG18}, Section 3.2) that any
  instance~$\bI$ of~$\PCSP(\bS,\bT)$ is homomorphic to~$\bT$ whenever
  the following {\em augmented~$\SA^1$ feasibility} condition holds:
  for each variable~$v\in I$ there exists a~$b_v \in \{0,1\}$ such
  that there is a feasible solution of~$\SA^1(\bI,\bS)$ that
  sets~$x_{v \mapsto b_v}$ to~$1$. We are left to show that
  if~$\SA^2(\bI,\bS)$ has a feasible solution, then the
  augmented~$\SA^1$ feasibility condition holds.

  To avoid confusion we use~$x'_f$ to refer to the variables
  in~$\SA^2(\bI,\bS)$ and~$x_f$ to refer to the variables
  in~$\SA^1(\bI,\bS)$.  Fix~$v \in I$.  It follows from the first three
  types of constraint in the definition of~$\SA^2$ that there exists
  some~$b \in \{0,1\}$ that~$x'_{v \mapsto b} > 0$;
  let~$d := x'_{v \mapsto b}$. Now construct a solution
  of~$\SA^1(\bI,\bS)$ as follows: for every~$u\in I$ and~$a \in B$,
  set~$x_{u \mapsto a}$ to~$x'_{u \mapsto a, v \mapsto
    b}/d$. By choice of~$d$ we have~$x_{v \mapsto b}=1$ as
  desired. It is easy to verify that the assignment thus defined is a
  feasible solution of~$\SA^1(\bI,\bS)$.
\end{proof}

Hence, from Theorem~\ref{the:main} and Lemma \ref{le:ExNAE} we have:

\begin{corollary} \label{cor:separation} 
  There is a PCSP template~$(\bS,\bT)$ that has
  SA-width~$2$, and hence bounded SA-width, but does not have
  sublinear or bounded width. Concretely,
  setting~$\bS = \EXACTLY{s}{r}$ and~$\bT = \NAE{r}$ where~$s$ and~$r$
  are integers such that~$0 < s < r$ and~$r > 2$ gives such an
  example.
\end{corollary}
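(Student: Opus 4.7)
The plan is to derive the corollary by combining Lemma~\ref{le:ExNAE} with the contrapositive of Theorem~\ref{the:main}. Lemma~\ref{le:ExNAE} already states that $(\EXACTLY{s}{r},\NAE{r})$ has SA-width~$2$ whenever $0 < s < r$, so the only substantive task is to show that, under the additional hypothesis $r > 2$, this template does not have sublinear width; the absence of bounded width is then automatic.

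To apply Theorem~\ref{the:main} in contrapositive form to $(\bS,\bT) := (\EXACTLY{s}{r},\NAE{r})$, I would verify two facts. First, $\bT$ is not reflexive: the single relation of $\NAE{r}$ excludes $(0,\ldots,0)$ and $(1,\ldots,1)$ by definition, so no element of $\{0,1\}$ yields a reflexive tuple. Second, $U \circ V = S^2$ for every pair $U,V \in \graph(\bS)$. Since $R^{\bS}$ is symmetric in all of its $r$ coordinates, every binary projection $\pr_{i,j} R^{\bS}$ with $i \neq j$ coincides with the same binary relation $U \subseteq \{0,1\}^2$, so $\graph(\bS) = \{U\}$ and it is enough to check $U \circ U = \{0,1\}^2$.

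The computation of $U \circ U$ amounts to a short case split on $s$. When $2 \leq s \leq r-2$, the remaining $r-2$ coordinates provide enough slack that $U$ equals $\{0,1\}^2$ outright, and there is nothing to verify. The boundary cases $s = 1$ and $s = r-1$ are slightly more delicate: when $s = 1$ the pair $(1,1)$ is missing from $U$, so one must exhibit $(1,1) \in U \circ U$, which is witnessed by composing $(1,0)$ and $(0,1)$ through the intermediate value $0$; both of these pairs lie in $U$ precisely because $r \geq 3$. The case $s = r-1$ is symmetric by complementation (the missing pair becomes $(0,0)$). This is the step where the assumption $r > 2$ enters in an essential way.

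Having verified both hypotheses, the contrapositive of Theorem~\ref{the:main} immediately yields that $(\bS,\bT)$ does not have sublinear width, and hence not bounded width either, completing the proof. I do not expect any genuine obstacle here: the bulk of the work already resides in the structural Theorem~\ref{the:main} and in Lemma~\ref{le:ExNAE}, with the only new content being the elementary projection-composition check. The subtlest point, modest as it is, is the verification in the boundary cases $s \in \{1, r-1\}$, which is precisely what forces the hypothesis $r > 2$ into the statement.
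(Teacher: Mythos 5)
Your overall architecture is exactly the paper's: the paper obtains the corollary by simply combining Lemma~\ref{le:ExNAE} with Theorem~\ref{the:main}, and your contribution is to spell out the routine hypothesis check (irreflexivity of $\NAE{r}$, all binary projections of $R^{\EXACTLY{s}{r}}$ coinciding with a single relation $U$ by symmetry, and $U \circ U = \{0,1\}^2$). That is the intended proof. There is, however, a small but real slip in your boundary case $s=1$ (and, by complementation, $s=r-1$). First, exhibiting only the missing pair $(1,1)$ inside $U \circ U$ does not by itself give $U \circ U = \{0,1\}^2$: the inclusion $U \subseteq U \circ U$ is not automatic, so the other three pairs also need witnesses. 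Second, your stated reason for where $r>2$ enters is misplaced: the pairs $(1,0)$ and $(0,1)$ belong to $U$ for \emph{every} $s$ with $0<s<r$, including $r=2$, $s=1$. In that excluded case one has $U=\{(0,1),(1,0)\}$, the pair $(1,1)$ does lie in $U\circ U$ via the very witness you give (intermediate value $0$), and yet $U \circ U = \{(0,0),(1,1)\} \neq \{0,1\}^2$; this is consistent with the paper's remark right after the corollary that $(\EXACTLY{1}{2},\NAE{2})$ has width~$3$. So an argument that only uses what you explicitly verify would go through at $r=2$ as well, which cannot be right.

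The correct localization of the hypothesis $r>2$ in the case $s=1$ is that it puts $(0,0)$ into $U$ (a tuple with a single $1$ can avoid two prescribed coordinates exactly when $s \le r-2$, i.e.\ $r \ge 3$), and it is this pair that supplies the missing witnesses: $(0,0)\in U\circ U$ via intermediate value $0$ from $(0,0),(0,0)\in U$; $(0,1)\in U\circ U$ from $(0,0),(0,1)\in U$; $(1,0)\in U\circ U$ from $(1,0),(0,0)\in U$; and $(1,1)\in U\circ U$ from $(1,0),(0,1)\in U$. Symmetrically, for $s=r-1$ the hypothesis $r>2$ gives $(1,1)\in U$ (since $s\ge 2$), and the analogous four witnesses go through. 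With this corrected case analysis your proof is complete and coincides with the paper's.
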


\noindent It should be pointed out that the condition that~$r > 2$ in
Corollary~\ref{cor:separation} is necessary. Indeed, the
template~$\EXACTLY{1}{2}$ is isomorphic to~$\clique_2$, and we already
know that~$\clique_2$, and hence~$(\EXACTLY{1}{2},\NAE{2})$, has width
three (see Example~\ref{ex:majority}). It is also easy to see that,
consistently with this,~$(\EXACTLY{1}{2},\NAE{2})$ satisfies the
condition of Theorem~\ref{the:main}: the composition of the binary
relation of~$\EXACTLY{1}{2}$ with itself is the equality relation
on~$\{0,1\}$, which is not the full binary relation~$\{0,1\}^2$.

\section{Proof of Theorem~\ref{the:main}} \label{sec:proof}

The proof is structured in five parts. The first part sets the
stage. In the second part we define a probability distribution on
instances and prove that a random instance~$\bI$ sampled from this
distribution has, with high probability, two key properties: (a) it is
\emph{dense enough} to guarantee that~$\bI \not\to \bT$, and (b) it is
also \emph{sparse enough}, in a sense compatible with (a), to
guarantee that~$\bI \leq_k \bS$, as will be proved in the next
part. In the third part we show that any instance~$\bI$ that satisfies
the sparsity condition indeed satisfies~$\bI \leq_k \bS$. In the
fourth part we discuss the setting of parameters that satisfies all
the required conditions to prove the theorem. Finally, the fifth part
of the proof focuses on the special case of the theorem that applies
to digraphs (structures with a single binary relation). For this
special case we are able to slightly improve the parameters; this
special case for (di)graphs will be used in
Section~\ref{sec:approximatecoloring}.

\subsection{Setting the stage} \label{sec:stage}

Let~$\sig$ be a fixed signature and let~$(\bS,\bT)$ be a PCSP template
of signature~$\sig$ that has sublinear width.
Let~$\bS' := (S,\prod_{R\in\sig} R^{\bS})$
and~$\bT' := (T,\prod_{R\in\sig} R^{\bT})$ and note that~$\bS \to \bT$
implies~$\bS' \to \bT'$. Since the signature~$\sig$ is finite and
fixed, the template~$(\bS,\bT)$ has sublinear width if and only if the
template~$(\bS',\bT')$ has sublinear width.  Also, it holds
that~$\graph(\bS) = \graph(\bS')$, and that~$\bT$ is reflexive if and
only if~$\bT'$ is reflexive. Therefore, it suffices to prove the
theorem when the signature~$\sigma$ consists of a single relation
symbol~$R$. Furthermore, since any unary relation is either reflexive
or empty, we may assume that the arity~$r$ of~$R$ is at least~$2$;
i.e.,~$r \geq 2$.

Let~$(\bS,\bT)$ satisfy the following assumptions:

\begin{center}
\begin{tabular}{lll}
A1: & & for all $U,V \in \graph(\bS)$ we have $U \circ V = S^2$, \\
A2: & & for all~$a \in T$, the reflexive tuple~$(a,a,\ldots,a)$ is not in~$R^{\bT}$.
\end{tabular}
\end{center}

Let~$p = |S|$ and~$q = |T|$ be the cardinalities of the domains
of~$\bS$ and~$\bT$, respectively, and let~$k = k(n)$ be an integer
function such that~$k(n) = o(n)$. Our goal is to show that there exist
arbitrarily large instances~$\bI$ that witness that the PCSP
template~$(\bS,\bT)$ does not have width~$k$; i.e., the instance~$\bI$
is such that~$\bI \leq_k \bS$ and~$\bI \not\rightarrow \bT$. We show
that such an instance~$\bI$ exists by the probabilistic method.

In anticipation for the proof, in addition to the data~$r,k,p,q$, we
fix some real parameters~$\delta,\beta,\alpha,c,d$, as well as an
integer parameter~$n$. These parameters are required to satisfy the
following conditions:

\begin{center}
\begin{tabular}{lll} 
  \text{C1:} & &~$0 < \delta \leq 1/((r+1)(3r+1))$, \\
  \text{C2:} & &~$0 < \beta \leq (1+\delta)/(r-1)$, \\
  \text{C3:} & &~$0 < \alpha \leq (\beta/d)^{1/(r-1)} (r/e)^{r/(r-1)}$, \\
  \text{C4:} & &~$c \geq kp/\delta$, \\
  \text{C5:} & &~$n \geq \max\{c/(\alpha\beta),q\}$, \\
  \text{C6:} & &~$1 \leq d \leq n^{r-1}$, \\
             & & \\
  \text{C7:} & &~$p_1(r,d,n,q) + p_2(r,d,n,\alpha,\beta) < 1$,
\end{tabular}
\end{center}
where
\begin{align}
& p_1(r,d,n,q) := q^n \exp{\left({-{dn}/{(r^r q^{r-1}}}\right)}, \\
& p_2(r,d,n,\alpha,\beta) :=
\textstyle{\sum_{v=1}^{\lfloor{\alpha n}\rfloor} \left({(n/v)^{1-(r-1)\beta} d^{\beta}
e^{1+(r+1)\beta} r^{-r\beta} \beta^{-\beta}}\right)^v}.
\end{align}
We have separated the first six conditions C1--C6 from the last one~C7
because the first six are easily feasible by themselves; fulfilling
Condition~C7 simultaneously is more delicate.  At the end of the proof
we discuss a settings of the parameters~$\delta,\beta,\alpha,c,d$
and~$n$ that satisfy Conditions C1--C7. For now, we assume that the
conditions are feasible.

\subsection{Probabilistic construction}

Let~$H$ denote a random Erd\H{o}s-R\'enyi~$r$-uniform hypergraph
with~$n$ vertices and edge probability~$d/n^{r-1}$;
i.e.,~$V(H) = [n]$, and each~$r$-element subset~$C \subseteq [n]$ is
or is not an edge in~$E(H)$, independently, with
probability~$d/n^{r-1}$. Note that~$d/n^{r-1}$ is a proper probability
by Condition~C6. Let~$\bI = \bI(H)$ be the (random) instance with
domain~$[n]$ that has one~$r$-tuple~$\bv_C$ in~$R^{\bI}$ for
each~$C \in E(H)$, where~$\bv_C$ is obtained by ordering the elements
in~$C$ is some arbitrary way.

\begin{lemma} \label{le:nohomomorphic}
  The probability that~$\bI$ is homomorphic to~$\bT$ is at
  most~$p_1(r,d,n,q)$.
\end{lemma}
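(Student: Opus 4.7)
The plan is a straightforward union bound combined with the independence of edges in the Erdős–Rényi hypergraph model. I would bound the probability that the random instance $\bI$ admits a homomorphism to $\bT$ by summing over the $q^n$ possible maps $f : [n] \to T$ the probability that $f$ is a homomorphism from $\bI$ to $\bT$.

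The crucial input from the setup is Assumption~A2: no reflexive tuple $(a,a,\ldots,a)$ belongs to $R^{\bT}$. Consequently, if some edge $C \in E(H)$ is \emph{monochromatic} under $f$ (that is, $f$ is constant on the vertices of $C$), then $f(\bv_C)$ is a reflexive tuple and therefore lies outside $R^{\bT}$, forcing $f$ to fail as a homomorphism. Hence
$$\Pr[f : \bI \to \bT] \leq \Pr[\text{no edge of } H \text{ is monochromatic under } f].$$
For a fixed $f$, let $n_a := |f^{-1}(a)|$ so that $\sum_{a \in T} n_a = n$. The $r$-subsets of $[n]$ that could possibly be monochromatic under $f$ number exactly $M(f) := \sum_{a \in T} \binom{n_a}{r}$. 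Because the hyperedges of $H$ are drawn independently with probability $d/n^{r-1}$, the probability that none of these $M(f)$ subsets is an edge equals $(1 - d/n^{r-1})^{M(f)}$, which by the inequality $1-x \leq e^{-x}$ is at most $\exp(-M(f)\, d / n^{r-1})$.

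The remaining step is to lower bound $M(f)$ uniformly in $f$. I would combine convexity of $\binom{x}{r}$ with the power-mean inequality to obtain $M(f) \geq n^r / (r^r q^{r-1})$. Concretely, $\binom{m}{r} \geq (m/r)^r$ whenever $m \geq r$, since this rearranges to $\prod_{i=0}^{r-1}(1 - i/m) \geq r!/r^r$, which holds with equality at $m=r$ and is monotone increasing in $m$; combined with the power-mean inequality $\sum_a n_a^r \geq n^r / q^{r-1}$, this yields the claimed bound, with the contribution of color classes of size less than $r$ being a negligible correction given Condition~C5 which ensures $n \geq q$.

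Putting everything together, $\Pr[f : \bI \to \bT] \leq \exp(-dn/(r^r q^{r-1}))$ for every fixed $f$, and the union bound over the $q^n$ choices of $f$ yields $\Pr[\bI \to \bT] \leq q^n \exp(-dn/(r^r q^{r-1})) = p_1(r,d,n,q)$, as required. The only mildly delicate step is the combinatorial lower bound on $M(f)$, and even that is routine once one notes that small color classes cannot spoil the convexity estimate. The conceptual content of the lemma is really just that Assumption~A2 turns ``no homomorphism'' into ``no monochromatic edge'', at which point the standard chromatic-number threshold for random hypergraphs supplies the bound.
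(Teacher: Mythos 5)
Your proof is correct and follows essentially the same route as the paper's: a union bound over the $q^n$ maps, Assumption~A2 to rule out monochromatic edges, independence of hyperedges to get the factor $(1-d/n^{r-1})^{M(f)}$, and convexity/power-mean estimates to bound $M(f) \geq n^r/(r^r q^{r-1})$. The only difference is cosmetic: you explicitly flag the caveat about color classes of size less than $r$ in the inequality $\binom{m}{r} \geq (m/r)^r$, a point the paper's own proof silently glosses over.
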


\begin{proof}
  Let~$h$ be any mapping from the domain~$I$ of~$\bI$ to the
  domain~$T$ of~$\bT$. Let~$S_1,\ldots,S_q$ be the partition induced
  by~$h$; i.e.,~$S_i = h^{-1}(i)$ for~$i = 1,\ldots,q$.
  Let~$n_i := |S_i|$. By Assumption~A2, the relation~$R^{\bT}$ does
  not contain a reflexive tuple. Therefore, for all~$i \in [q]$
  and~$C \subseteq S_i$ we have~$h(\bv_C)\not\in R^{\bT}$. By
  independence, the probability that~$h(\bv_C)$ belongs to~$R^{\bT}$
  for every~$\bv_C\in R^{\bI}$ is at most~$(1-d/n^{r-1})^N$,
  where~$N = \sum_{i=1}^q \binom{n_i}{r}$.  Now note
  that~$\binom{n_i}{r} \geq (n_i/r)^r$, and that, subject to the
  constraints~$x_1,\ldots,x_q \geq 0$ and~$\sum_{i=1}^q x_i = n$, the
  sum~$\sum_{i=1}^q (x_i/r)^r$ is minimized
  at~$x_1 = \cdots = x_q = n/q$. It follows
  that~$N \geq q (n/(rq))^r = n^r/(r^r q^{r-1})$.
  Therefore,~$(1-d/n^{r-1})^N \leq \exp(-dn^r/(n^{r-1} r^r q^{r-1}))
  \leq \exp(-dn/(r^r q^{r-1}))$.  By the union bound the probability
  that there is a homomorphism from~$\bI$ to~$\bT$ is at
  most~$q^n \exp(-dn/(r^r q^{r-1})$, which equals~$p_1(r,d,n,q)$.
\end{proof}

We say that~$\bI$ is~\emph{$(\alpha,\beta)$-sparse} if every
substructure~$\bJ$ of~$\bI$ with~$v$ elements
satisfying~$1 \leq v \leq \alpha n$ has less than~$\beta v$ many
tuples. Equivalently,~$\bI$ is~$(\alpha,\beta)$-sparse if every
substructure~$\bJ$ of~$\bI$ with at most~$\alpha n$ elements and at
least~$\beta v$ many tuples has more than~$v$ elements.  The argument
for the lemma below is almost the same as the one for Lemma~1
in~\cite{ChvatalSzemeredi1988} except that we need to apply it to our
probability model. Also as in~\cite{ChvatalSzemeredi1988}, we use the
following Chernoff-like bound
\begin{equation}
\sum_{i=\lceil{tm}\rceil}^m {m \choose i} \gamma^i (1-\gamma)^{m-i} \leq
\left({e\gamma/t}\right)^{tm}, \label{eqn:chernofflike}
\end{equation}
which holds for every integer~$m \geq 1$, every
real~$\gamma \in [0,1]$, and every real~$t \in (\gamma,1]$.

\begin{lemma} \label{le:lowdensity}
  The probability that~$\bI$ is not~$(\alpha,\beta)$-sparse is at
  most~$p_2(r,d,n,\alpha,\beta)$.
\end{lemma}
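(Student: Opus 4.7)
The proof is a union-bound argument over the possible ``witnesses'' of non-sparsity, applied to the Erd\H{o}s--R\'enyi hypergraph $H$ underlying $\bI$. The plan is to show that, for each $v$ with $1 \leq v \leq \lfloor \alpha n \rfloor$, the expected number of vertex subsets $V \subseteq [n]$ of size~$v$ whose induced subhypergraph contains at least $\beta v$ edges is bounded by exactly the $v$-th summand of $p_2(r,d,n,\alpha,\beta)$. Summing over $v$ and applying the union bound then yields the lemma.

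To carry out the first step, I would fix $V \subseteq [n]$ with $|V|=v$ and let $X_V$ be the number of edges of $H$ contained in $V$. Then $X_V$ is a binomial random variable with parameters $m := \binom{v}{r}$ and $\gamma := d/n^{r-1}$. Applying the Chernoff-like bound~\eqref{eqn:chernofflike} with the choice $t := \beta v/m$ gives
\begin{equation*}
\Pr[X_V \geq \beta v] \;\leq\; \bigl(e\gamma/t\bigr)^{\beta v} \;=\; \bigl(e\gamma m/(\beta v)\bigr)^{\beta v}.
\end{equation*}
Using the standard estimate $\binom{v}{r}\leq (ve/r)^r$ on $m$, the bracket becomes at most $d v^{r-1} e^{r+1}/(n^{r-1} r^r \beta)$. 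There are $\binom{n}{v}\leq (ne/v)^v$ choices of $V$, so by the union bound over $V$ the contribution of size-$v$ witnesses is at most
\begin{equation*}
(ne/v)^v \cdot \bigl(d v^{r-1} e^{r+1}/(n^{r-1} r^r \beta)\bigr)^{\beta v},
\end{equation*}
and a routine rearrangement shows this equals $\bigl((n/v)^{1-(r-1)\beta} d^{\beta} e^{1+(r+1)\beta} r^{-r\beta} \beta^{-\beta}\bigr)^v$, which is precisely the $v$-th term in the sum defining $p_2(r,d,n,\alpha,\beta)$. A further union bound over $v = 1,\ldots,\lfloor \alpha n\rfloor$ completes the proof.

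The only delicate point is verifying that the hypothesis $t \in (\gamma,1]$ required by~\eqref{eqn:chernofflike} actually holds throughout the summation range, for otherwise the Chernoff-like bound cannot be invoked. The condition $t > \gamma$ unfolds to $v^{r-1} < \beta r!\, n^{r-1}/d$, and I would verify this using Condition~C3, which bounds $\alpha^{r-1} \leq (\beta/d)(r/e)^r$, together with the elementary Stirling estimate $r! \geq (r/e)^r$; since $v \leq \alpha n$, these combine to give the needed inequality. The case $v < r$ (where $m = 0$) contributes zero and can be handled separately or absorbed into the bound via the convention $0^0 = 1$. The rest is arithmetic: checking that the exponents of $n/v$, $d$, $e$, $r$, and $\beta$ all assemble correctly into the stated expression for $p_2$.
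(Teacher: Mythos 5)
Your proposal is correct and follows essentially the same route as the paper: a union bound over subset sizes $v$ and subsets $V$, the Chernoff-like bound \eqref{eqn:chernofflike} applied to the binomial count of edges inside $V$ with $t=\beta v/m$, the estimates $\binom{n}{v}\leq (ne/v)^v$ and $\binom{v}{r}\leq (ve/r)^r$, and Condition~C3 to guarantee $t>\gamma$ on the range $v\leq\alpha n$. The one small inaccuracy is your claim that $t\in(\gamma,1]$ holds throughout the summation range: for small $v$ one can have $t>1$, but then $X_V\leq m<\beta v$ forces $\Pr[X_V\geq\beta v]=0$, so the per-$v$ bound holds trivially, which is exactly how the paper disposes of that case.
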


\begin{proof}
For each integer~$v$ such that~$1 \leq v \leq \alpha n$,
set~$m_v = {v \choose r}$. Setting~$p_0 = d/n^{r-1}$, the probability
that~$\bI$ is not~$(\alpha,\beta)$-sparse is bounded by
\begin{equation}
 \sum_{v = r}^{\lfloor{\alpha n}\rfloor} {n \choose v}\sum_{i =
  \lceil{\beta v}\rceil}^{m_v} {m_v \choose i} p_0^i (1-p_0)^{m_v-i}.
\label{eqn:prob}
\end{equation}
Now set~$t_v = \beta v/m_v$. Using~${v \choose r} < (ve/r)^r$ note
that~$t_v > \beta r^r/(e^r v^{r-1}) \geq d/n^{r-1} = p_0$
for~$v \leq \alpha n$ since~$\alpha$ satisfies Condition~C3.
If~$t_v > 1$, then the inner sum in~\eqref{eqn:prob} is zero, while
if~$t_v \leq 1$, then we have~$t_v \in (p_0,1]$,
so~\eqref{eqn:chernofflike} applies to bound~\eqref{eqn:prob} by
\begin{equation}
  \sum_{v = r}^{\lfloor{\alpha n}\rfloor} {n
    \choose v} \left({ep_0/t_v}\right)^{t_v m_v}.  
\end{equation}
Using~${n \choose v} \leq (ne/v)^v$ and~${v \choose r} \leq (ve/r)^r$,
we bound this further by
\begin{equation}
  \sum_{v = r}^{\lfloor{\alpha n}\rfloor}
  \left({(n/v)^{1-(r-1)\beta} d^\beta e^{1+(r+1)\beta} r^{-r\beta}\beta^{-\beta}}\right)^v,
\end{equation}
which is bounded by $p_2(r,d,n,\alpha,\beta)$.
\end{proof}

\begin{lemma}
  \label{le:largev}
  There exists a structure~$\bI$ with~$n$ elements that is not
  homomorphic to~$\bT$ and is~$(\alpha,\beta)$-sparse. In
  particular,~$\bI$ is such that for every substructure~$\bJ$ of~$\bI$
  and every integer~$m \geq 0$, if~$\bJ$ has~$m$ many tuples
  and~$m \leq c$, then~$\bJ$ has more than~$(r-1)m/(1+\delta)$
  elements.
\end{lemma}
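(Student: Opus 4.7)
The plan is to combine the two previous probabilistic estimates with a union bound and then extract the stated sparsity consequence from the definitions together with the parameter conditions.

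For the main existence claim, I would proceed by the probabilistic method applied to the random instance $\bI = \bI(H)$ constructed in the previous subsection. By Lemma~\ref{le:nohomomorphic}, the probability that $\bI$ is homomorphic to $\bT$ is at most $p_1(r,d,n,q)$, and by Lemma~\ref{le:lowdensity} the probability that $\bI$ fails to be $(\alpha,\beta)$-sparse is at most $p_2(r,d,n,\alpha,\beta)$. By the union bound, the probability that $\bI$ fails at least one of the two desired properties is at most $p_1(r,d,n,q) + p_2(r,d,n,\alpha,\beta)$, which is strictly less than $1$ by Condition~C7. Hence there exists a realization of $H$ yielding an instance $\bI$ that is simultaneously not homomorphic to $\bT$ and $(\alpha,\beta)$-sparse.

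For the ``in particular'' clause, I would fix a substructure $\bJ$ of $\bI$ with $m \geq 1$ tuples and $v$ elements, assume $m \leq c$, and show $v > (r-1)m/(1+\delta)$ by splitting on whether $v \leq \alpha n$. If $v \leq \alpha n$, then $(\alpha,\beta)$-sparsity gives $m < \beta v$, and since $\beta \leq (1+\delta)/(r-1)$ by Condition~C2, we obtain $v > m/\beta \geq (r-1)m/(1+\delta)$. If instead $v > \alpha n$, then Condition~C5 ($n \geq c/(\alpha\beta)$) gives $\alpha n \geq c/\beta \geq m/\beta$ using $m \leq c$, so $v > m/\beta$, and invoking Condition~C2 once more yields $v > (r-1)m/(1+\delta)$. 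The trivial case $m = 0$ requires no further argument.

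The argument is essentially immediate once the probabilistic ingredients are combined with Conditions~C2 and~C5, which were designed precisely to make the two regimes of $v$ match up. The real difficulty lies not in this lemma but in verifying that the parameters $\delta, \beta, \alpha, c, d, n$ can be chosen to simultaneously satisfy Conditions~C1--C7 for $k = k(n) = o(n)$; in particular C7, which couples all parameters through a geometric-type sum and forces $d$ to be tuned carefully against $n$, is what the subsequent parameter-setting part of the proof of Theorem~\ref{the:main} must handle.
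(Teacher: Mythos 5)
Your proposal is correct and matches the paper's argument: the existence claim is the same union-bound application of Lemmas~\ref{le:nohomomorphic} and~\ref{le:lowdensity} with Condition~C7, and your direct case split on $v\leq\alpha n$ versus $v>\alpha n$ is just a contrapositive rephrasing of the paper's contradiction argument, using Conditions~C2 and~C5 in exactly the same way.
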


\begin{proof}
  The existence of~$\bI$ follows directly from the assumption that
  Condition~C7 holds and Lemmas~\ref{le:nohomomorphic}
  and~\ref{le:lowdensity}. For the second part of the statement,
  assume that~$\bJ$ is a substructure of~$\bI$ that has~$v$ many
  elements and~$m$ many tuples, with~$m \leq c$
  and~$v \leq (r-1)m/(1+\delta)$.  In
  particular~$v \leq m/\beta \leq c/\beta \leq \alpha n$ since~$\beta$
  satisfies Condition~C2 and~$n$ satisfies Condition~C5. But~$\bI$
  is~$(\alpha,\beta)$-sparse and it follows that~$\bJ$ has less
  than~$\beta v \leq m$ many tuples; a contradiction.
\end{proof}

\subsection{Proof of consistency}

Still assuming that the setting of parameters satisfies the
conditions~C1--C7, let~$\bI$ be a structure as in the second part of
Lemma~\ref{le:largev}.  We shall now prove that~$\bI \leq_k \bS$.

The proof adapts the notion of \emph{boundary} from
\cite{MolloySalavatipour2007}; this was used to prove size lower
bounds for the Resolution proof complexity of random CSPs. Let~$\bJ$
be any substructure of~$\bI$. A set~$D \subseteq J$ is said to be a
\emph{boundary set of~$\bJ$} if it is non-empty and every homomorphism
from~$\bJ|_{J\setminus D}$ to~$\bS$ extends to a homomorphism
from~$\bJ$ to~$\bS$. We introduce two types of subsets~$D \subseteq J$
of~$\bJ$ and show that any set of any of these types is a boundary set
of~$\bJ$.

We define the {\em degree} of an element in~$\bJ$ to be the number of
tuples of~$R^{\bJ}$ in which it appears.  We say that~$D \subseteq J$
is of type~(1) if~$D=\{x_1,\dots,x_{r-1}\}$, where~$x_1,\dots,x_{r-1}$
are all distinct and have degree one in~$\bJ$, and there
exists~$\bv_C\in R^{\bJ}$ with~$D\subseteq C$. In this case we say
that the set~$C$ {\em witnesses} that~$D$ is of type~(1).  We say
that~$D \subseteq J$ is of type~(2)
if~$D=\{x_1,y_1,\dots,x_{r-2},y_{r-2},z\}$,
where~$x_1,y_1,\dots,x_{r-2},y_{r-2}$ are all distinct and have degree
one in~$\bJ$, also~$z$ is distinct from the rest of elements in~$D$
and has degree two in~$\bJ$, and there exist two different
tuples~$\bv_{C_1},\bv_{C_2}\in R^{\bJ}$ such
that~$\{x_1,\dots,x_{r-2},z\}\subseteq C_1$
and~$\{y_1,\dots,y_{r-2},z\}\subseteq C_2$. In this case we say that
the sets~$C_1,C_2$ {\em witness} that~$D$ is of type~(2). We check
below that, since~$\graph(R^{\bS})$ satisfies Assumption~A1, every set
of these two types is a boundary set.

\begin{lemma}
  For every substructure~$\bJ$ of~$\bI$ and every~$D \subseteq J$,
  if~$D$ is of type (1) or (2) in~$\bJ$, then~$D$ is a boundary set
  of~$\bJ$.
\end{lemma}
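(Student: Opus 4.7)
The plan is to show, for each type separately, that any partial homomorphism~$h:\bJ|_{J\setminus D}\to\bS$ extends to a homomorphism from the full substructure~$\bJ$ to~$\bS$. Only Assumption~A1 is needed, used in two strengths. As a preliminary observation I would note that A1 implies \emph{single-coordinate surjectivity}: for every~$k\in[r]$ and every~$s\in S$, there is a tuple~$\bt\in R^{\bS}$ with~$\bt(k)=s$. Indeed, since for every~$U\in\graph(\bS)$ the identity~$U\circ U=S^2$ holds, the first coordinate projection of~$U$ must be all of~$S$; hence the single coordinate projection~$\pr_k R^{\bS}$ is all of~$S$ for every~$k$.

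For type~(1), let~$\bv_C$ witness~$D=\{x_1,\dots,x_{r-1}\}$, so~$D\subseteq C$ and~$|C|=r$, and let~$y$ be the unique element of~$C\setminus D$. Because each~$x_i$ has degree one in~$\bJ$, no tuple of~$\bJ$ other than~$\bv_C$ involves any element of~$D$, so the only new constraint to satisfy is that the image of~$\bv_C$ lie in~$R^{\bS}$. Letting~$k$ be the coordinate at which~$y$ appears in~$\bv_C$, I invoke single-coordinate surjectivity to pick~$\bt\in R^{\bS}$ with~$\bt(k)=h(y)$, and define~$h(x_i):=\bt(\ell_i)$, where~$\ell_i$ is the coordinate of~$x_i$ in~$\bv_C$. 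The image of~$\bv_C$ under the extended~$h$ is then~$\bt\in R^{\bS}$, so the extension is a homomorphism.

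For type~(2), let~$\bv_{C_1},\bv_{C_2}$ witness~$D=\{x_1,y_1,\dots,x_{r-2},y_{r-2},z\}$, and let~$w_1\in C_1\setminus D$ and~$w_2\in C_2\setminus D$ be the unique outer elements; both lie in~$\dom(h)$. Let~$i_1,j_1$ be the coordinates of~$w_1,z$ in~$\bv_{C_1}$ and~$i_2,j_2$ the coordinates of~$z,w_2$ in~$\bv_{C_2}$, and set~$U:=\pr_{i_1,j_1}R^{\bS}$ and~$V:=\pr_{i_2,j_2}R^{\bS}$; both lie in~$\graph(\bS)$ since~$w_1\neq z$ and~$z\neq w_2$. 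By~A1 applied in full strength,~$U\circ V=S^2$, so there exists~$c\in S$ with~$(h(w_1),c)\in U$ and~$(c,h(w_2))\in V$. Put~$h(z):=c$. Then there is~$\bt_1\in R^{\bS}$ with~$\bt_1(i_1)=h(w_1)$ and~$\bt_1(j_1)=h(z)$; its remaining~$r-2$ coordinates are used to define~$h$ on~$x_1,\dots,x_{r-2}$. Symmetrically, a witness~$\bt_2\in R^{\bS}$ for~$(h(z),h(w_2))\in V$ defines~$h$ on~$y_1,\dots,y_{r-2}$. Since the~$x_i,y_i$ have degree one and~$z$ has degree two with its only two tuples being~$\bv_{C_1},\bv_{C_2}$, no other tuple of~$\bJ$ involves any element of~$D$, and the resulting map is a homomorphism.

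The only real content lies in the type~(2) case, and the key point is that~$z$ acts as a \emph{bridge} between the two tuples~$\bv_{C_1}$ and~$\bv_{C_2}$; the hypothesis~$U\circ V=S^2$ from~A1 is precisely what is needed to choose~$h(z)$ compatibly with the already fixed values~$h(w_1)$ and~$h(w_2)$. Beyond that, the argument is pure bookkeeping: the degree conditions in the definitions of types~(1) and~(2) are engineered so that all tuples of~$\bJ$ that meet~$D$ are exactly~$\bv_C$ or~$\bv_{C_1},\bv_{C_2}$, leaving no further constraints to check. No probabilistic or sparsity properties of~$\bI$ are invoked in this lemma; they will enter only in the subsequent argument that uses boundary sets to construct a~$k$-strategy on~$\bI$ and~$\bS$.
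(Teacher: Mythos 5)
Your proposal is correct and follows essentially the same route as the paper: type~(1) via single-coordinate surjectivity of~$R^{\bS}$ (which the paper also extracts from~A1, just less explicitly), and type~(2) by using~$z$ as the bridge and applying~$U\circ V=S^2$ to the two binary projections determined by the coordinates of the outer vertices and~$z$, with the degree conditions ensuring no other tuples of~$\bJ$ meet~$D$. The only difference is cosmetic: you spell out why~A1 yields~$\pr_k R^{\bS}=S$, which the paper leaves implicit.
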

\begin{proof}
  Fix~$\bJ$ and $D$ as in the hypothesis and let~$h$ be a
  homomomorphism from~$\bJ|_{J\setminus D}$ to~$\bS$.

  Firstly, assume that~$D=\{x_1,\dots,x_{r-1}\}$ is of type~(1), and
  let~$C$ witness so; i.e.,~$\bv_C \in R^{\bJ}$ and~$D \subseteq C$.
  Let~$i_0\in [r]$ be such that~$\bv_C(i_0) \in J\setminus D$. By
  Assumption~A1 on~$\script{G}(R^{\bS})$ we have that
  \begin{equation}
  \pr_{i_0} R^{\bS}=S.
  \end{equation} 
  Consequently, there is a tuple~$\ba \in R^{\bS}$ such
  that~$\ba(i_0) = h(\bv_C(i_0))$. We extend~$h$ to map every element
  of~$J$ by setting~$h(\bv_C(i)) = \ba(i)$ for
  every~$i \in [r]\setminus\{i_0\}$. Since each~$x \in D$ appears
  in~$\bv_C$ but in no other tuple of~$\bJ$, the result is a
  homomorphism from~$\bJ$ to~$\bS$.

  Secondly, assume that~$D=\{x_1,y_1,\dots,x_{r-2},y_{r-2},z\}$ is of
  type~(2), and let~$C_1,C_2$ witness so;
  i.e.,~$\bv_{C_1},\bv_{C_2}\in R^{\bJ}$ are distinct,
  and~$\{x_1,\dots,x_{r-2},z\}\subseteq C_1$
  and~$\{y_1,\dots,y_{r-2},z\}\subseteq C_2$. For~$j=1,2$,
  let~$i_j\in [r]$ be such that~$\bv_{C_j}(i_j)\in J \setminus D$ and
  let~$k_j\in [r] \setminus \{i_j\}$ be such that~$\bv_{C_j}(k_j)=z$.
  Again, by Assumption~A1 on~$\script{G}(R^{\bS})$, we
  have that
  \begin{equation}
  \pr_{i_1,k_1} R^{\bS}\circ \pr_{k_2,i_2} R^{\bS}=S^2.
  \end{equation} 
  Consequently, there exist
  tuples~$\ba_1,\ba_2 \in R^{\bS}$ such
  that~$\ba_j(i_j)=h(\bv_{C_j}(i_j))$ for~$j=1,2$
  and~$\ba_1(k_1)=\ba_2(k_2)$. We extend~$h$ to map every element
  of~$J$ by setting~$h(\bv_{C_j}(i)) = \ba_j(i)$ for~$j=1,2$ and
  every~$i\in [r]\setminus i_j$. Since each~$x \in D$ appears
  in~$\bv_{C_1}$ or in~$\bv_{C_2}$ (or in both, in case of~$z$) but in
  no other tuple of~$\bJ$, the result is a homomorphism from~$\bJ$
  to~$\bS$.
\end{proof}

\begin{lemma}
  \label{le:lotofboundaries}
  For every substructure~$\bJ$ of~$\bI$ and every integer~$m \geq 0$,
  if~$\bJ$ has~$m$ many tuples and~$m \leq c$, then~$\bJ$ has at
  least~$\delta m$ many pairwise disjoint boundary sets.
\end{lemma}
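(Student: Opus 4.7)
The plan is to construct the collection of boundary sets iteratively. Set $\bJ_0 := \bJ$, and for $i \geq 0$ at step $i+1$ search the current substructure $\bJ_i$ for a subset $D_{i+1} \subseteq J_i$ that is of type~(1) or type~(2) in $\bJ_i$; if such a set is found, set $\bJ_{i+1} := \bJ_i|_{J_i \setminus D_{i+1}}$ and continue, otherwise halt. The sets $D_i$ produced are pairwise disjoint and, by the preceding lemma, each is a boundary set of $\bJ_{i-1}$; a chain-of-extensions argument along $\bJ_N \subset \bJ_{N-1} \subset \cdots \subset \bJ_0 = \bJ$ lifts each $D_i$ to a boundary set of the original~$\bJ$.

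The core task is to show that the process runs for at least $\delta m$ iterations. Each iteration removes only the $1$ or $2$ witnessing tuples of $D_{i+1}$ (since the degree-$1$ and degree-$2$ elements of $D_{i+1}$ in $\bJ_i$ appear in no other tuples of $\bJ_i$), so after $N$ iterations $\bJ^* := \bJ_N$ has $m^* \geq m - 2N$ tuples. If the process halted at some $N < \delta m$, then $m^* > (1-2\delta)m > 0$, and since $m^* \leq m \leq c$, Lemma~\ref{le:largev} applied to $\bJ^*$ yields $v^* > (r-1)m^*/(1+\delta)$, where $v^*$ denotes the number of elements of $\bJ^*$. To derive a contradicting upper bound on $v^*$, partition the elements of $\bJ^*$ by their degree in $\bJ^*$ into counts $v_1, v_2, v_{\geq 3}$. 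The absence of type~(1) sets in $\bJ^*$ forces every tuple to have at most $r-2$ elements of degree~$1$, so $v_1 \leq (r-2)m^*$. The absence of type~(2) sets forces, for every element $z$ of degree exactly~$2$, at least one of the two tuples through $z$ to contain three or more elements of degree $\geq 2$; partitioning the tuples into $T_2$ (exactly two elements of degree $\geq 2$) and $T_{\geq 3}$ (three or more), this gives $v_2 \leq r|T_{\geq 3}|$.

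Combined with the degree-sum identity $rm^* = v_1 + 2v_2 + \sum_{V_{\geq 3}}\deg \geq v_1 + 2v_2 + 3v_{\geq 3}$ and the constraint $|T_2|+|T_{\geq 3}|=m^*$, a short optimization over the free parameters yields an upper bound of the form $v^* \leq (r-\kappa)m^*$ for a constant $\kappa > 1$ depending only on $r$ (e.g.\ $\kappa = 6/5$ for $r \geq 3$; for $r = 2$ the same conditions force $v_1 = v_2 = 0$, after which $2m^* = \sum \deg \geq 3 v^*$ gives $v^* \leq 2m^*/3$ directly). Combining with the lower bound from Lemma~\ref{le:largev} yields $\delta > (\kappa-1)/(r-\kappa)$, and a short check shows that Condition~C1 ($\delta \leq 1/((r+1)(3r+1))$) forces the reverse inequality for every $r \geq 2$, producing the desired contradiction. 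The main technical obstacle will be the optimization that produces the constant~$\kappa$: it demands a joint analysis of the distribution of tuples between $T_2$ and $T_{\geq 3}$ and of the degrees within $V_{\geq 2}$, all constrained by the no-type-$2$ condition.
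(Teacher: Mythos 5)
Your plan has a genuine structural gap at the very first step: the peeling argument does not produce boundary sets of~$\bJ$. A set $D_{i+1}$ that is of type~(1) or~(2) in the reduced structure $\bJ_i$ is a boundary set \emph{of~$\bJ_i$}, but in general not of~$\bJ$: its elements may have had larger degree in~$\bJ$, their other tuples having been deleted in earlier rounds, and an extension of a homomorphism defined on $J\setminus D_{i+1}$ must also satisfy those deleted tuples. Assumption~A1 only guarantees extensions against a single constraint (type~(1)) or a chain of two constraints through a shared element (type~(2)) --- that is exactly why the degree restrictions are built into the definitions --- so no ``chain-of-extensions'' can lift $D_{i+1}$ back to~$\bJ$. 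Concretely, take $r=2$, $\bS=\bK_3$ (which satisfies A1) and $\bJ$ the star with centre $z$ and leaves $x,y,w$. Your process may peel $D_1=\{x\}$ and $D_2=\{y\}$; then $z$ has degree one in $\bJ_2$ and $D_3=\{z\}$ is of type~(1) there, yet $\{z\}$ is \emph{not} a boundary set of~$\bJ$, since the partial homomorphism sending $x,y,w$ to three distinct colours does not extend. Because Lemma~\ref{le:double} genuinely needs $\delta m$ pairwise disjoint boundary sets of~$\bJ$ itself, what your construction yields is strictly weaker than the statement of the lemma. The paper avoids the issue by never peeling: it fixes a \emph{maximum} collection of type~(1)/(2) sets of~$\bJ$ itself and does all the counting inside~$\bJ$, using the sum of degree reciprocals $\invd(C)$ and a partition of the tuples into three classes, reaching the contradiction with Condition~C1 directly.

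On the quantitative side, your sketch is plausible but incomplete. The ``short optimization'' producing $\kappa$ is in fact the bulk of the work (it is what the paper's three-class partition carries out), and you leave it as a plan; the numbers do check out in the cases you mention ($\kappa=6/5$ gives $\delta>1/(5r-6)$, which is incompatible with C1, and your $r=2$ degree argument mirrors the paper's Section~5.5), but the joint analysis of $T_2$, $T_{\geq 3}$ and the degree distribution would have to be written out, and one must also discard degree-zero elements of $\bJ^*$ before comparing the degree-counting upper bound on $v^*$ with the lower bound of Lemma~\ref{le:largev}, since the latter counts isolated elements as well. These points are repairable; the lifting step is the one that makes the proposal, as written, not a proof of the lemma.
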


\begin{proof}
  We can assume that~$\bJ$ does not contain elements of degree zero
  since every boundary set of any substructure obtained by removing
  elements of degree~$0$ from~$\bJ$ is also a boundary set of~$\bJ$.
  We shall show that the total number of boundary sets in~$\bJ$ that
  are of type~(1) or~(2) is at least~$\delta m$. Since any two
  distinct boundary sets of these types must be disjoint, the claim
  will follow.

  Let~$\script{C} = \{ C \mid \bv_C\in R^{\bJ}\}$.
  For~$C=\{x_1,\dots,x_r\} \in \script{C}$, define 
  \begin{equation}
  \invd(C) := \textstyle{\sum_{i=1}^r 1/d_i}
\end{equation} 
where~$d_i$ is the degree of~$x_i$ in~$\bJ$ (\emph{sdr} stands for sum
of degree reciprocals). Since~$\bJ$ does not have elements of degree
zero, it is easy to see that~$\sum_{C\in \script{C}} \invd(C)$ is
equal to the number~$v$ of elements in~$\bJ$. The idea of the proof is
the following. Since, by Lemma~\ref{le:largev} we have
that~$v\geq (r-1)m/(1+\delta)$, there is a large number of
sets~$C\in\script{C}$
with~$\invd(C)\geq (r-1)/(1+\delta)$. Since~$\delta$ is small
enough each one of these sets must have at least~$r-2$ vertices of
degree one, and one vertex of degree at most two. From this large pool
of sets it is not difficult to find a large number of witnesses for
boundary sets of type~(1) or~(2). We formalize this below.

Let~$\script{D}$ be a collection of boundary sets of types~(1) and~(2)
with the largest possible cardinality and assume towards a
contradiction that~$|\script{D}|<\delta m$.  We partition~$\script{C}$
in three sets~$\script{C}_1$,~$\script{C}_2$,~$\script{C}_3$.  The
set~$\script{C}_1$ contains the witness~$C \in \script{C}$ of every
boundary set of type~(1) in~$\script{D}$, and exactly one among the
two sets~$C_1,C_2 \in \script{C}$ that witness some boundary set of
type~(2) in~$\script{D}$. Note
that~$|\script{C}_1| = |\script{D}| < \delta m$. The
set~$\script{C}_2$ contains all~$C\in\script{C}\setminus\script{C}_1$
such that~$\invd(C) > r-4/3$. The set~$\script{C}_3$ contains the
rest, i.e.,
all~$C \in \script{C} \setminus (\script{C}_1 \cup \script{C}_2)$.
Note that every~$C\in\script{C}\setminus\script{C}_1$ must contain at
least two elements of degree larger than one: otherwise it would be
the witness of a boundary set of type (1), against the maximality
of~$\script{D}$. In particular, for
any~$C\in\script{C}\setminus\script{C}_1$, we have~$\invd(C)\leq r-1$.
Note also that~$\invd(C) \leq r$ for every~$C \in \script{C}$,
and~$\invd(C) \leq r-4/3$ for every~$C \in \script{C}_3$.

Let~$\alpha_0$ and~$\beta_0$ be reals in~$[0,1]$ such
that~$|\script{C}_1|=\alpha_0
m$,~$|\script{C}_2|=(1-\alpha_0-\beta_0)m$,
and~$|\script{C}_3|=\beta_0 m$. Since each boundary set
in~$\script{D}$ contributes exactly one set to~$\script{C}_1$
and~$|\script{D}| < \delta m$, we have~$\alpha_0 < \delta$. We shall
prove that~$\beta_0 \leq 3r\delta$. Assume otherwise;
i.e.,~$\beta_0 > 3r\delta$. Recall
that~$v = \sum_{C \in \script{C}} \invd(C)$ and, hence,
\begin{align}
v & \leq |\script{C}_1|r+|\script{C}_2|(r-1)+|\script{C}_3|(r-4/3) \label{eqn:seqfirst} \\
& = \alpha_0 mr + (1-\alpha_0-\beta_0)m (r-1) + \beta_0 m (r-4/3)  \\
&= m (r-1+\alpha_0-\beta_0/3) \\
&< m (r-1)(1-\delta), \label{eqn:seqlast}
\end{align}
where the first inequality follows from the fact
that~$\script{C}_1,\script{C}_2,\script{C}_3$ is a partition
of~$\script{C}$ and the already noted bounds on~$\invd(C)$ for the~$C$
in these sets, the first equality follows from the choices
of~$\alpha_0$ and~$\beta_0$, the second equality follows from plain
arithmetic, and the strict inequality follows from~$\alpha_0 < \delta$
and the assumption that~$\beta_0 > 3r\delta$.  On the other hand, by
Lemma~\ref{le:largev} we have that~$v \geq
(r-1)m/(1+\delta)$. Combined
with~\eqref{eqn:seqfirst}-\eqref{eqn:seqlast}, this means
that~$1-\delta\geq 1/(1+\delta)$, which is impossible since~$\delta>0$
by Condition~C1.

It follows that~$|\script{C}_2|\geq (1-(1+3r)\delta) m$.
Let~$C\in\script{C}_2$. Since~$r - 3/2 < r - 4/3<\invd(C)$, the
set~$C$ must contain at least~$r-2$ elements of degree one. We also
know that the remaining two elements in~$C$ must have degree at least
two since, otherwise,~$C$ would witness a boundary set of type~(1),
against the maximality of~$\script{D}$. Again from~$r-4/3<\invd(C)$ it
follows that at least one of the two remaining elements must have
degree exactly two. Two sets in~$\script{C}_2$ cannot share an element
of degree two since otherwise they would witness a boundary set of
type~(2), against the maximality of~$\script{D}$ again. Consequently,
each one of the~$(1-(1+3r)\delta)m$ sets in~$\script{C}_2$ contains an
element of degree two which must also appear in some set
of~$\script{C}_1\cup \script{C}_3$.
Since~$|\script{C}_1\cup\script{C}_3|=(\alpha_0+\beta_0)m\leq
(1+3r)\delta m$, the total number of elements that can appear
in~$\script{C}_1\cup\script{C}_3$ is at most~$r(1+3r)\delta m$, which
yields~$1-(1+3r)\delta<r(1+3r)\delta$, against Condition~C1.
\end{proof}

\begin{lemma}
\label{co:csatisfiable}
For every substructure~$\bJ$ of~$\bI$, if~$\bJ$ has at most~$c$ many
tuples, then~$\bJ$ is homomorphic to~$\bS$.
\end{lemma}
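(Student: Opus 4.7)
The plan is to prove this by straightforward induction on the number $m$ of tuples of $\bJ$, using the two preparatory lemmas: the boundary-set extension property (every type~(1) or type~(2) set is a boundary set) and Lemma~\ref{le:lotofboundaries} (the existence of many pairwise disjoint boundary sets).

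For the base case $m=0$, the relation $R^{\bJ}$ is empty, so any map from $J$ to the (non-empty) set $S$ is vacuously a homomorphism. For the inductive step with $1 \leq m \leq c$, I would apply Lemma~\ref{le:lotofboundaries} to obtain at least $\delta m$ pairwise disjoint boundary sets in $\bJ$; since $\delta > 0$ by Condition~C1 and cardinalities are integers, this guarantees at least one boundary set $D$. Let $D$ be any such set; it is of type (1) or type (2), and in either case some tuple $\bv_C$ of $R^{\bJ}$ has $C$ meeting $D$. Removing the elements of $D$ from $\bJ$ therefore deletes at least one tuple, so $\bJ|_{J\setminus D}$ has strictly fewer than $m$ tuples and in particular at most $c$ tuples. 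The inductive hypothesis supplies a homomorphism $h : \bJ|_{J\setminus D} \to \bS$, and since $D$ is a boundary set, $h$ extends to a homomorphism $\bJ \to \bS$, completing the induction.

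The real content is entirely in the earlier lemmas, so there is no genuine obstacle here; the only point that needs verification is the book-keeping that removing a boundary set strictly decreases the number of tuples, which is immediate from the type~(1) and type~(2) definitions (the elements of $D$ have positive degree in $\bJ$, so the tuples witnessing $D$ are destroyed when $D$ is removed). Since we only ever apply the inductive hypothesis to substructures with fewer tuples than $\bJ$, the assumption $m \leq c$ propagates automatically.
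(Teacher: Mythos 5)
Your argument is essentially the paper's: the paper also invokes Lemma~\ref{le:lotofboundaries} with $\delta m>0$ to extract one boundary set $D$, takes a homomorphism of $\bJ|_{J\setminus D}$ into $\bS$, and extends it using the boundary property; the only packaging difference is that the paper runs this as a minimal-counterexample argument (choose $\bJ$ with at most $c$ tuples, not homomorphic to $\bS$, and minimal under proper substructures), whereas you run an induction on the number of tuples. The one step of yours that is not licensed by the statements you cite is ``Let $D$ be any such set; it is of type (1) or type (2)'': Lemma~\ref{le:lotofboundaries} as stated only guarantees \emph{boundary sets}, with no claim about their type (that information sits inside its proof), and for a general boundary set the tuple count of $\bJ|_{J\setminus D}$ need not drop --- for instance, a single element of degree zero already forms a boundary set --- so an induction measured purely by the number of tuples could stall. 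The repair is immediate and needs no type information: induct on the number of elements of $\bJ$ (or argue by minimality over substructures, as the paper does). Since every boundary set is non-empty by definition, $\bJ|_{J\setminus D}$ has strictly fewer elements and still at most $c$ tuples, the inductive hypothesis applies, and the boundary property gives the extension to $\bJ$. With that adjustment your proof is correct and coincides with the paper's.
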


\begin{proof}
  Assume, towards a contradiction, that there is a substructure~$\bJ$
  of~$\bI$ with~$m \leq c$ many tuples that is not homomorphic
  to~$\bS$. We can further assume that~$m \geq 1$, and that~$\bJ$ is
  minimal in the sense that any of its proper substructures is
  homomorphic to~$\bS$. By Lemma~\ref{le:lotofboundaries}, the
  substructure~$\bJ$ has at least~$\delta m>0$ boundary sets.
  Let~$D \subseteq J$ be any boundary set of~$\bJ$, which is non-empty
  by definition. By the minimality of~$\bJ$, there is an
  homomorphism~$f$ from~$\bJ|_{J\setminus D}$ to~$\bS$. By the
  definition of boundary set,~$f$ can be extended to a homomorphism
  from~$\bJ$ to~$\bS$; a contradiction.
\end{proof}

Let~$\bJ$ be a substructure of~$\bI$. A mapping~$h:X\rightarrow S$
with~$X\subseteq I$ is said to be \emph{consistent with~$\bJ$} if
there is an homomorphism~$g$ from~$\bJ$ to~$\bS$ such that~$h$ and~$g$
agree on the intersection~$\dom(h)\cap\dom(g)$. For the next lemma,
recall that~$p = |S|$.

\begin{lemma}
  \label{le:double}
For every partial homomorphism~$h$ from~$\bI$ to~$\bS$
with~$\dom(h) \leq k$, if~$h$ is consistent with every substructure
of~$\bI$ with a most~$c/p$ many tuples, then~$h$ is also consistent
with every substructure of~$\bI$ with at most~$c$ many tuples.
\end{lemma}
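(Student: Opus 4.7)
The plan is to argue by contradiction, reusing the boundary-set machinery of Lemma~\ref{le:lotofboundaries} essentially verbatim. Suppose that $h$ fails to be consistent with some substructure of $\bI$ having at most $c$ tuples, and pick such a substructure $\bJ$ that is minimal with respect to substructure inclusion. By hypothesis, $h$ is consistent with every substructure carrying at most $c/p$ tuples, so if $m$ denotes the number of tuples of $\bJ$ then $m > c/p$.

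Next, apply Lemma~\ref{le:lotofboundaries} to $\bJ$, which is legal since $m \leq c$, obtaining at least $\delta m$ pairwise disjoint boundary sets of $\bJ$. The key arithmetic observation is that Condition~C4, $c \geq kp/\delta$, gives $\delta c/p \geq k$, whence $\delta m > \delta c/p \geq k$, and so there are at least $k+1$ pairwise disjoint boundary sets. Since $|\dom(h)| \leq k$ and the boundary sets are pairwise disjoint, the pigeonhole principle supplies a boundary set $D$ of $\bJ$ entirely disjoint from $\dom(h)$.

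To close, consider the induced substructure $\bJ|_{J \setminus D}$. As $D$ is non-empty, this is a proper substructure of $\bJ$, so by the inclusion-minimality of $\bJ$ the mapping $h$ is consistent with $\bJ|_{J \setminus D}$. Fix a witnessing homomorphism $g' : \bJ|_{J \setminus D} \to \bS$; because $D$ is disjoint from $\dom(h)$, the set $\dom(h) \cap (J \setminus D)$ equals $\dom(h) \cap J$, so $g'$ already agrees with $h$ on all of $\dom(h) \cap J$. The defining property of a boundary set then extends $g'$ to a homomorphism $g : \bJ \to \bS$ that still agrees with $h$ on $\dom(h) \cap J$, showing that $h$ is consistent with $\bJ$ and contradicting the choice of $\bJ$. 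No new construction beyond the proof of Lemma~\ref{co:csatisfiable} is needed: the only additional ingredient is the pigeonhole, which works precisely because the boundary sets supplied by Lemma~\ref{le:lotofboundaries} are pairwise disjoint and Condition~C4 was calibrated to force their count above $k$ as soon as $m > c/p$.
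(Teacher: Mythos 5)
Your proof is correct and follows essentially the same route as the paper: a minimal counterexample $\bJ$ with $c/p < m \leq c$ tuples, Lemma~\ref{le:lotofboundaries} to get more than $\delta c/p \geq k$ pairwise disjoint boundary sets (via Condition~C4), and the boundary-set extension property. The only cosmetic difference is that the paper shows every boundary set must meet $\dom(h)$ and contradicts $|\dom(h)| \leq k$, while you pigeonhole to find one boundary set avoiding $\dom(h)$ and exhibit the consistency witness directly; the counting is identical.
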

\begin{proof}
  Fix~$h$ as in the hypothesis and assume, towards a contradiction,
  that~$h$ is not consistent with some substructure~$\bJ$ of~$\bI$
  with~$m \leq c$ many tuples. We can further assume that~$\bJ$ is
  minimal in the sense that~$h$ is consistent with any proper
  substructure of~$\bJ$. By assumption we have~$c/p < m \leq c$,
  where~$p = |S|$. It follows from Lemma~\ref{le:lotofboundaries}
  that~$\bJ$ has a collection~$\script{D}$ of at
  least~$\delta m > \delta c/p$ many pairwise disjoint boundary
  sets. By the minimality of~$\bJ$, for each boundary
  set~$D \in \script{D}$ of~$\bJ$, which is non-empty, we have
  that~$h$ is consistent with the substructure~$\bJ|_{J \setminus
    D}$. Therefore, there exists a homomorphism~$g_D$
  from~$\bJ|_{J \setminus D}$ to~$\bS$ that agrees with~$h$. Since~$D$
  is a boundary set of~$\bJ$, the homomorphism~$g_D$ extends to a
  homomorphism~$h_D$ from~$\bJ$ to~$\bS$. But~$h$ is not consistent
  with~$\bJ$ which means that~$h_D$ and~$h$ disagree somewhere in~$D$;
  in particular,~$D \cap \dom(h) \not= \emptyset$. Since the boundary
  sets in~$\script{D}$ are pairwise disjoint, we
  get~$|\dom(h)| > \delta c/p$, which is at least~$k$
  by Condition~C4; a contradiction.
\end{proof}

\begin{lemma} \label{lem:consistent}
  There is a $k$-strategy on $\bI$ and $\bS$; i.e., $\bI \leq_k \bS$.
\end{lemma}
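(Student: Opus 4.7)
The plan is to take $\script{H}$ to be the natural candidate suggested by Lemmas~\ref{le:lotofboundaries}, \ref{co:csatisfiable}, and \ref{le:double}: namely, the collection of all maps $h : X \to S$ with $X \subseteq I$, $|X| \leq k$, such that $h$ is consistent with every substructure of $\bI$ with at most $c/p$ many tuples. Non-emptiness is immediate from the empty map (every small substructure is homomorphic to $\bS$ by Lemma~\ref{co:csatisfiable}, and the empty map trivially agrees on the empty intersection), and closure under restrictions is routine from the definition of consistency. Before going further, I would observe that elements of $\script{H}$ are automatically partial homomorphisms: using the $(\alpha,\beta)$-sparsity of $\bI$ together with Conditions~C2, C4, and C5, the substructure $\bI|_{\dom(h)}$ has fewer than $\beta k \leq c/p$ many tuples (for $n$ large enough that $k \leq \alpha n$), so $h$ is consistent with $\bI|_{\dom(h)}$, which forces $h$ itself to be a homomorphism from $\bI|_{\dom(h)}$ to $\bS$.

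The heart of the argument is the extension property. Fix $h \in \script{H}$ with $|\dom(h)| < k$ and $x \in I \setminus \dom(h)$; for each $a \in S$ let $f_a := h \cup \{x \mapsto a\}$. Assume towards a contradiction that no $f_a$ lies in $\script{H}$, so for each $a$ there is a substructure $\bJ_a$ of $\bI$ with at most $c/p$ many tuples such that $f_a$ is inconsistent with $\bJ_a$. Because $h$ itself \emph{is} consistent with $\bJ_a$, the obstruction must come from the value at $x$, which forces $x \in J_a$ for every $a \in S$. Now form $\bJ := \bigcup_{a \in S} \bJ_a$, a substructure of $\bI$ with at most $p \cdot (c/p) = c$ many tuples. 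By Lemma~\ref{le:double}, $h$ is consistent with $\bJ$, so there exists a homomorphism $g : \bJ \to \bS$ that agrees with $h$ on $\dom(h) \cap J$. Setting $a_0 := g(x)$, which is well-defined since $x \in J$, we see that $g$ also agrees with $f_{a_0}$ on $\dom(f_{a_0}) \cap J$, so its restriction $g|_{J_{a_0}}$ witnesses the consistency of $f_{a_0}$ with $\bJ_{a_0}$---contradicting the choice of $\bJ_{a_0}$.

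The main obstacle in this argument is really just the bookkeeping that makes the "doubling trick'' work: one has to notice that the $p$ many bad substructures $\bJ_a$ all contain $x$ (otherwise the badness would already refute $h \in \script{H}$), so their union is bounded by $c$ rather than $p \cdot c$, which is precisely the regime in which Lemma~\ref{le:double} upgrades consistency at scale $c/p$ to consistency at scale $c$. The factor $p = |S|$ in Condition~C4 is tailored exactly for this step. Once this structural balance is confirmed, the three axioms of a $k$-strategy fall into place mechanically and yield $\bI \leq_k \bS$.
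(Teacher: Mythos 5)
Your proof is correct and follows essentially the same route as the paper's: the same choice of $\script{H}$ as the maps of size at most $k$ consistent with all small substructures, the same union-of-bad-substructures contradiction for the extension property, and the same use of Lemma~\ref{le:double} to bridge the scales $c/p$ and $c$ (which is exactly what Condition~C4 is for). The only cosmetic differences are that you define $\script{H}$ with threshold $c/p$ and apply Lemma~\ref{le:double} to $h$ itself rather than to its extensions, and you verify the partial-homomorphism property via $(\alpha,\beta)$-sparsity where the paper simply uses consistency with single-tuple substructures.
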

\begin{proof}
  Define~$\script{H}$ to be set of all partial mappings from~$I$
  to~$S$ with domain size at most~$k$ that are consistent with every
  substructure~$\bJ$ of~$\bI$ with at most~$c$ many tuples. We show
  that~$\script{H}$ is a~$k$-strategy. 

  First, note that every~$h\in \script{H}$ is a partial homomorphism
  from~$\bI$ to~$\bS$. To see this, take any tuple~$\bv_C$
  in~$\bI|_{\dom(h)}$ and we show that~$h(\bv_C)$ is
  in~$R^{\bS}$. Let~$\bJ$ be the substructure of~$\bI$ that contains
  only~$\bv_C$. Since~$\bJ$ has only one tuple and~$c \geq 1$, by the
  definition of~$\script{H}$ and the fact that~$h$ is in~$\script{H}$
  we have that~$h$ is consistent with~$\bJ$; that is, there is a
  homomorphism~$g$ from~$\bJ$ to~$\bS$ that agrees with~$h$ on the
  intersection~$C = \dom(h) \cap \dom(g)$;
  i.e.,~$h(\bv_C) = g(\bv_C) \in R^{\bS}$.
  Also it follows directly from Lemma~\ref{co:csatisfiable}
  that~$\script{H}$ is non-empty as it contains the partial
  mapping~$h$ with~$\dom(h)=\emptyset$.

  It remains to be shown that~$\script{H}$ is closed under
  restrictions and satisfies the extension property up to~$k$. The
  closure under restrictions follows directly from the
  definitions. Let us then verify that~$\script{H}$ satisfies the
  extension property up to~$k$. Fix~$h\in \script{H}$
  with~$|\dom(h)|<k$, and fix~$x \in I\setminus\dom(h)$. For
  every~$a\in S$, let~$h_a$ be the extension of~$h$ with
  domain~$\dom(h)\cup\{x\}$ that maps~$x$ to~$a$. We claim that there
  exists some~$a\in S$ such that~$h_a$ is consistent with all
  substructures of~$\bI$ with at most~$c/p$ many tuples. Once the
  claim is proved we can conclude from Lemma~\ref{le:double}
  that~$h_a$ belongs to~$\script{H}$ and the proof will be
  complete. To prove the claim, assume that for each~$a\in S$ there is
  some substructure~$\bJ_a$ of~$\bI$ that falsifies it. Then,~$h$ is
  not consistent with~$\bigcup_{a\in S} \bJ_a$, which is a
  substructure of~$\bI$ that has at most~$c$ many tuples,
  since~$|J_a| \leq c/p$ and~$p = |S|$; a contradiction.
\end{proof}

\subsection{Settings of parameters}

The data~$r,p,q$ are fixed and independent of~$n$, but~$k$ is set
to~$\epsilon n$ for a small enough positive constant~$\epsilon > 0$ to
be determined next (in Equations~\eqref{eqn:epsilon1}
and~\eqref{eqn:epsilon2} below). We set~$\delta$ and~$\beta$ to their
upper bounds in Conditions~C1 and~C2. In particular,~$\delta$
and~$\beta$ are constants independent of~$n$
and~$\delta = (r-1)\beta - 1$.  Set~$d = r^r q^{r-1}\ln(2q)$, so~$d$
is also a constant independent of~$n$.
Set~$\alpha = \epsilon p/(\delta\beta)$.  If~$\epsilon$ is small
enough, namely, if
\begin{equation}
0 < \epsilon < (\delta\beta/p)(\beta/d)^{1/(r-1)} (r/e)^{r/(r-1)},
\label{eqn:epsilon1}
\end{equation}
then Condition~C3 is satisfied.  Set~$c = kp/\delta$, so that
Condition~C4 is satisfied. Now choose~$n$ is large enough so
that~$n \geq q$; observe that the choices of~$\alpha$ and~$c$ are made
so that~$c/(\alpha\beta) = n$, so Condition~C5 is satisfied. Since~$d$
is a constant independent of~$n$, Condition~C6 is also satisfied,
if~$n$ is large enough. We still need to check that Condition~C7
holds. By the choice of~$d$ we have that~$p_1(r,d,n,q) = 1/2^n$, which
approaches~$0$ as~$n$ grows to infinity.  Set
\begin{align*}
  & \rho_1(n) := (1/\sqrt{n})^{\delta} d^{\beta} e^{1+(r+1)\beta}
    r^{-r\beta} \beta^{-\beta}, \\
  & \rho_2(n) := (\epsilon p/(\delta\beta))^\delta d^{\beta} e^{1+(r+1)\beta} r^{-r\beta} \beta^{-\beta}.
\end{align*}
Splitting the sum that defines~$p_2(r,d,n,\alpha,\beta)$
into~$v \leq \lfloor{\sqrt{n}}\rfloor$
and~$v \geq \lfloor{\sqrt{n}}\rfloor+1$ we get
\begin{align}
p_2(r,d,n,\alpha,\beta) \leq
\textstyle{\sum_{v=1}^{\lfloor{\sqrt{n}}\rfloor} \rho_1(n)^v} 
+
\textstyle{\sum_{v=\lfloor{\sqrt{n}}\rfloor+1}^{\lfloor{\alpha n}\rfloor}
\rho_2(n)^v}.
\label{eqn:firstandsecond}
\end{align}
Now note that~$\rho_1(n) < 1/3$ for large enough~$n$
because~$\delta,d,r,\beta,p$ are constants independent of~$n$. Also,
if~$\epsilon$ is positive but small enough, namely, if
\begin{equation}
0 < \epsilon < (\delta\beta/(3p)) 
d^{-\beta/\delta} e^{-1/\delta-(r+1)\beta/\delta} r^{r\beta/\delta} \beta^{\beta/\delta},
\label{eqn:epsilon2}
\end{equation} 
then also~$\rho_2(n) < 1/3$ again because~$\delta,d,r,\beta,p$ are
constants independent of~$n$ and of~$\epsilon$. Thus, the sum
in~\eqref{eqn:firstandsecond} is strictly less
than~$\sum_{v=1}^\infty (1/3)^v = 1/2$.
Therefore,~$p_1 + p_2 < 1/2^n + 1/2 < 1$ for large enough~$n$, which
proves that Condition~C7 is satisfied, as was to be shown.

\subsection{Special case of digraphs} \label{sec:digraphs}

In this section we assume that~$r = 2$ and use it to improve the
parameters. The improvement is that we can replace the~$1/21$ that
would result from plugging~$r = 2$ into the right-hand side of
Condition~C1 by~$1/2$. The price to pay for this is that the
right-hand side of Condition~C4 becomes slightly bigger.

We indicate the required changes in the previous
proof. Conditions~C1 and C4 are replaced by the following:
\begin{center}
\begin{tabular}{lll} 
\text{C1':} & & $0 < \delta < 1/2$, \\
\text{C4':} & & $c \geq kp/\delta'$. \\
\end{tabular}
\end{center}
where
\begin{equation}
\delta' := (1-2\delta)/(6(1+\delta)) \label{eqn:defdelta'}
\end{equation}
Lemma~\ref{le:lotofboundaries} becomes the following:

\begin{lemma}
  \label{le:lotofboundaries2}
  For every substructure~$\bJ$ of~$\bI$ and every integer~$m \geq 0$,
  if~$r=2$ and~$\bJ$ has~$m$ many tuples and~$m \leq c$, then~$\bJ$
  has at least~$\delta' m$ many pairwise disjoint
  boundary sets.
\end{lemma}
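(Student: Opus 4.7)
The plan is to observe that for $r=2$ the two kinds of boundary sets collapse to singletons, which lets us bypass the sdr-based three-way partition of the proof of Lemma~\ref{le:lotofboundaries} and argue directly via a short degree-counting inequality.

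I would first specialize the definitions of the two types of boundary sets to $r=2$. A type-(1) boundary set becomes a singleton $\{x\}$ with $x$ of degree one in $\bJ$, and a type-(2) boundary set becomes a singleton $\{z\}$ with $z$ of degree exactly two in $\bJ$; in both cases the required witness tuples exist automatically from the degree condition (there is a tuple incident to $x$, and there are exactly two tuples incident to $z$). As in the proof of Lemma~\ref{le:lotofboundaries} we may assume $\bJ$ has no vertex of degree zero. Let $n_i$ be the number of vertices of $\bJ$ of degree $i$ and let $v=\sum_{i\geq 1} n_i$. Let $\script{D}$ be a maximum collection of pairwise disjoint boundary sets of types (1) and (2). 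By maximality, $|\script{D}|=n_1+n_2$: every vertex of degree one or two is itself a boundary set of the appropriate type, and since all sets in $\script{D}$ are pairwise disjoint singletons, if such a vertex were not already used by some singleton in $\script{D}$ then adding its singleton would strictly increase $|\script{D}|$, a contradiction.

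Two identities then close the argument. Summing vertex degrees yields $\sum_{i\geq 1} i\, n_i = 2m$; bounding the tail by $\sum_{i\geq 3} i\, n_i \geq 3(v-n_1-n_2)$ and substituting gives $2m \geq n_1 + 2n_2 + 3(v-n_1-n_2) = 3v - 2n_1 - n_2$, hence $n_1+n_2 \geq (2n_1+n_2)/2 \geq (3v-2m)/2$. Applying the lower bound $v \geq m/(1+\delta)$ from Lemma~\ref{le:largev} (applicable because $m\leq c$, while $\beta$ satisfies Condition~C2 and $n$ satisfies Condition~C5) yields $|\script{D}| \geq (1-2\delta)m/(2(1+\delta))$, which is at least $\delta' m$ by the definition of $\delta'$. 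The only step that really uses $r=2$ is the collapse of boundary sets to singletons; once that is observed, no obstacle remains beyond elementary counting, and Condition~C1' ($\delta<1/2$) is precisely what keeps $\delta'$ positive.
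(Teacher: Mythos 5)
Your proof is correct and follows essentially the same route as the paper's: for $r=2$ the type-(1) and type-(2) boundary sets become singletons of degree-one and degree-two vertices, and the number of such vertices is lower-bounded by combining the degree sum $2m$ with the bound $v \geq m/(1+\delta)$ from Lemma~\ref{le:largev} (the paper phrases this same count as Markov's inequality applied to the degree of a uniformly random vertex). The only difference is cosmetic: you use that distinct singletons are automatically pairwise disjoint, thereby avoiding the paper's conservative halving step, and in fact obtain $3\delta' m$ disjoint boundary sets, a factor three more than the $\delta' m$ required.
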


\begin{proof}
  We can assume that~$\bJ$ does not contain elements of degree zero
  since every boundary set of any substructure obtained by removing
  elements of degree~$0$ from~$\bJ$ is also a boundary set of~$\bJ$.
  In the special case~$r = 2$, the boundary sets of types~(1) and~(2)
  are in one-to-one correspondence with the vertices of degree one and
  the vertices of degree two, respectively. Thus, since~$\bJ$ does not
  contain elements of degree zero, and since each boundary set
  involves two vertices, it suffices to show that~$\bJ$ contains at
  least~$2\delta' m$ many vertices of degree at
  most two; this will give at least~$\delta' m$
  many pairwise disjoint boundary sets.

  Let~$v = |J|$ and let~$X$ denote the random variable that equals the
  degree, in~$\bJ$, of a uniformly chosen random element of~$J$. The
  sum of the degrees of the elements in~$\bJ$
  is~$2m$. Therefore,~$\mathbb{E}[X] = 2m/v$. Now recall that, by
  Lemma~\ref{le:largev}, we
  have~$v \geq (r-1)m/(1+\delta) = m/(1+\delta)$, since~$m \leq c$
  and~$r = 2$. By Markov's inequality, the probability that~$X \geq 3$
  is bounded by~$\mathbb{E}[X]/3 \leq (2m/v)/3 \leq
  2(1+\delta)/3$. Therefore, at least a~$1-2(1+\delta)/3$ fraction of
  the elements of~$\bJ$ have degree strictly less than~$3$ in~$\bJ$,
  which means that~$\bJ$ has at
  least~$(1-2(1+\delta)/3)v \geq (1-2(1+\delta)/3)m/(1+\delta) =
  2\delta' m$ elements of degree at most two. The
  lemma is proved.
\end{proof}

In the proof of Lemma~\ref{co:csatisfiable}, we need to replace the
occurrence of~$\delta$ by~$\delta'$, so we can call
Lemma~\ref{le:lotofboundaries2} (instead of calling
Lemma~\ref{le:lotofboundaries}), and use Condition~C1' (instead of
using Condition~C1) to ensure that~$\delta' m > 0$, as is required in
the proof.  In the proof of Lemma~\ref{le:double}, we need to replace
the three occurrences of~$\delta$ by~$\delta'$, so we can call
Lemma~\ref{le:lotofboundaries2} (instead of calling
Lemma~\ref{le:lotofboundaries}), and use Condition~C4' (instead of
using Condition~C4) to ensure that~$\delta' c/p \geq k$,
as is required in the proof. Except for these changes, the structure
of the proof is exactly the same: Lemma~\ref{le:largev} states that
there is an instance~$\bI$ that is~$(\alpha,\beta)$-sparse such
that~$\bI \not\rightarrow \bS$, and Lemmas~\ref{co:csatisfiable}
and~\ref{le:double} give Lemma~\ref{lem:consistent};
i.e.,~$\bI \leq_k \bS$.

\section{Approximate Chromatic Number}
\label{sec:approximatecoloring}

For a graph~$\bG$ with~$n$ vertices, there is always an integer~$q$
satisfying~$1 \leq q \leq n$ such that~$\bG \to \bK_q$; the smallest
such~$q$ is the \emph{chromatic number} of~$\bG$. Any homomorphism
from~$\bG$ to~$\bK_q$ is called a proper~$q$-coloring of~$\bG$. The
problem of properly coloring a graph with as few as possible number
of colors has a long history. Finding the exact chromatic number is of
course one of the classical~NP-hard problems, straight from Karp's 21
list \cite{Karp1972}. The exact computational complexity of the
problem of approximating the chromatic number is much less understood,
despite the important progress on the problem since the discovery of
the PCP Theorem. In this section we study the width complexity
of the problem.

\subsection{Constant chromatic numbers}

In the regime of constant chromatic numbers, it is conjectured that
the problem of~$q$-coloring~$p$-colorable graphs is NP-hard for any
two constants~$p$ and~$q$ such that~$3 \leq p \leq q$. The larger the
gap between~$q$ and~$p$, the stronger the NP-hardness result.
For~$p = 3$, the current best result of this type, due to Barto, Bul\'{\i}n,
Krokhin, and Opr\v{s}al \cite{BBKO}, is that it is
NP-hard to~$5$-color~$3$-colorable graphs. For constant~$p \geq 4$,
the current best such result, due to Wrochna and \v{Z}ivn\'y
\cite{WrochnaZivny2020}, is that it is NP-hard
to~$q(p)$-color~$p$-colorable graphs,
where~$q(p) := \binom{p}{\lfloor{p/2}\rfloor}-1$. This improved over
the previously known point of NP-hardness for the
weaker~$q(p) = \exp(\Omega(p^{1/3}))$, which holds for sufficiently
large~$p$~\cite{Huang2013}, and for the even weaker~$q(p) = 2p-1$, which
holds for all~$p \geq 3$~\cite{BBKO}. The full
conjecture stating that the problem of~$q$-coloring~$p$-colorable
graphs is NP-hard for any two constants~$p$ and~$q$ such
that~$3 \leq p \leq q$ is known to follow from certain variants of the
\emph{Unique Games Conjecture}~(UGC)~\cite{DinurMR09}.

These results predict that the promise
problems~$\PCSP(\bK_{3},\bK_{5})$ and~$\PCSP(\bK_p,\bK_{q(p)})$
for~$p \geq 4$ are not solvable in bounded width, unless P = NP.  The
UGC-based results predict that~$\PCSP(\bK_p,\bK_q)$ is not solvable in
bounded width for any two constants~$p$ and~$q$ such
that~$3 \leq p \leq q$, unless either the suitable variants of the UGC
fail or P = NP. We show that our Main
Theorem~\ref{the:main} confirms all these predictions,
unconditionally, and in the stronger sense of ruling out, not only
solvability in constant width, but solvability in sublinear width:

\begin{theorem} \label{thm:charactcoloring}
For any two integers~$p$ and~$q$ such
  that~$1 \leq p \leq q$, the following statements are equivalent:
\begin{enumerate} \itemsep=0pt
\item[(a)] $\PCSP(\bK_p,\bK_q)$ is solvable by the consistency
  algorithm in width~$3$,
\item[(b)] $\PCSP(\bK_p,\bK_q)$ is solvable by the consistency
  algorithm in bounded width,
\item[(c)] $\PCSP(\bK_p,\bK_q)$ is solvable by the consistency
  algorithm in sublinear width,
\item[(d)] $p = 1$ or $p = 2$.
\end{enumerate} 
\end{theorem}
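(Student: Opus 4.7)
The plan is to establish the cyclic chain $(a) \Rightarrow (b) \Rightarrow (c) \Rightarrow (d) \Rightarrow (a)$. The first two implications are immediate from the definitions: width~$3$ is a fortiori bounded width, and any constant function is in~$o(n)$. So the real work is in $(c) \Rightarrow (d)$ and $(d) \Rightarrow (a)$.

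For $(c) \Rightarrow (d)$, I will argue by contrapositive: assume~$p \geq 3$, and show that~$\PCSP(\bK_p,\bK_q)$ does not have sublinear width by invoking the main structural Theorem~\ref{the:main}. Two conditions need to be verified. First, that~$\bK_q$ is not reflexive, which is immediate since complete graphs (as used here) are self-loop free, so the edge relation contains no tuple of the form~$(a,a)$. Second, that for all~$U,V \in \graph(\bK_p)$ we have~$U \circ V = [p]^2$. Since the signature has a single binary relation and the edge relation of~$\bK_p$ is symmetric, $\graph(\bK_p)$ contains only the relation~$E_p = \{(i,j) \in [p]^2 : i\neq j\}$. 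For any target pair~$(a,b) \in [p]^2$, the hypothesis~$p \geq 3$ lets us pick some~$c \in [p] \setminus \{a,b\}$, and then~$(a,c),(c,b) \in E_p$, witnessing~$(a,b) \in E_p \circ E_p$. Both hypotheses of Theorem~\ref{the:main} hold, so its conclusion gives us exactly what we want: $(\bK_p,\bK_q)$ does not have sublinear width.

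For $(d) \Rightarrow (a)$, I handle the two cases separately. If~$p=1$, then~$\bK_1$ has empty edge relation, so the only instances~$\bI$ admitting any partial homomorphism to~$\bK_1$ on a pair of elements are those with no edges; in particular, any~$\bI$ with~$\bI \leq_3 \bK_1$ must be edgeless, and any constant map into~$\bK_q$ then gives a homomorphism~$\bI \to \bK_q$. If~$p=2$, then by Example~\ref{ex:majority} we already have that~$\CSP(\bK_2)$ has width~$3$, so~$\bI \leq_3 \bK_2$ implies~$\bI \to \bK_2$; composing with the canonical homomorphism~$\bK_2 \to \bK_q$ (which exists since~$q \geq p = 2$) yields~$\bI \to \bK_q$.

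The only substantive step is the reduction of $(c) \Rightarrow (d)$ to Theorem~\ref{the:main}; since that theorem has already been proved, the main obstacle here is purely bookkeeping: correctly identifying~$\graph(\bK_p)$ and checking the composition equality, which is where~$p \geq 3$ enters crucially (for~$p=2$ the composition~$E_2 \circ E_2$ equals the equality relation on~$\{0,1\}$, which is a proper subset of~$[2]^2$, consistent with the fact that~$\bK_2$ does have bounded width).
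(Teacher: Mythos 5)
Your proposal is correct and follows essentially the same route as the paper: the implications (a)$\Rightarrow$(b)$\Rightarrow$(c) are immediate, (c)$\Rightarrow$(d) is the contrapositive application of Theorem~\ref{the:main} after checking that $\graph(\bK_p)=\{E^{\bK_p}\}$ with $E^{\bK_p}\circ E^{\bK_p}=[p]^2$ for $p\geq 3$ and that $\bK_q$ is irreflexive, and (d)$\Rightarrow$(a) is handled by the trivial $p=1$ case and by width~$3$ for $\bK_2$ (Example~\ref{ex:majority}) composed with $\bK_2\to\bK_q$. The only difference is presentational (you spell out the $p=1$ case and the composition with $\bK_2\to\bK_q$ slightly more explicitly), which changes nothing of substance.
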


\begin{proof}
  If~$p = 1$, then the problem is trivial and solvable in width~2, and
  hence in width~3. If~$p = 2$, then the problem is solvable in
  width~$3$ since, in this case, the problem amounts to detecting
  whether~$\bG$ is 2-colorable, which is solvable in width~3 (see
  Example~\ref{ex:majority}). For~$p$ and~$q$ such
  that~$3 \leq p \leq q$, we
  have~$\graph(\bK_p) = E^{\bK_p} = \{ (i,j) \in [p]^2 \mid i \not= j
  \}$, and~$E^{\bK_p} \circ E^{\bK_p} = [p]^2$, while~$E^{\bK_q}$ is
  clearly irreflexive. Therefore, Theorem~\ref{the:main} implies
  that~$\PCSP(\bK_p,\bK_q)$ is not solvable in sublinear width, which
  completes the proof.
\end{proof}

\subsection{Algorithms with sublinear guarantees}

Turning to upper bounds, a well-known algorithm of
Wigderson~\cite{Wigderson1983} shows that the problem
of~$O(\sqrt{n})$-coloring 3-colorable graphs is solvable in polynomial
time.  We observe that, in its decision variant, Wigderson's algorithm
can be thought of as a \emph{width four} algorithm:

\begin{theorem}\label{th:ubapproxcoloring}
  For every graph~$\bG$, if~$\bG \leq_4 \clique_3$,
  then~$\bG \rightarrow \clique_{3\lceil{\sqrt{n}\rceil}}$, where~$n$
  is the number of vertices of~$\bG$.
\end{theorem}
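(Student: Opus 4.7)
The plan is to implement Wigderson's $O(\sqrt{n})$-coloring algorithm for $3$-colorable graphs within the width-$4$ consistency framework. The local structural input we need is the following lemma.

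\emph{Main Lemma.} If $\bG \leq_4 \clique_3$ then for every vertex $v$ of $\bG$, the induced subgraph on the neighborhood of $v$ is bipartite.

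I would prove the lemma by contraposition: pick an odd cycle $u_1,\dots,u_{2k+1}$ in the neighborhood of $v$ of minimum length. By minimality it is induced in $\bG$, so the substructure of $\bG$ induced on $\{v,u_1,\dots,u_{2k+1}\}$ is the odd wheel $\bJ$. Since the restriction of any $4$-strategy for $\bG$ to partial homomorphisms with domain contained in $J$ is a $4$-strategy for $\bJ$, it suffices to rule out $\bJ\leq_4\clique_3$. For $k=1$ this is immediate because $\bJ=\clique_4\not\to\clique_3$, so even the unique $4$-element substructure admits no homomorphism to $\clique_3$. For $k\geq 2$ the argument is an iterative pruning of the purported $4$-strategy $\script{H}$: with $v\mapsto 1$, adjacency on the outer cycle forces different colors in $\{2,3\}$, while cycle-distance-$2$ pairs are forced to agree, because extending $\{v\mapsto 1,u_i\mapsto a,u_{i+2}\mapsto b\}$ with $a\neq b$ to the intermediate vertex $u_{i+1}$ would require a fourth color. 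These two rules are globally inconsistent around an odd cycle, and propagating them by repeated applications of the extension axiom removes successively larger classes of partial homomorphisms from $\script{H}$ until it becomes empty, contradicting non-emptiness.

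With the lemma in hand, the algorithm is Wigderson's. Starting from $\bG$, while the current graph has a vertex $v$ of degree at least $\lceil\sqrt{n}\rceil$, apply the lemma to $2$-color its neighborhood with two fresh colors and then delete it. Vertex deletion preserves $4$-consistency: if $\script{H}$ is a $4$-strategy for $\bG$ and $X\subseteq G$, then $\{h\in\script{H}\mid\dom(h)\subseteq G\setminus X\}$ is a $4$-strategy for the induced substructure on $G\setminus X$, so the lemma can be reapplied at every step. There are at most $\lceil\sqrt{n}\rceil$ iterations (each deletes at least $\lceil\sqrt{n}\rceil$ vertices), consuming at most $2\lceil\sqrt{n}\rceil$ colors. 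At termination the maximum degree is less than $\lceil\sqrt{n}\rceil$, and a greedy coloring uses at most $\lceil\sqrt{n}\rceil$ further colors, for a grand total of $3\lceil\sqrt{n}\rceil$.

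The main obstacle is turning the qualitative statement ``odd alternation is impossible'' into a rigorous sequence of removals of partial homomorphisms. For $k=1$ the obstruction sits inside a single $4$-element subset, but for larger $k$ it only materialises after several rounds of the consistency reduction, and the precise configuration of outer vertices that triggers the final contradiction varies with the parity and size of $k$.
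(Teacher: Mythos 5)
Your overall plan coincides with the paper's: show that under $\bG \leq_4 \clique_3$ every neighborhood induces a $2$-colorable subgraph, then run Wigderson's recursion; your observations that deleting vertices preserves $4$-consistency and the color count $2\lceil\sqrt{n}\rceil + \lceil\sqrt{n}\rceil$ are fine. The genuine gap is in the proof of your Main Lemma, which is the crux of the theorem. For $k \geq 2$ you only gesture at an ``iterative pruning'' of the purported $4$-strategy, and you yourself flag this as the main obstacle; as written it is not a proof. Keep in mind that a $k$-strategy is a static family satisfying the restriction and extension axioms, so one cannot literally ``remove'' maps from it: the argument has to show that certain partial homomorphisms cannot belong to \emph{any} $4$-strategy and then derive, by an induction along the cycle with the hub kept inside every domain, parity constraints that contradict the oddness of the cycle. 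You state the two local forcing rules (neighbors of the hub avoid its color; distance-two pairs on the rim must agree) but never carry out this induction, and the dependence on the size and parity of $k$ that worries you is exactly the part that is missing. Note also that nothing forces all members of a strategy to give the hub the same color, so the conditioning on $v\mapsto 1$ has to be made explicit rather than assumed.

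The clean way to close the gap --- and what the paper actually does --- avoids wheels and minimal odd cycles altogether. From a $4$-strategy $\script{H}$ on $\bG|_{N\cup\{v\}}$ and $\clique_3$, pick some $h_0\in\script{H}$ with $\dom(h_0)=\{v\}$, say $h_0(v)=1$, and consider the family of restrictions to $N$ of those $h\in\script{H}$ with $v\in\dom(h)$ and $h(v)=1$. Since every vertex of $N$ is adjacent to $v$, these restrictions take values in $\{2,3\}$, and the strategy axioms of $\script{H}$, applied to domains of size at most $4$ containing $v$, translate verbatim into the axioms of a $3$-strategy on $\bG|_N$ and $\clique_2$; thus $\bG|_{N\cup\{v\}}\leq_4\clique_3$ implies $\bG|_N\leq_3\clique_2$. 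Now Example~\ref{ex:majority} (two-coloring has width $3$; this is precisely the odd-cycle parity argument you were trying to re-derive inside the wheel) gives that $\bG|_N$ is bipartite, which is your Main Lemma in one step. With the lemma supplied this way, your algorithmic part is essentially the paper's, which packages the same Wigderson recursion as an induction on subsets $Y$ with $|Y|\leq mt$ and color budget $t+2m$, yielding the same bound $3\lceil\sqrt{n}\rceil$.
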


\begin{proof}
  Fix a graph~$\bG$ with vertex-set~$V = [n]$.
  Let~$t = \lceil{\sqrt{n}}\rceil$. Assume
  that~$\bG \leq_4 \clique_3$. We prove, by induction on~$m$, that for
  every~$Y \subseteq V$ with~$|Y| \leq mt$, we
  have~$\bG|_Y \rightarrow \clique_{t+2m}$.
  Since~$t+2\lceil{n/t}\rceil \leq 3t$, the claim will follow by
  setting~$m = \lceil{n/t}\rceil$ and~$Y = V$.

  Fix~$m$ and~$Y \subseteq V$ with~$|Y| \leq mt$.  If~$m = 1$
  or~$\bG|_Y$ has all vertices of degree less than~$t$, then~$\bG|_Y$
  is~$t$-colorable,
  so~$\bG|_Y \rightarrow \clique_{t} \rightarrow \clique_{t+2m}$ and
  the claim is proved. Assume then that~$m > 1$ and~$\bG|_Y$ has some
  vertex~$v \in Y$ of degree at least~$t$.
  Let~$N \subseteq Y \setminus \{v\}$ be the neighborhood of~$v$
  in~$\bG|_Y$. Since~$\bG \leq_4 \clique_3$,
  also~$\bG|_{N \cup \{v\}} \leq_4 \clique_3$,
  implying~$\bG|_N \leq_3 \clique_2$. But then (see Example \ref{ex:majority}) $\bG|_N$
  is~$2$-colorable and hence~$\bG|_{N \cup \{v\}}$ is~$3$-colorable.
  Let~$h_v$ be a 3-coloring~$\bG|_{N \cup \{v\}}$ using the
  colors~$\{a,a+1,a+2\}$, where~$a := t+2m-2$ and~$h_v(v) = a$.
  Let~$X = Y \setminus (N \cup \{v\})$.
  Then~$|X| < |Y|-t \leq (m-1)t$ and hence, by induction
  hypothesis,~$\bG|_X \rightarrow \clique_{t+2(m-1)}$. Let~$g$ be a
  homomorphism from~$\bG|_X$ to~$\clique_{t+2m-2}$ and note
  that~$g \cup h_v$ is a homomorphism from~$\bG|_Y$
  to~$\clique_{t+2m}$; to see this, observe that the only possible
  color in~$\img(g) \cap \img(h_v)$ is~$a = h_v(v)$, but~$v$ is not
  adjacent in~$\bG|_Y$ to any vertex in~$\bG|_X$,
  since~$N \cap X = \emptyset$ by choice of~$X$. The proof is
  complete.
\end{proof}

In the rest of section we study the optimal width~$k = k(n)$ that
guarantees that, for any graph~$\bG$ with~$n$ vertices it holds
that~$\bG \leq_{k(n)} \bK_3$ implies~$\bG \to \bK_{O(n^\epsilon)}$ for
arbitrary but fixed~$\epsilon \in (0,1/2)$. 

As a first observation, it is straightforward to show
that~$k(n) \leq \lceil{n^{1-\epsilon}}\rceil$ suffices; i.e., for
every graph~$\bG$ with~$n$ vertices,
if~$\bG \leq_{\lceil{n^{1-\epsilon}}\rceil} \bK_3$,
then~$\bG \to \bK_{3\lceil{n^\epsilon}\rceil}$. Indeed, if~$V$ denotes
the set of vertices of~$\bG$, then one just selects a
subset~$X \subseteq V$ of~$\lceil{n^{1-\epsilon}}\rceil$ many
vertices, properly colors~$\bG|_X$ with three colors using the
assumption that~$\bG \leq_{\lceil{n^{1-\epsilon}}\rceil} \bK_3$, and
proceeds to the rest of the graph~$\bG|_{V \setminus X}$ with new
colors. Overall this uses at most~$3\lceil{n^\epsilon}\rceil$ colors.
We show that, by generalizing Wigderson's method, this width upper
bound of~$O(n^{1-\epsilon})$ can be improved
to~$O(n^{1-2\epsilon})$. This builds on some of the ideas of
Blum~\cite{Blum1994}, who showed that Wigderson's algorithm can be
improved to~$n^{3/8}$-color~$3$-colorable graphs in polynomial
time. Note that we do not achieve polynomial time, only sublinear
width, but we also show that this is necessary: for
any~$\gamma < 1-3\epsilon$, width~$O(n^{\gamma})$ does \emph{not}
suffice. Note the slight gap between the~$1-2\epsilon$ in the upper
bound and the~$1-3\epsilon$ in the lower bound.

\begin{theorem} \label{thm:genericbounds} Fix a
  real~$\epsilon \in (0,1/2)$ and an integer function~$q(n)$ such
  that~$q(n) \geq 3$ holds for all integers~$n \geq 1$,
  and~$q(n) = \Theta(n^\epsilon)$. The following statements hold:
\begin{enumerate}
\item There is an integer function~$k(n) = O(n^{1-2\epsilon})$
  such that, for every graph~$\bG$, if~$\bG \leq_{k(n)} \bK_3$
  then~$\bG \to \bK_{q(n)}$, where~$n$ is the number of vertices
  of~$\bG$.
\item For every real~$\gamma < 1-3\epsilon$ and every integer
  function~$k(n) = O(n^{\gamma})$ there exist
  arbitrarily large graphs~$\bG$ such that~$\bG \leq_{k(n)} \bK_3$
  and~$\bG \not\to \bK_{q(n)}$, where~$n$ is the number of vertices
  of~$\bG$.
\end{enumerate}
\end{theorem}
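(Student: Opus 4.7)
My plan is to generalize the proof of Theorem~\ref{th:ubapproxcoloring} using two width-based observations about any $k$-strategy on $\bG$ and $\bK_3$. First, for every vertex $v$, fixing $v$'s color in the restriction of the strategy to $N(v)\cup\{v\}$ yields a $(k-1)$-strategy on $\bG|_{N(v)}$ with respect to $\bK_2$, and since $\bK_2$ has width $3$ (Example~\ref{ex:majority}), for $k\geq 4$ this implies $\bG|_{N(v)}\to \bK_2$, regardless of the size of $N(v)$. Second, for any edge $(u,v)$ and any $X\subseteq N(u)\cap N(v)$ with $|X|+2\leq k$, any $3$-coloring of $X\cup\{u,v\}$ extracted from the $k$-strategy must send $u$ and $v$ to distinct colors and all of $X$ to the remaining color, certifying $X$ as a monochromatic independent set. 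The algorithm then peels vertices in two kinds of steps: in a Wigderson-type step, one picks a vertex of degree $\Theta(n^{1-\epsilon})$ and, using the first observation, peels a large independent set from its bipartite neighborhood with one new color; in a Blum-type step, one picks an edge with common neighborhood of size $\Theta(n^{1-2\epsilon})$ (within the available width) and, using the second observation, peels the common neighborhood with one new color. Balancing the two thresholds and handling the base case where neither step is available, as in Blum's polynomial-time $O(n^{3/8})$-coloring algorithm~\cite{Blum1994}, yields a total of $O(n^\epsilon)$ colors in width $k(n)=O(n^{1-2\epsilon})$. The main technical obstacle is the base case, where one must extract from $k$-consistency enough structural information to continue making sub-Wigderson progress; this is the step that decisively uses $k$ larger than $4$.

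\textbf{Part 2 (lower bound).} This is a direct application of the probabilistic construction in the proof of Theorem~\ref{the:main}, specialized to the digraph case of Section~\ref{sec:digraphs} with $\bS=\bK_3$ and $\bT=\bK_{q(n)}$, adapted so that $q=q(n)$ grows with $n$. The key parameter choices are: edge probability $d/n$ with $d=\Theta(q(n)\log q(n))=\Theta(n^\epsilon\log n)$, so that by Lemma~\ref{le:nohomomorphic} the quantity $q(n)^n\exp(-dn/(4q(n)))$ tends to $0$ and the random graph fails to be $q(n)$-colorable with probability above $1/2$; and sparsity parameters $\beta$ slightly below $(1+\delta)/(r-1)$ together with $\alpha=\Theta(\beta/d)=\Theta(n^{-\epsilon}/\log n)$, verifying Conditions~C1'--C3 and keeping the sparsity term $p_2$ of Section~\ref{sec:stage} bounded. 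The argument of Section~\ref{sec:digraphs} then yields a width guarantee of the form $k\leq \Theta(\alpha n)=\Theta(n^{1-\epsilon}/\log n)$. Since $\gamma<1-3\epsilon<1-\epsilon$, any $k(n)=O(n^\gamma)$ is eventually at most this guaranteed width, so the construction produces, for arbitrarily large $n$, graphs $\bG$ on $n$ vertices with $\bG\leq_{k(n)}\bK_3$ and $\bG\not\to\bK_{q(n)}$, as required. The main obstacle here is the careful reverification that the feasibility conditions in the proof of Theorem~\ref{the:main} remain simultaneously satisfiable when $q$, $d$, and $\alpha$ are all allowed to scale with $n$; this mirrors the setting of parameters there, now with the scaling $q=q(n)=\Theta(n^\epsilon)$.
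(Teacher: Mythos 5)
Part 1 is where your proposal genuinely diverges from the paper, and as sketched it has a gap that you yourself flag but do not close. Your two observations are sound (the first is exactly the step inside Theorem~\ref{th:ubapproxcoloring}, and the second in fact needs only width~$4$, applied to pairs of common neighbors), but the plan breaks down when neither a vertex of degree~$\Omega(n^{1-\epsilon})$ nor an edge with common neighborhood of size~$\Omega(n^{1-2\epsilon})$ is available; appealing to Blum does not fill this in, since Blum's additional progress notions (same-color detection, dense-region arguments) are tailored to the polynomial-time~$\tilde{O}(n^{3/8})$ regime and do not yield, for every~$\epsilon\in(0,1/2)$, a peeling step of~$\Omega(m^{1-\epsilon})$ vertices per~$O(1)$ colors extractable from~$k$-consistency. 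Moreover, even when your Blum-type step is available, it peels only~$\Theta(n^{1-2\epsilon})$ vertices per fresh color, so repeating it can cost up to~$\Theta(n^{2\epsilon})$ colors and blow the~$O(n^\epsilon)$ budget; the color accounting is not balanced as claimed. The paper's proof avoids the vertex/edge dichotomy altogether: with~$k(n)=\Theta(n^{1-2\epsilon})$ it considers arbitrary sets~$S$ of size~$k(n)-3$, and either some~$S\cup N(S)$ has size~$>Cm^{1-\epsilon}$, in which case the strategy~$3$-colors~$S$ and the residual list-coloring of~$N(S)$ (lists of size at most~$2$) is a width-$3$-solvable CSP for which the restriction of~$\script{H}$ is a~$3$-strategy, or all such closed neighborhoods are small, in which case a maximal disjoint family has at least~$m^\epsilon/C$ members and the union of its sets~$S$ is a~$3$-colorable set of size at least~$Cm^{1-\epsilon}$. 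Every round thus removes~$\Omega(m^{1-\epsilon})$ vertices at the cost of three colors, which is what closes the recursion; you would need a device of this kind to handle your base case.

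Part 2 follows the paper's route (the probabilistic construction of Theorem~\ref{the:main} in its digraph form with~$q=q(n)$), but your parameter setting is wrong in a way that matters. You set~$\alpha=\Theta(\beta/d)=\Theta(n^{-\epsilon}/\log n)$, which satisfies Condition~C3 but not Condition~C7: for~$r=2$ and~$\beta\approx 1+\delta$, the summand of~$p_2$ at~$v\approx\alpha n$ has base~$\alpha^{\delta}d^{1+\delta}\cdot\mathrm{const}=\Theta(d)\to\infty$, so~$p_2$ is unbounded and the first-moment argument collapses. The paper is forced to take~$\alpha(n)=\Theta\bigl(d(n)^{-(1+\delta)/\delta}\bigr)$, and then the width obtainable from Conditions~C4' and~C5 is~$\Theta(\alpha n)=n^{1-(1+\delta)\epsilon/\delta}(\log n)^{-\Theta(1)}$; since Condition~C1' (needed for Lemma~\ref{le:lotofboundaries2}) forces~$\delta<1/2$, this exponent can only approach~$1-3\epsilon$, and the hypothesis~$\gamma<1-3\epsilon$ enters precisely through the choice~$\delta\in(\delta_0,1/2)$ with~$\delta_0=\epsilon/(1-\epsilon-\gamma)$. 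Your computation, were it correct, would give the lower bound for every~$\gamma<1-\epsilon$, contradicting Part~1's upper bound of~$O(n^{1-2\epsilon})$ (a~$k'$-strategy restricts to a~$k$-strategy for~$k\leq k'$); this is the clearest sign that your choice of~$\alpha$ cannot stand, and it is also exactly why the theorem, and the paper's open problem, stop at~$1-3\epsilon$.
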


\begin{proof}[Proof of~1] 
  Choose~$k(n) := \max\{3+\lceil{C^2 n^{1-2\epsilon}}\rceil,n_0\}$ for
  sufficiently large integers~$C$ and~$n_0$ to be determined later;
  their choice will depend on the constants that are implicit in
  the assumption that~$q(n) = \Theta(n^\epsilon)$. Note that~$k(n)$ is
  indeed~$O(n^{1-2\epsilon})$. Fix any integer~$n$ and let~$\bG$ be a
  graph with vertex-set~$V$ and edge-set~$E$ with~$|V|=n$. Assume
  that~$\bG \leq_{k(n)} \clique_3$ and let~$\script{H}$ be
  a~$k(n)$-strategy on~$\bG$ and~$\clique_3$.  We shall prove that for
  every~$Y\subseteq V$ there exists a set~$X\subseteq Y$ such
  that~$\bG|_X$ is~$3$-colorable
  and~$|X|\geq \min\{m,Cm^{1-\epsilon}\}$ where~$m=|Y|$.
  
  Let~$Y\subseteq V$. For every~$S\subseteq Y$ we use~$N(S)$ to denote
  the neighbourgood of~$S$ in the subgraph~$\bG|_Y$ induced by~$Y$
  in~$\bG$, i.e., the set of vertices in~$Y$ that are adjacent
  in~$\bG|_Y$ to some vertex in~$S$.  If~$|Y| \leq k(n)$, then, by the
  extension property up to~$k(n)$, the strategy~$\script{H}$ contains
  a mapping~$h$ with domain~$Y$. By definition,~$h$ is a partial
  homomorphism from~$\bG$ to~$\clique_3$, and hence a
  proper~$3$-coloring of~$\bG|_Y$. Thus, we can just set~$X=Y$ and the
  claim is proved
  since~$|X| = |Y| = m \geq \min\{m,Cm^{1-\epsilon}\}$.  Assume then
  that~$|Y| > k(n)$. Now, consider two cases: (a) there exists a
  subset~$S\subseteq Y$ with~$|S|=k(n)-3$ such
  that~$|S\cup N(S)|>Cm^{1-\epsilon}$, and (b) such a set does not
  exist.

  \emph{Case (a)}. We shall prove that in this case~$\bG|_X$
  is~$3$-colorable for~$X = S \cup N(S)$, which proves the claim
  since~$|X| > Cm^{1-\epsilon}$. First note that, by the extension
  property up to~$k(n)$, the strategy~$\script{H}$ contains a
  mapping~$h:S\rightarrow [3]$ with domain~$S$; this follows from the
  fact that~$|S|=k(n)-3 \leq k(n)$. Furthermore,~$h$ is a partial
  homomorphism, so it is a proper~$3$-coloring of~$\bG|_S$. We shall
  prove that~$h$ can be extended to a proper~$3$-coloring of~$\bG|_X$.
  For every node~$v\in N(S)$, let~$L_v\subseteq [3]$ be the set of
  {\em remaining} colors for~$v$, i.e., the
  set~$[3] \setminus\{h(u) \mid u \in S \cap N(\{v\})\}$. Note
  that~$|L_v|\leq 2$ for each~$v \in N(S)$. We want to show that there
  exists a {\em list} coloring of~$\bG|_{N(s)}$, i.e., a coloring~$g$
  of~$\bG|_{N(s)}$ such that~$g(v)\in L_v$ for every~$v\in
  N(S)$. Consider the CSP instance~$(\bA,\bB)$ where~$A=N(S)$
  and~$B=[3]$, whose signature~$\sigma$ contains a relation~$R_{u,v}$
  for every pair~$u,v$ of different elements in~$N(S)$. In particular,
  for every~$u,v \in N(S)$ with~$u \not=v$ we
  have~$R^{\bA}_{u,v}=\{(u,v)\}$
  and~$R^{\bB}_{u,v}=[3]^2\cap (L_u\times L_v)$. It follows
  immediately from the definition of~$\bA$ and~$\bB$ that for every
  mapping~$g:A\rightarrow B$, it holds that~$g$ is a homomomorphism
  from~$\bA$ to~$\bB$ if and only if~$g$ is a list-coloring
  of~$\bG|_{N(S)}$.  Since $|L_v|\leq 2$ for all~$v\in A$ then
  $\Pol(\bB)$ contains the function $\varphi:B^3\rightarrow B$ defined
  as
$$
\varphi(x,y,z)=\left\{
\begin{array}{ll}
x & \text{if } x=y \\
z & \text{otherwise}
\end{array}\right.
$$
This is a majority operation (see Example~\ref{ex:majority}) and,
therefore, it follows that~$\bB$ has width~$3$. Consider the
set~$\script{H'}$ of partial maps from~$\bA$ to~$\bB$ that
contains~$f|_{X\setminus S}$ for every~$f \in \script{H}$ with
domain~$S$ and every set~$X\subseteq V$ such that~$S\subseteq X$
and~$|X|\leq k(n)$. It is easy to see that~$\mathcal H'$ is
a~$3$-strategy on~$\bA$ and~$\bB$, hence~$\bA \leq_3 \bB$. Since~$\bB$
has width~$3$ it follows that~$\bA \to \bB$ and we are done.

\emph{Case (b)}. Let~$\script{S}$ be any maximal collection of subsets
of~$Y$ of cardinality~$k(n)-3$ such
that~$(S\cup N(S))\cap (S'\cup N(S'))=\emptyset$ for any two
distinct~$S,S'\in \script{S}$. Note that, as in Case (a), we have
that~$\bG|_S$ has a proper~$3$-coloring for every~$S\in
\script{S}$. Then, if we let~$X=\bigcup_{S\in \script{S}} S$,
then~$\bG|_X$ also has a proper~$3$-coloring.
Since~$|S\cup N(S)|\leq Cm^{1-\epsilon}$ for every~$S\in \script{S}$
and~$|Y|=m$, we have~$|\script{S}|\geq m^{\epsilon}/C$. It then
follows that~$X$ has cardinality at
least~$(m^{\epsilon}/C)\cdot (k(n)-3)\geq Cm^{1-\epsilon}$, by the
choice of~$k(n)$ and the fact that~$n \geq m$. This finishes the proof
of the claim.

The rest of the proof is fairly standard. The following recursive
algorithm produces a valid coloring for~$\bG|_Y$ for
any~$Y\subseteq V$. Let~$m := |Y|$.  If~$m \leq Cm^{1-\epsilon}$, then
color~$\bG|_Y$ with three colors, which is possible since in this
case~$\min\{m,Cm^{1-\epsilon}\} = m = |Y|$. Else, select
an~$X \subseteq Y$ with~$|X| \geq Cm^{1-\epsilon}$ such that~$G|_X$
has a proper~$3$-coloring~$g$, which is again possible since in this
case~$\min\{m,Cm^{1-\epsilon}\} = Cm^{1-\epsilon}$.  Recursively,
obtain a proper coloring~$h$ of~$\bG|_{Y\setminus X}$. Renaming colors
if necessary we can assume that~$g$ and~$h$ do not use any color in
common. Return~$g\cup h$. Note that~$g\cup h$ uses at most~$Q(m)$
different colors where~$Q(m)$ is the solution to the following
recurrence:
\begin{equation*}
\begin{array}{lll}
Q(m)=3+Q(m-\lceil Cm^{1-\epsilon}\rceil) & & \text{if } m > Cm^{1-\epsilon} \\
Q(m)=3 & & \text{otherwise}
\end{array}
\end{equation*}

We shall show that the bound~$Q(n) \leq q(n)$ holds for
every~$n\geq n_0$ for sufficiently large~$n_0$. The statement will
follow since the choice of~$k(n)$ guarantees~$k(n) \geq n_0$ and,
therefore, any graph~$\bG$ with~$n < n_0$ many vertices that
satisfies~$\bG \leq_{k(n)} \bK_3$ is even 3-colorable. Recall
that~$q(n) \geq 3$ holds by assumption.

To prove our claim, we first note that, whenever~$m \geq n/2$, we
have~$\lceil Cm^{1-\epsilon}\rceil\geq C(n/2)^{1-\epsilon}$, and
therefore it takes at
most~$\lceil{(n/2)/(C(n/2)^{1-\epsilon})}\rceil =
\lceil{(n/2)^{\epsilon}/C}\rceil$ many iterations of the recurrence to
get from~$Q(n)$ down to~$Q(\lfloor{n/2}\rfloor)$. It follows that
\begin{equation}
  Q(n)\leq 3 \lceil{(n/2)^{\epsilon}/C}\rceil + Q(\lfloor n/2\rfloor)
  \leq 3 (n/2)^{\epsilon}/C + 3 +  Q(\lfloor n/2\rfloor). \label{eqn:iteite}
\end{equation}
Iterating this recurrence we get
\begin{equation}
Q(n) \leq (3/C) n^\epsilon \textstyle{\sum_{i \geq 1} (1/2^{\epsilon})^{i}} + 3\log_2(n) 
\leq  (9/C) n^{\epsilon} + 3\log_2(n) \leq q(n),
\end{equation}
where the second inequality follows from the
identity~$\sum_{i \geq 1} r^i = r/(1-r)$ and the fact
that~$1/2^\epsilon \leq 1/\sqrt{2}$ holds since~$\epsilon \leq
1/2$. The third inequality follows from the assumption
that~$q(n) = \Theta(n^\epsilon)$ and~$n \geq n_0$, provided the
constants~$C$ and~$n_0$ are chosen large enough.
\end{proof}

\begin{proof}[Proof of 2.]
  Fix~$\gamma < 1-3\epsilon$ and~$k(n) = O(n^{\gamma})$. We analyze
  the probabilistic construction in the proof of
  Theorem~\ref{the:main} when~$\bS = \bK_p$ and~$\bT = \bK_q$
  for~$p = 3$ and~$q = q(n) = \Theta(n^\epsilon)$. Note that~$q$ is
  now also a function of~$n$ and that the argument presented in
  Section~\ref{sec:proof} allows this generality. Since~$r = 2$, we
  use the version of the proof in
  Section~\ref{sec:digraphs}. Furthermore, as in the proof of
  Theorem~\ref{thm:charactcoloring}, the left template~$\bS = \bK_p$
  satisfies Assumption~A1, and it is obvious that the right
  template~$\bT = \bK_q$ satisfies Assumption~A2. Thus, we just need
  to set the parameters in the proof.

  The given data is~$r,p,q,k$,
  where~$q = q(n)$ and~$k = k(n)$ are functions of~$n$, and we need to
  produce, for every large enough integer~$n$, a choice of the real
  parameters~$\delta,\beta,\alpha,c,d$, that may or may not be
  functions of~$n$, in such a way that
  Conditions~C1',C2,C3,C4',C5,C6,C7 hold, where Conditions~C1' and~C4'
  are stated in Section~\ref{sec:digraphs}, and
  Conditions~C2,C3,C5,C6,C7 are stated in
  Section~\ref{sec:stage}. Once we achieve this, Lemma~\ref{le:largev}
  will provide a graph~$\bG$ with~$n$ vertices that
  is~$(\alpha,\beta)$-sparse such that~$\bG \not\rightarrow \bK_q$,
  holds. By Lemma~\ref{lem:consistent} (derived as in
  Section~\ref{sec:digraphs}), this~$\bG$ will also
  satisfy~$\bG \leq_{k} \bK_p$, and since this will succeed for any
  large enough~$n$, the statement will be proved.

  Set~$\delta_0 := \epsilon/(1-\epsilon-\gamma)$ and note that the
  assumption~$\gamma < 1-3\epsilon$ implies~$\delta_0 <
  1/2$. Set~$\delta$ to be any positive real in the
  interval~$(\delta_0,1/2)$. The upper bound~$\delta < 1/2$ means that
  Condition~C1' holds. For later use, we note that the lower
  bound~$\delta_0 < \delta$ implies
  \begin{equation}
  \gamma < 1-(1+\delta)\epsilon/\delta. \label{eqn:deltachoice}
\end{equation}
Set~$\beta := 1+\delta$, so Condition~C2 is satisfied.
Set~$d = d(n) := 5q(n)\ln(q(n))$, so Condition~C6 is satisfied for all
large enough~$n$.
Set~$\alpha = \alpha(n) := (C/d(n))^{(1+\delta)/\delta}$
where~$C := (1+\delta) 4^{\delta/(1+\delta)}
e^{(-4-3\delta)/(1+\delta)}$. Observe that~$d(n)$ is an increasing
function of~$n$, while~$C,\beta,r$ are constants independent
of~$n$. In particular the rate at which~$\alpha(n)$ approaches~$0$ is
that of~$(1/d(n))^{(1+\delta)/\delta}$, and the rate at which the
right-hand side in Condition~C3 approaches~$0$ is that
of~$(1/d(n))^{1/(r-1)}$.  Since~$1/(r-1) = 1 < (1+\delta)/\delta$,
this means that Condition~C3 holds for all large enough~$n$. Finally,
set~$c = c(n) := (p/\delta')k(n)$
for~$\delta' := (1-2\delta)/(6(1+\delta))$, so Condition~C4' is
satisfied for all~$n$. We need to argue that Conditions~C5 and~C7 hold
for all large enough~$n$

To argue that Condition~C5 holds we need to show
that~$n \geq \max\{c(n)/(\alpha(n)\beta),q(n)\}$ for all large
enough~$n$.  Clearly~$n \geq q(n)$ for all large enough~$n$
since~$q(n) = \Theta(n^\epsilon)$ and~$\epsilon < 1/2$. Thus, it
suffices to show that~$n \geq c(n)/(\alpha(n)\beta)$ or, equivalently,
that~$k(n) \leq f(n) := (\beta\delta'/p)\alpha(n)n$, for all large
enough~$n$. First, recall that the
parameters~$\epsilon,p,\beta,C,\delta'$ are constants independent
of~$n$. Therefore, recalling
that~$d(n) = 5q(n)\ln(q(n)) = \Theta(\epsilon n^\epsilon \ln(n))$, the
growth rate of~$f(n)$ is that
of~$n^{1-(1+\delta)\epsilon/\delta} \ln(n)^{-(1+\delta)/\delta}$.  On
the other hand, the growth rate of~$k(n)$ is bounded above by that
of~$n^{\gamma}$. Now, the choice of~$\delta$
guarantees~\eqref{eqn:deltachoice} and, therefore, we
have~$k(n) = o(f(n))$. It follows that, for all large enough~$n$, it
holds that~$k(n) \leq f(n)$, which means that Condition~C5 holds.

  To argue that Condition~C7 holds, observe that the choice of~$d(n)$
  ensures that~$p_1(n) \rightarrow 0$ as~$n \rightarrow \infty$, and
  the choice of~$\alpha(n)$ (and the constant~$C$) ensures
  that~$p_2(n)$ is bounded by~$\sum_{v \geq 1} (1/4)^v = 1/3$ for
  all~$n$. Both facts together imply that~$p_1(n) + p_2(n) < 1$ and
  Condition~C7 holds for all large enough~$n$.
\end{proof}

Recall from Section~\ref{sec:consistency} that there is an algorithm
that, given a graph~$\bG$ and an integer~$k$, decides
whether~$\bG \leq_k \bK_3$ in time polynomial in~$n^k$, where~$n$ is
the number of vertices of~$\bG$ and, if so, returns a
strategy~$\script{H}$. The proof of the width upper bound in
Theorem~\ref{thm:genericbounds} gives the following:

  \begin{theorem} \label{thm:algorithm} Fix a
    real~$\epsilon \in (0,1/2)$ and an integer function~$q(n)$ such
    that~$q(n) \geq 3$ holds for all integers~$n \geq 1$,
    and~$q(n) = \Theta(n^\epsilon)$. Then, there is an algorithm that
    finds a proper~$q(n)$-coloring of any given~$3$-colorable graph
    with~$n$ vertices in~$2^{\Theta(n^{1-2\epsilon}\log(n))}$-time.
\end{theorem}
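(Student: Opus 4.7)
The plan is to combine the width upper bound from Theorem~\ref{thm:genericbounds}(1) with the standard running time of the consistency algorithm, and then to make the existence proof of the coloring constructive. Concretely, fix $k(n) = O(n^{1-2\epsilon})$ as supplied by Theorem~\ref{thm:genericbounds}(1), so that any graph $\bG$ with $n$ vertices satisfying $\bG \leq_{k(n)} \bK_3$ admits a homomorphism to $\bK_{q(n)}$. On input a $3$-colorable graph $\bG$, first run the consistency algorithm from Section~\ref{sec:consistency} to compute a $k(n)$-strategy $\script{H}$ on $\bG$ and $\bK_3$; such a strategy is guaranteed to exist because $\bG \to \bK_3$ implies $\bG \leq_{k(n)} \bK_3$. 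This first step takes time polynomial in $(n+3)^{k(n)} = 2^{\Theta(n^{1-2\epsilon} \log n)}$, already matching the target bound.

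Second, follow the recursive scheme at the end of the proof of Theorem~\ref{thm:genericbounds}(1) to actually construct the $q(n)$-coloring. For a working set $Y \subseteq V(\bG)$ with $m := |Y|$, if $m$ is below the constant threshold $C^{1/\epsilon}$ then read a proper $3$-coloring of $\bG|_Y$ directly from $\script{H}$ by iterating the extension property. Otherwise produce explicitly a subset $X \subseteq Y$ with $|X| \geq C m^{1-\epsilon}$ on which $\bG|_X$ is $3$-colorable, by turning the case analysis from that proof into an algorithm. Enumerate the $\binom{m}{k(n)-3}$ subsets $S$ of size $k(n){-}3$ to look for a witness of Case~(a), i.e.\ one with $|S \cup N(S)| > C m^{1-\epsilon}$; for such an $S$, color $\bG|_S$ from $\script{H}$ and extend the coloring to $N(S)$ by solving the induced $2$-list-coloring problem via $2$-SAT in polynomial time. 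If no such $S$ exists, we are in Case~(b): a greedy polynomial-time loop produces a maximal packing $\script{S}$ of $(k(n){-}3)$-sets whose closed neighborhoods are pairwise disjoint, and each member of $\script{S}$ is $3$-colored independently from $\script{H}$. In either subcase $\bG|_X$ is properly $3$-colored; recurse on $\bG|_{Y \setminus X}$ and merge the outputs using disjoint fresh palettes, exactly as in the existential argument.

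To finish, account for the time. The recursion depth is $O(n^{\epsilon} \log n)$, by the same telescoping analysis that yields the bound~\eqref{eqn:iteite} in the proof of Theorem~\ref{thm:genericbounds}(1), and the number of colors used is therefore at most $Q(n) \leq q(n)$ by the same recurrence. Each recursive level performs at most $\binom{n}{k(n)-3} \cdot \mathrm{poly}(n) = 2^{O(n^{1-2\epsilon} \log n)}$ work, dominated by the subset enumeration for Case~(a) and complemented by polynomial-time $2$-SAT solves and strategy look-ups. Adding the initial cost of computing $\script{H}$, the total running time is $2^{\Theta(n^{1-2\epsilon} \log n)}$, as claimed. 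The one spot that needs some care is verifying that the dichotomy between Case~(a) and Case~(b) in the proof of Theorem~\ref{thm:genericbounds}(1) can truly be realized constructively within this budget; brute-force enumeration over $(k(n){-}3)$-subsets sits comfortably inside $2^{O(n^{1-2\epsilon} \log n)}$, and the reduction of the list-coloring step to $2$-SAT relies on the observation that each list $L_v$ has size at most two, so no obstacle arises.
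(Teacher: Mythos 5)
Your proposal is correct and takes essentially the same route as the paper's own proof: compute a $k(n)$-strategy with the consistency algorithm in time polynomial in $n^{k(n)}$, then make the recursion from the proof of Theorem~\ref{thm:genericbounds}(1) constructive by enumerating the $(k(n)-3)$-subsets to decide between Cases (a) and (b), keeping each of the at most polynomially many recursive levels within $2^{O(n^{1-2\epsilon}\log n)}$ time. The only (harmless) deviation is that you solve the size-at-most-two list-coloring extension step via $2$-SAT, whereas the paper solves the auxiliary CSP instance $(\bA,\bB)$ using its width-$3$ solvability together with the standard decision-to-search reduction; both yield the same polynomial-time subroutine.
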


\begin{proof}
  We analyse the recursive algorithm given in the proof of the first
  part of Theorem~\ref{thm:genericbounds}.  The algorithm starts by
  computing a strategy~$\script{H}$ that
  witnesses~$\bG \leq_{k(n)} \bK_3$; such a strategy exists because,
  indeed, the assumption is that~$\bG \to \bK_3$. The runtime of this
  step is polynomial in~$n^{k(n)}$. Once~$\script{H}$ is computed, the
  algorithm proceeds recursively as described in the proof of
  Theorem~\ref{thm:genericbounds} starting at~$Y = V$, where~$V$ is
  the set of all vertices of~$\bG$.  To find the required
  set~$X \subseteq Y$ with~$|X| \geq \min\{m,Cm^{1-\epsilon}\}$ for
  the~$Y$ of cardinality~$m$ in the current recursive call, we first
  need to tell whether~$Y$ falls in Case (a) or in Case (b). For this,
  it suffices to loop through all subsets~$S \subseteq Y$
  with~$|S| = k(n)-3$, and compute~$|S \cup N(S)|$. The number of such
  sets is bounded by~$n^{k(n)-3}$ and hence can be looped in time
  polynomial in~$n^{k(n)}$.  In Case (a), we color~$\bG|_X$
  for~$X = S \cup N(S)$ with~$3$ colors as follows: first find
  an~$h \in \script{H}$ with~$\dom(h) = S$, and then extend~$h$ to a
  proper 3-coloring of~$\bG|_X$ by solving the CSP
  instance~$(\bA,\bB)$.  We are using here the well-known fact (see
  \cite{Cohen2004}) that for CSPs an polynomial-time algorithm for the
  decision variant yields immediately and polynomial-time algorithm
  for the search version.  In Case (b), we greedily find a maximally
  disjoint family~$\script{S}$ of sets of the form~$S \cup N(S)$
  with~$S \subseteq Y$, and color~$\bG|_X$
  for~$X = \bigcup_{S \in \script{S}}$ with three colors
  as~$h = \bigcup_{S \in \script{S}} h_S$, where~$h_S \in \script{H}$
  with~$\dom(h_S) = S$ is a suitably found proper 3-coloring
  of~$\bG|_S$ for each~$S \in \script{S}$.  Each recursive call
  shrinks the size of the calling set~$Y$ from~$m$
  to~$m-\lceil{Cm^{1-\epsilon}}\rceil$, which means that the algorithm
  ends after a linear in~$n$ number of recursive calls, each of which
  takes time polynomial in~$n^{k(n)}$.
  For~$k(n) = \Theta(n^{1-2\epsilon})$, this is time
  complexity~$2^{\Theta(n^{1-2\epsilon}\log(n))}$ overall, and the
  proof is complete.
\end{proof}

\subsection{Discussion and an open problem}

Some discussion on the runtime of the algorithm in
Theorem~\ref{thm:algorithm} is in order. On one hand, the simple
observation we made just before the statement of
Theorem~\ref{thm:algorithm} that width~$\lceil{n^{1-\epsilon}}\rceil$
suffices already gives a very simple algorithm that
properly~$\Theta(n^\epsilon)$-colors~$3$-colorable graphs with~$n$
vertices in subexponential~$2^{\Theta(n^{1-\epsilon})}$-time. The
algorithm of Theorem~\ref{thm:algorithm} is only slightly more
complicated and asymptotically beats this. On the other hand, using
more sophisticated techniques, it was shown in \cite{BansalCLNN19}
that, for any desired approximation factor~$f$, there is
a~$2^{\tilde\Theta(n/(f\log(f) + f\log(f)^2))}$-time randomized
algorithm that approximates the chromatic number of a graph with~$n$
vertices within a factor of~$f$.  For~$f = \Theta(n^\epsilon)$, this
gives a~$2^{\Theta(n^{1-\epsilon-o(1)})}$-time randomized algorithm
for the problem of~$\Theta(n^\epsilon)$-coloring~$3$-colorable graphs
with~$n$ vertices. Interestingly, the simple width-based algorithm
from Theorem~\ref{thm:algorithm} also beats this, and is deterministic
(but of course it applies only to our problem and not to the more
general problem of approximating the chromatic number).

Whether the runtime~$2^{\Theta(n^{1-2\epsilon}\log(n))}$ of
Theorem~\ref{thm:algorithm} can be beaten is an interesting question
left open by our work. Our width lower bound of~$n^{1-3\epsilon}$ has
as a consequence that~$2^{\Omega(n^{1-3\epsilon})}$ appears to be a
lower limit on the runtime of any width-based algorithm. The obstacle
to improving the width lower bound from~$n^{1-3\epsilon}$
to~$n^{1-2\epsilon}$ is Condition~C1', which is an improvement for the
special case of graphs over Condition~C1 of the general case. Ideally,
Condition C1' should be improved further to Condition~C1'' defined
as~$0 < \delta < 1$. We do not know if this is possible; we leave it
as an open problem.


\begin{thebibliography}{10}

\bibitem{AtseriasBD09}
Albert Atserias, Andrei~A. Bulatov, and Anuj Dawar.
\newblock {Affine systems of equations and counting infinitary logic}.
\newblock {\em Theor. Comput. Sci.}, 410(18):1666--1683, 2009.

\bibitem{AtseriasM13}
Albert Atserias and Elitza~N. Maneva.
\newblock {Sherali-Adams Relaxations and Indistinguishability in Counting
  Logics}.
\newblock {\em {SIAM} J. Comput.}, 42(1):112--137, 2013.

\bibitem{AtseriasO19}
Albert Atserias and Joanna Ochremiak.
\newblock {Proof Complexity Meets Algebra}.
\newblock {\em {ACM} Trans. Comput. Log.}, 20(1):1:1--1:46, 2019.

\bibitem{BansalCLNN19}
Nikhil Bansal, Parinya Chalermsook, Bundit Laekhanukit, Danupon Nanongkai, and
  Jesper Nederlof.
\newblock {New Tools and Connections for Exponential-Time Approximation}.
\newblock {\em Algorithmica}, 81(10):3993--4009, 2019.

\bibitem{Barto2016Collapse}
Libor Barto.
\newblock {The collapse of the bounded width hierarchy}.
\newblock {\em Journal of Logic and Computation}, 26(3):923--943, 2016.

\bibitem{Barto19}
Libor Barto.
\newblock {Promises Make Finite (Constraint Satisfaction) Problems Infinitary}.
\newblock In {\em 34th Annual {ACM/IEEE} Symposium on Logic in Computer
  Science, {LICS} 2019, Vancouver, BC, Canada, June 24-27, 2019}, pages 1--8.
  {IEEE}, 2019.

\bibitem{BartoBB21}
Libor Barto, Diego Battistelli, and Kevin~M. Berg.
\newblock {Symmetric Promise Constraint Satisfaction Problems: Beyond the
  Boolean Case}.
\newblock In Markus Bl{\"{a}}ser and Benjamin Monmege, editors, {\em 38th
  International Symposium on Theoretical Aspects of Computer Science, {STACS}
  2021, March 16-19, 2021, Saarbr{\"{u}}cken, Germany (Virtual Conference)},
  volume 187 of {\em LIPIcs}, pages 10:1--10:16. Schloss Dagstuhl -
  Leibniz-Zentrum f{\"{u}}r Informatik, 2021.

\bibitem{BBKO}
Libor Barto, Jakub Bul{\'{\i}}n, Andrei~A. Krokhin, and Jakub Oprsal.
\newblock {Algebraic approach to promise constraint satisfaction}.
\newblock {\em To appear at J.ACM (also CoRR abs/1811.00970)}, 2018.
\newblock Preliminary version at Proceedings of STOC 2019.

\bibitem{BartoK14}
Libor Barto and Marcin Kozik.
\newblock {Constraint Satisfaction Problems Solvable by Local Consistency
  Methods}.
\newblock {\em J. {ACM}}, 61(1):3:1--3:19, 2014.

\bibitem{BeameCMM05}
Paul Beame, Joseph~C. Culberson, David~G. Mitchell, and Cristopher Moore.
\newblock {The resolution complexity of random graph \emph{k}-colorability}.
\newblock {\em Discret. Appl. Math.}, 153(1-3):25--47, 2005.

\bibitem{Blum1994}
Avrim Blum.
\newblock {New Approximation Algorithms for Graph Coloring}.
\newblock {\em J. {ACM}}, 41(3):470--516, 1994.

\bibitem{BrakensiekG18}
Joshua Brakensiek and Venkatesan Guruswami.
\newblock {Promise Constraint Satisfaction: Algebraic Structure and a Symmetric
  Boolean Dichotomy}.
\newblock {\em Proceedings of the Twenty-Ninth Annual {ACM-SIAM} Symposium on
  Discrete Algorithms, {SODA} 2018 (Full version at CoRR abs/1704.01937)},
  2018.

\bibitem{BrakensiekG19}
Joshua Brakensiek and Venkatesan Guruswami.
\newblock {An Algorithmic Blend of LPs and Ring Equations for Promise CSPs}.
\newblock In Timothy~M. Chan, editor, {\em Proceedings of the Thirtieth Annual
  {ACM-SIAM} Symposium on Discrete Algorithms, {SODA} 2019, San Diego,
  California, USA, January 6-9, 2019}, pages 436--455. {SIAM}, 2019.

\bibitem{BrakensiekGS21}
Joshua Brakensiek, Venkatesan Guruswami, and Sai Sandeep.
\newblock {Conditional Dichotomy of Boolean Ordered Promise CSPs}.
\newblock In Nikhil Bansal, Emanuela Merelli, and James Worrell, editors, {\em
  48th International Colloquium on Automata, Languages, and Programming,
  {ICALP} 2021, July 12-16, 2021, Glasgow, Scotland (Virtual Conference)},
  volume 198 of {\em LIPIcs}, pages 37:1--37:15. Schloss Dagstuhl -
  Leibniz-Zentrum f{\"{u}}r Informatik, 2021.

\bibitem{BrakensiekGWZ20}
Joshua Brakensiek, Venkatesan Guruswami, Marcin Wrochna, and Stanislav
  Zivn{\'{y}}.
\newblock {The Power of the Combined Basic Linear Programming and Affine
  Relaxation for Promise Constraint Satisfaction Problems}.
\newblock {\em {SIAM} J. Comput.}, 49(6):1232--1248, 2020.

\bibitem{BrandtsWZ20}
Alex Brandts, Marcin Wrochna, and Stanislav Zivn{\'{y}}.
\newblock {The Complexity of Promise {SAT} on Non-Boolean Domains}.
\newblock In Artur Czumaj, Anuj Dawar, and Emanuela Merelli, editors, {\em 47th
  International Colloquium on Automata, Languages, and Programming, {ICALP}
  2020, July 8-11, 2020, Saarbr{\"{u}}cken, Germany (Virtual Conference)},
  volume 168 of {\em LIPIcs}, pages 17:1--17:13. Schloss Dagstuhl -
  Leibniz-Zentrum f{\"{u}}r Informatik, 2020.

\bibitem{BrandtsZ21}
Alex Brandts and Stanislav Zivn{\'{y}}.
\newblock {Beyond PCSP(1-in-3, {NAE)}}.
\newblock In Nikhil Bansal, Emanuela Merelli, and James Worrell, editors, {\em
  48th International Colloquium on Automata, Languages, and Programming,
  {ICALP} 2021, July 12-16, 2021, Glasgow, Scotland (Virtual Conference)},
  volume 198 of {\em LIPIcs}, pages 121:1--121:14. Schloss Dagstuhl -
  Leibniz-Zentrum f{\"{u}}r Informatik, 2021.

\bibitem{Bulatov2017}
Andrei~A. Bulatov.
\newblock {A Dichotomy Theorem for Nonuniform CSPs}.
\newblock In Chris Umans, editor, {\em 58th {IEEE} Annual Symposium on
  Foundations of Computer Science, {FOCS} 2017, Berkeley, CA, USA, October
  15-17, 2017}, pages 319--330. {IEEE} Computer Society, 2017.

\bibitem{BulatovJeavonsKrokhin2005}
Andrei~A. Bulatov, Peter Jeavons, and Andrei~A. Krokhin.
\newblock {Classifying the Complexity of Constraints Using Finite Algebras}.
\newblock {\em {SIAM} J. Comput.}, 34(3):720--742, 2005.

\bibitem{Butti2021b}
Silvia Butti and Victor Dalmau.
\newblock {Fractional Homomorphism, Weisfeiler-Leman invariance, and the
  Sherali-Adams hierarchy for the Constraint Satisfaction Problem}.
\newblock In {\em 46th International Symposium on Mathematical Foundations of
  Computer Science (MFCS'21), to appear}, 2021.

\bibitem{Butti2021}
Silvia Butti and Victor Dalmau.
\newblock {The Complexity of the Distributed Constraint Satisfaction Problem}.
\newblock In Markus Bl{\"{a}}ser and Benjamin Monmege, editors, {\em 38th
  International Symposium on Theoretical Aspects of Computer Science, {STACS}
  2021, March 16-19, 2021, Saarbr{\"{u}}cken, Germany (Virtual Conference)},
  volume 187 of {\em LIPIcs}, pages 20:1--20:18. Schloss Dagstuhl -
  Leibniz-Zentrum f{\"{u}}r Informatik, 2021.

\bibitem{ChanLRS13}
Siu~On Chan, James~R. Lee, Prasad Raghavendra, and David Steurer.
\newblock {Approximate Constraint Satisfaction Requires Large {LP}
  Relaxations}.
\newblock In {\em 54th Annual {IEEE} Symposium on Foundations of Computer
  Science, {FOCS} 2013, 26-29 October, 2013, Berkeley, CA, {USA}}, pages
  350--359. {IEEE} Computer Society, 2013.

\bibitem{ChvatalSzemeredi1988}
Vasek Chv{\'{a}}tal and Endre Szemer{\'{e}}di.
\newblock {Many Hard Examples for Resolution}.
\newblock {\em J. {ACM}}, 35(4):759--768, 1988.

\bibitem{Cohen2004}
David~A. Cohen.
\newblock {Tractable Decision for a Constraint Language Implies Tractable
  Search}.
\newblock {\em Constraints An Int. J.}, 9(3):219--229, 2004.

\bibitem{DalmauK13}
V{\'{\i}}ctor Dalmau and Andrei~A. Krokhin.
\newblock {Robust Satisfiability for CSPs: Hardness and Algorithmic Results}.
\newblock {\em {ACM} Trans. Comput. Theory}, 5(4):15:1--15:25, 2013.

\bibitem{DalmauKM18}
V{\'{\i}}ctor Dalmau, Andrei~A. Krokhin, and Rajsekar Manokaran.
\newblock {Towards a characterization of constant-factor approximable
  finite-valued {CSPs}}.
\newblock {\em J. Comput. Syst. Sci.}, 97:14--27, 2018.

\bibitem{DinurMR09}
Irit Dinur, Elchanan Mossel, and Oded Regev.
\newblock {Conditional Hardness for Approximate Coloring}.
\newblock {\em {SIAM} J. Comput.}, 39(3):843--873, 2009.

\bibitem{FederVardi1998}
Tom{\'{a}}s Feder and Moshe~Y. Vardi.
\newblock {The Computational Structure of Monotone Monadic {SNP} and Constraint
  Satisfaction: {A} Study through Datalog and Group Theory}.
\newblock {\em {SIAM} J. Comput.}, 28(1):57--104, 1998.

\bibitem{FicakKOS19}
Miron Ficak, Marcin Kozik, Miroslav Ols{\'{a}}k, and Szymon Stankiewicz.
\newblock {Dichotomy for Symmetric Boolean PCSPs}.
\newblock In Christel Baier, Ioannis Chatzigiannakis, Paola Flocchini, and
  Stefano Leonardi, editors, {\em 46th International Colloquium on Automata,
  Languages, and Programming, {ICALP} 2019, July 9-12, 2019, Patras, Greece},
  volume 132 of {\em LIPIcs}, pages 57:1--57:12. Schloss Dagstuhl -
  Leibniz-Zentrum f{\"{u}}r Informatik, 2019.

\bibitem{GeorgiouMT09}
Konstantinos Georgiou, Avner Magen, and Madhur Tulsiani.
\newblock {Optimal Sherali-Adams Gaps from Pairwise Independence}.
\newblock In Irit Dinur, Klaus Jansen, Joseph Naor, and Jos{\'{e}} D.~P. Rolim,
  editors, {\em Approximation, Randomization, and Combinatorial Optimization.
  Algorithms and Techniques, 12th International Workshop, {APPROX} 2009, and
  13th International Workshop, {RANDOM} 2009, Berkeley, CA, USA, August 21-23,
  2009. Proceedings}, volume 5687 of {\em Lecture Notes in Computer Science},
  pages 125--139. Springer, 2009.

\bibitem{GhoshT18}
Mrinalkanti Ghosh and Madhur Tulsiani.
\newblock {From Weak to Strong Linear Programming Gaps for All Constraint
  Satisfaction Problems}.
\newblock {\em Theory Comput.}, 14(1):1--33, 2018.

\bibitem{Huang2013}
Sangxia Huang.
\newblock {Improved Hardness of Approximating Chromatic Number}.
\newblock {\em CoRR}, abs/1301.5216, 2013.
\newblock Preliminary version in APPROX/RANDOM 2013.

\bibitem{JeavonsCG97}
Peter Jeavons, David~A. Cohen, and Marc Gyssens.
\newblock {Closure properties of constraints}.
\newblock {\em J. {ACM}}, 44(4):527--548, 1997.

\bibitem{Karp1972}
Richard~M. Karp.
\newblock {Reducibility Among Combinatorial Problems}.
\newblock In Raymond~E. Miller and James~W. Thatcher, editors, {\em Proceedings
  of a symposium on the Complexity of Computer Computations, held March 20-22,
  1972, at the {IBM} Thomas J. Watson Research Center, Yorktown Heights, New
  York, {USA}}, The {IBM} Research Symposia Series, pages 85--103. Plenum
  Press, New York, 1972.

\bibitem{KhannaLS00}
Sanjeev Khanna, Nathan Linial, and Shmuel Safra.
\newblock {On the Hardness of Approximating the Chromatic Number}.
\newblock {\em Comb.}, 20(3):393--415, 2000.

\bibitem{KolaitisVardi2000b}
Phokion~G. Kolaitis and Moshe~Y. Vardi.
\newblock {A Game-Theoretic Approach to Constraint Satisfaction}.
\newblock In Henry~A. Kautz and Bruce~W. Porter, editors, {\em Proceedings of
  the Seventeenth National Conference on Artificial Intelligence and Twelfth
  Conference on on Innovative Applications of Artificial Intelligence, July 30
  - August 3, 2000, Austin, Texas, {USA}}, pages 175--181. {AAAI} Press / The
  {MIT} Press, 2000.

\bibitem{KolaitisVardi2000a}
Phokion~G. Kolaitis and Moshe~Y. Vardi.
\newblock {Conjunctive-Query Containment and Constraint Satisfaction}.
\newblock {\em J. Comput. Syst. Sci.}, 61(2):302--332, 2000.

\bibitem{KolaitisV08}
Phokion~G. Kolaitis and Moshe~Y. Vardi.
\newblock {A Logical Approach to Constraint Satisfaction}.
\newblock In {\em Complexity of Constraints - An Overview of Current Research
  Themes [Result of a Dagstuhl Seminar]}, pages 125--155, 2008.

\bibitem{KrokhinO19}
Andrei~A. Krokhin and Jakub Oprsal.
\newblock {The Complexity of 3-Colouring H-Colourable Graphs}.
\newblock In David Zuckerman, editor, {\em 60th {IEEE} Annual Symposium on
  Foundations of Computer Science, {FOCS} 2019, Baltimore, Maryland, USA,
  November 9-12, 2019}, pages 1227--1239. {IEEE} Computer Society, 2019.

\bibitem{KumarMTV11}
Amit Kumar, Rajsekar Manokaran, Madhur Tulsiani, and Nisheeth~K. Vishnoi.
\newblock {On LP-Based Approximability for Strict CSPs}.
\newblock In Dana Randall, editor, {\em Proceedings of the Twenty-Second Annual
  {ACM-SIAM} Symposium on Discrete Algorithms, {SODA} 2011, San Francisco,
  California, USA, January 23-25, 2011}, pages 1560--1573. {SIAM}, 2011.

\bibitem{Kun2012}
Gabor Kun, Ryan O’Donnell, Suguru Tamaki, Yuichi Yoshida, and Yuan Zhou.
\newblock {Linear Programming, Width-1 {CSPs}, and Robust Satisfaction}.
\newblock In {\em Proceedings of the 3rd Innovations in Theoretical Computer
  Science Conference}, ITCS ’12, page 484–495, New York, NY, USA, 2012.
  Association for Computing Machinery.

\bibitem{LaroseZ06}
Beno{\^{\i}}t Larose and L{\'{a}}szl{\'{o}} Z{\'{a}}dori.
\newblock {Taylor Terms, Constraint Satisfaction and the Complexity of
  Polynomial Equations over Finite Algebras}.
\newblock {\em Int. J. Algebra Comput.}, 16(3):563--582, 2006.

\bibitem{LundY94}
Carsten Lund and Mihalis Yannakakis.
\newblock {On the Hardness of Approximating Minimization Problems}.
\newblock {\em J. {ACM}}, 41(5):960--981, 1994.

\bibitem{MolloySalavatipour2007}
Michael Molloy and Mohammad~R. Salavatipour.
\newblock {The Resolution Complexity of Random Constraint Satisfaction
  Problems}.
\newblock {\em SIAM J. Comput.}, 37(3):895--–922, 2007.

\bibitem{SheraliA90}
Hanif~D. Sherali and Warren~P. Adams.
\newblock {A Hierarchy of Relaxations Between the Continuous and Convex Hull
  Representations for Zero-One Programming Problems}.
\newblock {\em {SIAM} J. Discrete Math.}, 3(3):411--430, 1990.

\bibitem{ThapperZ17}
Johan Thapper and Stanislav Zivn{\'{y}}.
\newblock {The Power of Sherali-Adams Relaxations for General-Valued CSPs}.
\newblock {\em {SIAM} J. Comput.}, 46(4):1241--1279, 2017.

\bibitem{Wigderson1983}
Avi Wigderson.
\newblock {Improving the Performance Guarantee for Approximate Graph Coloring}.
\newblock {\em J. {ACM}}, 30(4):729--735, 1983.

\bibitem{WrochnaZ20}
Marcin Wrochna and Stanislav Zivn{\'{y}}.
\newblock {Improved hardness for \emph{H}-colourings of \emph{G}-colourable
  graphs}.
\newblock In Shuchi Chawla, editor, {\em Proceedings of the 2020 {ACM-SIAM}
  Symposium on Discrete Algorithms, {SODA} 2020, Salt Lake City, UT, USA,
  January 5-8, 2020}, pages 1426--1435. {SIAM}, 2020.

\bibitem{WrochnaZivny2020}
Marcin Wrochna and Stanislav Zivn{\'{y}}.
\newblock {Improved hardness for \emph{H}-colourings of \emph{G}-colourable
  graphs}.
\newblock In Shuchi Chawla, editor, {\em Proceedings of the 2020 {ACM-SIAM}
  Symposium on Discrete Algorithms, {SODA} 2020, Salt Lake City, UT, USA,
  January 5-8, 2020}, pages 1426--1435. {SIAM}, 2020.

\bibitem{YoshidaZ14}
Yuichi Yoshida and Yuan Zhou.
\newblock {Approximation schemes via Sherali-Adams hierarchy for dense
  constraint satisfaction problems and assignment problems}.
\newblock In Moni Naor, editor, {\em Innovations in Theoretical Computer
  Science, ITCS'14, Princeton, NJ, USA, January 12-14, 2014}, pages 423--438.
  {ACM}, 2014.

\bibitem{Zhuk2020}
Dmitriy Zhuk.
\newblock {A Proof of the {CSP} Dichotomy Conjecture}.
\newblock {\em J. {ACM}}, 67(5):30:1--30:78, 2020.

\end{thebibliography}
\end{document}